\documentclass[a4paper,12pt]{article} 
\usepackage{graphicx} 
\usepackage{caption} 

\textwidth 16.0cm \textheight 25.0cm
\evensidemargin -.05cm \oddsidemargin -.05cm
\topmargin -1.9cm \pagestyle{myheadings}

%
\usepackage{algorithm}
\usepackage{algorithmic}

%
\usepackage{newfloat}
\usepackage{listings}
\DeclareCaptionStyle{ruled}{labelfont=normalfont,labelsep=colon,strut=off} 
\lstset{%
	basicstyle={\footnotesize\ttfamily},
	numbers=left,numberstyle=\footnotesize,xleftmargin=2em,
	aboveskip=0pt,belowskip=0pt,%
	showstringspaces=false,tabsize=2,breaklines=true}
\floatstyle{ruled}
\newfloat{listing}{tb}{lst}{}
\floatname{listing}{Listing}

\setcounter{secnumdepth}{0} 

\title{Complexity of Reasoning with Cardinality Minimality Conditions}
\author{
		Nadia Creignou,\textsuperscript{\rm 1}
		Fr\'ed\'eric Olive,\textsuperscript{\rm 1}
		Johannes Schmidt\textsuperscript{\rm 2}
\\    
{\footnotesize
    \textsuperscript{\rm 1}
		Aix Marseille Univ, CNRS, LIS, Marseille, France
}
		\\
{\footnotesize
    \textsuperscript{\rm 2}
		J\"onk\"oping University, Department of Computer Science and Informatics, School of Engineering, Sweden
}
		\\
{\footnotesize
		\{nadia.creignou, frederic.olive\}@lis-lab.fr, johannes.schmidt@ju.se
}
}


\usepackage{xspace}
\usepackage{color} 
\usepackage{amsbsy, amscd, amssymb, latexsym} 
\usepackage{amsthm}  
\usepackage[cmex10]{amsmath}
\usepackage{mathtools} 
\usepackage{tabularx}

\newcommand{\cloneFont}[1]{\mathsf{#1}}

\newtheorem{theorem}{Theorem}
\newtheorem{lemma}[theorem]{Lemma}

\newtheorem{proposition}[theorem]{Proposition}
\newtheorem{example}{Example}
\newtheorem{definition}{Definition}

\newlength\problemlength
\settowidth{\problemlength}{\sl Parameter: }

\newcommand{\relevance}{{\sc Card-min}-{\sc Relevance}\xspace}
\newcommand{\outputname}[1]{
	{\ifthenelse{\equal{#1}{d}}{Question}{\ifthenelse{\equal{#1}{s}}{Output}{Output}}}
}
\newcommand\problem[4]{%
\begin{list}{}{
\labelwidth\problemlength\labelsep.7em\rightmargin1.0em
\leftmargin\problemlength\advance\leftmargin by1.7em
\parsep0ex\itemsep.2ex plus.1ex}
\item[#2 \hfill]
\item[\emph{Instance :} \hfill] #3
\item[\hspace{-.5ex}\emph{\outputname{#1}:} \hfill] #4
\end{list}
}
\newcommand\dproblem[3]{
	{\problem{d}{#1}{#2}{#3}}
}
\newcommand{\pap}{\mathcal{P}}
\newcommand{\tuple}[1]{\langle #1 \rangle}
\newcommand{\card}[1]{\left|#1\right|}
\newcommand{\s}{\mathcal{S}}
\newcommand{\overbar}[1]{\mkern 2.7mu\overline{\mkern-2.7mu#1\mkern-2.7mu}\mkern 2.7mu}
\newcommand{\IFF}{\textrm{ iff }}
\newcommand{\rxor}[1]{\rXOR{#1}}
\newcommand{\rxort}{\rXOR{3}}
\newcommand{\rnaet}{\rNAE{3}}
\newcommand{\cmskxort}{\CMS^*(\rxort)}
\newcommand{\cmsnaet}{\CMS(\rnaet)}
\newcommand{\DELTA}[1]{\Delta_{#1}^\mathrm{P}}
\newcommand{\reduc}{\leq_\text{m}^\text{P}}
\newcommand{\sat}{\textsc{Sat}\xspace}
\newcommand{\xor}{\ensuremath{\oplus}}

\newcommand{\Cardminsat}{\textsc{CardMinSat}\xspace}
\newcommand{\CMS}{\textsc{Cms}\xspace}
\newcommand{\THETA}[1]{\Theta_{#1}^\mathrm{P}}
\newcommand{\PP}{\mathrm{P}}
\newcommand{\NP}{\mathrm{NP}}
\newcommand{\var}{\operatorname{var}}
\newcommand{\T}{\protect\ensuremath{\mathrm{T}}}
\newcommand{\F}{\protect\ensuremath{\mathrm{F}}}
\newcommand{\Rddd}{\ensuremath{R_{\scriptscriptstyle 3\neq}^{\scriptscriptstyle 1/3}}}
\newcommand{\rOR}[1]{\ensuremath{\textsf{\small OR}^{#1}}}
\newcommand{\rXOR}[1]{\ensuremath{\textsf{\small XOR}^{#1}}}
\newcommand{\rNAE}[1]{\ensuremath{\textsf{\small NAE}^{#1}}}
\newcommand{\orn}[2]{\ensuremath{\mathrm{OR}^{#2}_{\scriptscriptstyle #1}}}
\newcommand{\nandn}[2]{\ensuremath{\mathrm{NAND}^{#2}_{\scriptscriptstyle #1}}}
\newcommand{\evenn}[2]{\ensuremath{\mathrm{EVEN}^{\scriptscriptstyle #1}_{\scriptscriptstyle #2}}}
\newcommand{\clos}[1]{\ensuremath{\langle#1\rangle}}

\newcommand{\closNex}[1]{\ensuremath{\langle#1\rangle_{\not\exists}}}
\newcommand{\closNN}[1]{\ensuremath{\langle#1\rangle_{\not\exists,\neq}}}

\newcommand{\cc}[3]{\ensuremath{\cloneFont{{#1}^{#2}_{#3}}}}
\newcommand{\cocl}{\protect\ensuremath{\cloneFont{C}}}
\newcommand{\daffine}{width-$2$ affine\xspace}

\begin{document}

\maketitle

\begin{abstract}
Many AI-related reasoning problems are based on the problem of satisfiability  of propositional formulas with some cardinality-minimality condition. While the complexity of the satisfiability problem (SAT) is well understood
when considering systematically all fragments of propositional logic within Schaefer’s framework (STOC 1978)
%
%
this is not the case when such minimality condition is added. We consider the \Cardminsat problem, which asks, given a formula $\phi$ and an atom $x$, whether $x$ is true in some cardinality-minimal model of $\phi$.  
We completely   classify  the computational complexity of the \Cardminsat problem within Schaefer's framework, thus paving the way for a better understanding of the tractability frontier of many AI-related reasoning problems. To this end we use advanced algebraic tools developed by (Schnoor\;\&\;Schnoor 2008) and (Lagerkvist 2014).
\end{abstract}

\section{Introduction}

In many AI-related reasoning problems some notion of minimality is involved. Typically in belief change, e.g. revision or update, one of the basic principles is the principle of minimal change. We want to revise/update  an agent's belief set  by some new information. To this end we retain only those models of new information that have minimal distance to  the models of the original agent's belief set. 
In the belief revision context distance between models is defined by the symmetric set difference of the atoms assigned to true in the compared models, and Dalal's operator \cite{Dalal88},  for instance, seeks to minimize the cardinality of this set. In abduction we search for an explanation (a set of literals) that is consistent with a given theory and which, together with this theory, logically entails all manifestations. It is natural  to be interested not in all explanations but only in the minimal ones. Different notions of minimality might be considered, in particular minimality w.r.t set inclusion or w.r.t. cardinality \cite{jacm/EiterG95}. 

\smallskip

In this paper, we focus on cardinality-minimality. With such a minimality condition the related reasoning tasks often give rise to $\THETA 2$-complete problems (the class $\THETA 2$ is located at the second level of the polynomial hierarchy: polynomial time  with only a logarithmic number of calls to the $\NP$-oracle). For instance, model checking and implication are  $\THETA 2$-complete for Dalal's revision operator \cite{ai/EiterG92, jcss/LiberatoreS01}. The relevance problem for abduction  with a cardinality-minimality condition, deciding whether a literal belongs to a cardinality-minimal explanation, is  $\THETA 2$-complete when dealing with Horn formulas \cite{jacm/EiterG95}. 

\smallskip

Propositional formulas play an important role in AI-reasoning problems. Since most relevant problems are \mbox{intractable} in full propositional logic, it is a natural question whether syntactic restrictions on the involved formulas can lead to tractable problems. Schaefer's framework offers an ideal framework to investigate this issue. It considers formulas in generalized conjunctive normal form and allows to systematically consider all fragments of propositional logic. Indeed, Schaefer's famous theorem \cite{stoc/Schaefer78} shows that the SAT problem becomes tractable under some syntactic restrictions such as Horn, dual Horn, Krom or affine formulas, and remains intractable in all other, nontrivial, cases. 
Since then Schaefer's approach has been taken on numerous problems, among others on circumscription, abduction, and argumentation problems \cite{Nordh04, ai/NordhZ08, CreignouE014}. Tools from universal algebra prove to be a valuable tool for such endeavors, in particular when the problem questions are stable under introduction of existentially quantified variables and equality constraints \cite{dagstuhl/CreignouV08}. Unfortunately, cardinality is not preserved under such introduction. Therefore, in this paper, we resort to advanced algebraic tools built around the concept of a \emph{weak base} \cite{scs08,Lagerkvist14}.

\smallskip

There is a prototypical satisfiability problem for the class $\THETA 2$, that could enlighten the complexity of many reasoning problems involving a cardinality-minimality condition: It is the \Cardminsat problem, 
which asks, given a formula $\phi$ and an atom $x$, whether $x$ is true in some cardinality-minimal model of $\phi$. It  provides a standard hard problem that can be useful to prove hardness results, especially in the context of knowledge representation and belief change. For instance, in \cite{CreignouPW18} the relevance problem for abduction  mentioned above was proved to be $\THETA 2$-complete for the combined Horn-Krom case. The $\THETA 2$-hardness reduction used in \cite{CreignouPW18} is much easier than the one previously obtained in \cite{jacm/EiterG95} for the Horn case, because it starts from the more closely related problem \Cardminsat, restricted to conjunctions of positive 2-clauses. Similarly, the model checking and implication problems associated with Dalal's operator were proved in~\cite{CreignouPW18} to be $\THETA 2$-complete for the combined Horn-Krom case by a reduction from the \Cardminsat problem. Also the model checking problem  associated with a syntactic  revision operator  for belief bases using a cardinality-maximality criterion was proved to be  $\THETA 2$-complete for the combined Horn-Krom case in this way \cite{CreignouKP17}.
 
\smallskip
 
Our main contribution is a complete complexity classification of the \Cardminsat problem in Schaefer's framework, which   opens the door for a better understanding of the complexity of many reasoning problems. 
As an illustration we prove that the above mentioned relevance problem for abduction remains  $\THETA 2$-complete when restricted to affine formulas (conjunctions of XOR-clauses).

\section{Preliminaries}\label{sec:preliminaries}

\paragraph{Propositional logic.} 

We assume familiarity with propositional logic. 
A \emph{literal} is a variable (or an atom) $x$ (positive literal) or its negation $\neg x$  (negative literal). 
A \emph{clause} is a disjunction of literals. For any integer $k\ge 1$,  a $k$-clause is a clause  containing at most $k$ literals.  An \emph{XOR-clause} is  a clause in which the usual connective ``or'' is replaced by the exclusive-or connective, denoted by $\oplus$.
A CNF-formula (resp., an XOR-CNF-formula) is a conjunction of clauses (resp., XOR-clauses), a $k$-CNF-formula is a conjunction of $k$-clauses.
For space economy we use occasionally the shorthands $\overline{x} := \neg x$ and $xy := x \wedge y$.
Given a formula $\phi$, we denote by $\var(\phi)$ the variables of $\phi$. 
 A mapping $\sigma\colon \var(\phi) \mapsto \{0,1\}$ is called an \emph{assignment} to the variables of $\phi$.
An assignment $\sigma$ satisfies a (XOR-)CNF-formula $\phi$ if $\sigma$ satisfies all (XOR-)clauses simultaneously. In this case $\sigma$ is called a \emph{model} of $\phi$. 
We call a variable $x \in \var(\phi)$ \emph{frozen} if $x$ is assigned the same value in all models of $\phi$.
The \emph{weight} or \emph{cardinality} of an assignment $\sigma$, denoted by $\vert\sigma\vert$,  is the number of variables $x$ such that $\sigma(x)=1$. A \emph{cardinality minimal} model of $\phi$ is a model of $\phi$ of minimum cardinality among all models of $\phi$.
For two formulas $\psi, \phi$ we write $\psi  \models \phi$ if every model of $\psi$ also satisfies $\phi$. The two formulas are equivalent, $\psi  \equiv \phi$, if they have the same set of variables and the same set of models.
 Observe that any XOR-clause is equivalent to a linear equation over the two-elements field, of the form $x_1\oplus \ldots \oplus x_n=a$ where $a\in \{0,1\}$.

\paragraph{Schaefer's framework.}

%

A \emph{Boolean relation} of arity $k\in\mathbb N$ is a relation $R\subseteq\{0,1\}^k$, and a \emph{constraint} $C$ is a formula $C = R(x_1,\dots,x_k)$, where $R$ is a $k$-ary Boolean relation, and $x_1,\dots,x_k$ are (not necessarily distinct) variables.
%
An assignment $\sigma$ \emph{satisfies $C$}, if $(\sigma(x_1),\dots,\sigma(x_k))\in R$.
A \emph{constraint language} 
$\Gamma$ is a finite set of Boolean relations, and a \emph{$\Gamma$-formula} is a conjunction of constraints using relations from $\Gamma$. Note that we do not consider infinite constraint languages in this paper.
Finally, a $\Gamma$-formula $\phi$ is \emph{satisfied} by an assignment $\sigma$, if $\sigma$ simultaneously satisfies all constraints in it.
In such a case $\sigma$ is also called a \emph{model of $\phi$}.
%
%
We say that a $k$-ary relation $R$ is \emph{defined  by} a formula $\phi$  if $\phi$ is a formula over $k$ distinct variables $x_1,\ldots,x_k$ and $\phi\equiv R(x_1,\ldots, x_k)$.

Moreover, we say that a Boolean relation  $R$ is: 
\begin{itemize}
\item \emph{Horn} (resp., \emph{dual-Horn}) if it is definable by a CNF-formula  $\phi$  that contains at most one positive (resp., negative) literal per each clause,
\item \emph{Krom } if it is definable by a $2$-CNF-formula,
\item \emph{affine} it is definable by an XOR-CNF formula, or equivalently by  a formula $\phi$ that is a conjunction of linear equations of the form $x_1\oplus \ldots \oplus x_n=a$, where $a\in \{0,1\}$,

\item \emph{width-2-affine} it is definable by an XOR-$2$-CNF formula, or equivalently by  a formula $\phi$ that is a conjunction of linear equations involving each at most two variables, that is either of the form $x_1=a$ or of the form $x_1\oplus  x_2=a$, where $a\in \{0,1\}$.
\item \emph{$1$-valid} (resp., \emph{$0$-valid}) if $(1,\ldots, 1)\in R$ (resp., $(0,\ldots, 0)\in R$).

\item \emph{complementive} if for every tuple $(t_1, \dots, t_k) \in R$ also 
$(1-t_1, \dots, 1-t_k) \in R$. 
\end{itemize}
Furthermore, we say a relation is \emph{Schaefer} if it is Horn, dual-Horn, Krom, or affine.
Finally, for a property $\cal{P}$ of a relation, we say that a constraint language $\Gamma$ is ${\cal P}$ if all relations in $\Gamma$ are ${\cal P}$.

\smallskip

\noindent
We define the unary relations $\T = \{1\}$, $\F = \{0\}$, and the 6-ary relation \\
$\Rddd=\{100011, 010101, 001110\}$.
We denote by $\rOR{k}$ the $k$-ary OR, by $\nandn{}{k}$ the $k$-ary NAND, and by $\rXOR{k}$ the $k$-ary XOR. The relation $\evenn{k}{}$ contains all $k$-ary tuples which contain an even number of 1's. The relation $\evenn{k}{k \neq}$ denotes the $2k$-ary relation defined by $\evenn{k}{}(x_1, \dots, x_k) \land (x_1 \neq x_{k+1}) \land \dots \land (x_k \neq x_{2k})$.

\bigskip

In the following definition we introduce different notions of closure for a constraint language. 

\begin{definition}
	\begin{enumerate}
	\item The set $\clos{\Gamma}$ is the smallest set of relations that contains $\Gamma$, the equality constraint, $=$, and which is closed under primitive positive first order definitions, that is, if $\phi$ is a $\Gamma \cup \{=\}$-formula and $R(x_1, \dots, x_n) \equiv \exists y_1 \dots \exists y_l \phi(x_1, \dots, x_n,y_1, \dots, y_l)$, then $R \in \clos{\Gamma}$. In other words, $\clos{\Gamma}$ is the set of relations that can be expressed as a $\Gamma \cup \{=\}$-formula with existentially quantified variables.
	\item The set $\closNex{\Gamma}$ is the set of relations that can be expressed as a $\Gamma \cup \{=\}$-formula (no existentially quantified variables are allowed).
	\item The set $\closNN{\Gamma}$ is the set of relations that can be expressed as a $\Gamma$-formula  (neither the equality relation nor existentially quantified variables are allowed).
	\end{enumerate}
\end{definition}

%
\begin{example}\label{ex:example1}
Let $\Gamma = \{R\}$, $R(x_1, x_2) = (x_1 \rightarrow x_2)$, and $S(x_1,x_2) = (x_1 = x_2)$. We can express $S$ as $\Gamma$-formula via $S(x, y) \equiv R(x, y) \land R(y, x)$. Thus, $S \in \closNN{\Gamma}$.
\end{example}
The set $\clos{\Gamma}$ is called a \emph{relational clone} or a \emph{co-clone} with \emph{base} $\Gamma$ \cite{ipl/BohlerRSV05}. 
Notice that for a co-clone $\cocl$ 
and a constraint language  $\Gamma$ the statements $\Gamma \subseteq \cocl$, $\clos{\Gamma} \subseteq \cocl$, $\closNex{\Gamma} \subseteq \cocl$, and $\closNN{\Gamma} \subseteq \cocl$ are equivalent. 
Throughout the paper, we refer to different types of Boolean relations and corresponding co-clones following Schaefer's terminology \cite{stoc/Schaefer78}.
Table~\ref{table:weak_bases} gives a complete list of all finitely generated co-clones with minimal weak base from Lagerkvist \cite{Lagerkvist14}. For clause type descriptions and simpler bases (not minimal weak bases) we refer to \cite{ai/NordhZ08} and \cite{ipl/BohlerRSV05}, respectively.

A graph representation of the co-clone structure can be found in figure~\ref{fig:posts-lattice}.
This graph is usually called Post's lattice~\cite{pos41}.
Some important properties/names are labeled besides the respective co-clone.
Informally explained, every vertex corresponds to a co-clone while the edges model the containment relation in this lattice structure.

\paragraph{Complexity Classes.} 

All complexity results in this paper refer to classes in the Polynomial Hierarchy (PH) \cite{Papadimitriou94}. The building blocks of PH are the classes  $\PP$ and $\NP$ of decision problems solvable in deterministic, resp.\ non-deterministic, polynomial time. The class  $\DELTA 2$ is   the class of decision problems that can be decided by a  deterministic Turing machine  in polynomial time using an oracle for the class $\NP$. One can put restrictions on the number of oracle  calls.  If on input $x$ with
$|x| = n$ at most $O(\log n)$ calls 
to the $\NP$ oracles are allowed, then we get the class 
$\PP^{\NP[O(\log n)]}$, which is also 
referred to as  $\THETA 2$. 
A large collection of $\THETA 2$-complete problems can be obtained from 
\cite{jcss/Krentel88,mst/GasarchKR95}. For the reductions we employ polynomial many-one reductions, denoted by $\reduc$.

\section{\Cardminsat}
\label{sec:cardminsat}

We aim  at studying the following natural variant of SAT and  analyzing its complexity.

\dproblem{\Cardminsat}%
{A propositional formula $\phi$ and an atom $x$.}%
{Is $x$ true in a cardinality-minimal model of $\phi$?}


This problem is one of the prototypical problems of the class $\THETA 2$, see \cite{icalp/Wagner88,CreignouPW18}. 
It makes sense to study whether syntactic restrictions on the formulas make the problem easier and to go through a more fine-grained complexity study of $\Cardminsat$, in the following also denoted $\CMS$. To this aim
we
propose to investigate this problem within Schaefer's framework.  Hence we consider the following problem, in which $\Gamma$ is a constraint language, i.e., a finite set of Boolean relations. 

\dproblem{$\CMS(\Gamma)$}%
{A $\Gamma$-formula $\phi$ and an atom $x$.}%
{Is $x$ true in a cardinality-minimal model of $\phi$?}


Analogously we denote by $\sat(\Gamma)$ the Boolean satisfiability problem for $\Gamma$-formulas.
Our goal is to obtain a complete complexity classification of $\CMS(\Gamma)$, depending on~$\Gamma$. This issue has already been settled in the literature within the Krom fragment. 

\begin{theorem}\label{theorem:krom_classification}{\rm \cite{CreignouPW18}}
Let $\Gamma$ be a  Krom constraint language.
 If $\Gamma$ is \daffine or Horn, then  $\CMS(\Gamma)$   is decidable in
polynomial time. Otherwise it is $\THETA{2}$-complete. 
\end{theorem}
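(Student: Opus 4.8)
The plan is to treat three regimes: polynomial-time algorithms for the Horn and the \daffine cases, membership in $\THETA 2$ in general, and $\THETA 2$-hardness for every Krom language that is neither Horn nor \daffine. For a Horn language I would use that a satisfiable Horn formula has a unique least model $M_0$, the coordinatewise intersection of all its models, computable in polynomial time by unit propagation. Since $M_0$ is contained in every model, it has minimum weight and is the unique cardinality-minimal model, so $x$ is accepted if and only if $x\in M_0$. For a \daffine language I would view the instance as a system of equations of the forms $x\oplus y=a$ and $x=a$: one forms the graph whose edges are the binary equations, propagates the unit equations, and checks consistency. Each remaining free connected component then admits exactly two assignments, complementary inside the component, whose weights sum to the component size; the minimum is realized by at least one of them, and $x$ can be set to $1$ in a minimum-weight model if and only if the component assignment making $x=1$ already attains this per-component minimum. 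Both procedures are clearly polynomial.

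Membership in $\THETA 2$ holds for unrestricted \CMS: a binary search with $O(\log n)$ queries ``is there a model of weight at most $t$?'' determines the minimum weight $k$, and one final query ``is there a model of weight at most $k$ with $x$ set to $1$?'' decides the instance. For hardness I would start from the $\THETA 2$-complete problem $\CMS(\{\rOR{2}\})$ (\CMS on positive $2$-clauses, recalled in the introduction). The essential difficulty is that weight is not invariant under ordinary primitive positive definitions, so $\clos{\Gamma}$ is the wrong tool; instead I would use the minimal weak bases of Lagerkvist's table together with the Schnoor--Schnoor property that the weak base $\Gamma_w$ of a co-clone lies in $\closFr{\Gamma}$ for every base $\Gamma$ of that co-clone. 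Implementing a $\Gamma_w$-constraint by such a definition introduces only frozen auxiliary variables, so the weight of every model is shifted by one and the same constant; cardinality-minimal models are thereby preserved and the reduction $\CMS(\Gamma_w)\reduc\CMS(\Gamma)$ is weight-faithful. The argument then becomes a finite case analysis: for each Krom co-clone that lies neither below the Horn nor below the \daffine co-clone, I would take its tabulated minimal weak base $\Gamma_w$ and prove $\THETA 2$-hardness of $\CMS(\Gamma_w)$ directly, by a weight-faithful reduction from $\CMS(\{\rOR{2}\})$ in which freezing suitable coordinates of the weak-base relations recovers the positive $2$-clause behaviour; the complementive Krom co-clone, which contains no positive clause, would instead be seeded from an analogous complementive $\THETA 2$-hard \CMS problem.

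I expect the main obstacle to be exactly this cardinality bookkeeping across the algebraic reductions: one must verify that the weak-base implementations of the hard co-clones introduce only frozen variables with a uniform weight offset, and that the seed reductions -- both the projection of a weak-base relation onto a positive $2$-clause by freezing coordinates and the construction of a workable complementive seed -- preserve cardinality-minimal models exactly. A secondary, combinatorial obstacle is to read off the Krom part of Post's lattice and confirm that the co-clones which are neither Horn nor \daffine are precisely those handled above, so that the dichotomy line is placed correctly; in particular one must check that the \emph{dual}-Horn Krom languages, for which minimizing the number of $1$'s is no longer trivial, indeed fall on the hard side.
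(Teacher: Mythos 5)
First, a point of comparison: the paper does not prove this statement at all --- Theorem~\ref{theorem:krom_classification} is imported wholesale from \cite{CreignouPW18}, and the paper even remarks that the original proof was obtained via frozen partial co-clones \cite{NordhZ09}, whereas your weak-base route is the machinery this paper develops for its \emph{new} result, Theorem~\ref{theorem:classification}. So your proposal has to stand on its own, and it has one genuine gap: circularity at the seed. Your entire hardness argument reduces from $\CMS(\rOR{2})$, whose $\THETA{2}$-hardness you take as given (``recalled in the introduction''). But $\{\rOR{2}\}$ is itself a Krom language that is neither Horn nor \daffine, so its hardness is an \emph{instance} of the very statement you are proving; in this paper, the hardness of $\CMS(\rOR{2})$ (Lemma~\ref{lemma:cardminsat_completeness_or}) is explicitly \emph{derived from} Theorem~\ref{theorem:krom_classification}, not the other way around. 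A self-contained proof must establish the seed from outside the classification --- in \cite{CreignouPW18} this is done by a Wagner-style reduction \cite{icalp/Wagner88} from a canonical $\THETA{2}$-complete problem to \Cardminsat on positive $2$-clauses --- and this is where essentially all the $\THETA{2}$-hardness content of the theorem lives. What you do prove (the P cases, the $\THETA{2}$ membership by binary search, and the propagation of hardness across Krom languages) is the comparatively routine part.

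Two further problems in the hardness propagation itself. Your weight-accounting claim --- that the relevant implementations ``introduce only frozen auxiliary variables, so the weight of every model is shifted by one and the same constant'' --- fails exactly for the weak base of $\cc{ID}{}{2}$, the full Krom co-clone: it is $\rOR{2}(x_1,x_2)\land(x_1\neq x_3)\land(x_2\neq x_4)\land\F(x_5)\land\T(x_6)$, so the variables plugged into coordinates $3,4$ are determined but \emph{not} frozen, and their weight contribution $2-\sigma(x_1)-\sigma(x_2)$ anti-correlates with the original variables; with one shared negation partner per original variable, every model of the image formula would even have the same weight, destroying the reduction. The repair is the complementary-pair neutralization trick the paper uses in its own lemmas (e.g.\ Lemma~\ref{prop:II2}): pair each such auxiliary $y$ with a fresh $y'$ forced to the complementary value, so each pair contributes weight exactly $1$ in every model. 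Also, your planned special treatment of ``the complementive Krom co-clone'' is vacuous: every complementive Krom language lies below $\cc{ID}{}{}$, hence is \daffine and sits on the polynomial-time side, so no complementive seed is needed. Finally, a citation detail: the Schnoor--Schnoor property is membership in $\closNex{\Gamma}$, not $\closFr{\Gamma}$; it is irredundancy of Lagerkvist's minimal weak bases that upgrades this to $\closNN{\Gamma}$, which is the (exact, auxiliary-free) property the paper packages as Propositions~\ref{thm:R in NNclosure(G) => cms(R)<cms(G)} and~\ref{thm:weak_base_method} --- using it would let you drop the frozen-variable bookkeeping altogether.
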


We extend this result and obtain a complete complexity classification in all fragments of propositional logic.

\begin{theorem}\label{theorem:classification}
Let $\Gamma$ be a  constraint language.
 If $\Gamma$ is \daffine or Horn or 0-valid, then  $\CMS(\Gamma)$   is decidable in
polynomial time. Otherwise it is $\THETA{2}$-complete. 
\end{theorem}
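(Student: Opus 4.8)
The first part of the plan handles the polynomial cases together with the generic upper bound. For \emph{any} $\Gamma$ one has $\CMS(\Gamma)\in\THETA{2}$: using an $\NP$-oracle, a binary search over the weight determines the minimum weight $w^\ast$ of a model of $\phi$ in $O(\log\card{\var(\phi)})$ queries, and one further query asks whether some model of weight $w^\ast$ sets $x$ to $1$. This bound also covers the hard cases. For the three tractable families I would argue separately. If $\Gamma$ is $0$-valid, the all-zero assignment is a model of weight $0$, hence the unique cardinality-minimal model, so the answer is always ``no''. If $\Gamma$ is Horn, the models of a $\Gamma$-formula are closed under coordinatewise $\wedge$, so (when $\phi$ is satisfiable, which is decidable in polynomial time) there is a least model, which is then the unique cardinality-minimal one, and one simply reads off the value of $x$. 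The \daffine case already follows from Theorem~\ref{theorem:krom_classification} since \daffine relations are Krom, but it is also re-derivable directly by solving the linear system component-wise and picking, in each free component, the lighter of its two complementary assignments.

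\paragraph{Hardness: the algebraic setup.}
For every remaining $\Gamma$ the goal is $\THETA{2}$-hardness, and the principal difficulty is structural: the usual co-clone Galois correspondence cannot be used verbatim, because primitive positive definitions introduce existentially quantified variables that change the weight of assignments, so $\CMS$ is not invariant along a whole co-clone. I would circumvent this with weak bases. For a co-clone $\clos{\Gamma}$ with minimal weak base $U$ (Table~\ref{table:weak_bases}), the theory of Schnoor--Schnoor and Lagerkvist gives $U\subseteq\closNex{\Gamma}$, i.e.\ each relation of $U$ is a conjunction of $\Gamma$-constraints and equalities with \emph{no} existential quantifier. Substituting these definitions into a $\CMS(U)$-instance introduces no fresh variables, hence is weight-faithful and maps cardinality-minimal models to cardinality-minimal models, yielding $\CMS(U)\reduc\CMS(\Gamma)$. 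It then suffices to establish hardness for the weak base of every co-clone that is not contained in one of the three tractable classes.

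\paragraph{Hardness: seeds and propagation.}
To seed the argument I would use $\CMS(\{\rOR{2}\})$ on positive $2$-clauses, which is $\THETA{2}$-complete by the hard direction of Theorem~\ref{theorem:krom_classification}, together with a small number of further base reductions from canonical $\THETA{2}$-complete problems. From these I climb Post's lattice, feeding the weak base relations of Table~\ref{table:weak_bases} into the weight-faithful reduction above. The genuinely new regions lie just above the tractable frontier: the co-clones that are $1$-valid but not $0$-valid (where the all-zero shortcut is unavailable) and the complementive ones generated by relations such as $\rNAE{3}$ or $\Rddd$ (where complementation pairs minimum-weight with maximum-weight models). For each such co-clone I must exhibit a weak base that simulates a hard instance while keeping the intended model globally of minimum weight, typically by padding with frozen variables.

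\paragraph{Main obstacle.}
The step I expect to be hardest is exactly making these reductions weight-faithful. Eliminating the equality constraints produced by a $\closNex{\Gamma}$-definition is not innocent: naively identifying two equal variables rescales weights non-uniformly and can destroy minimality, since a model that was strictly lightest may become merely tied. Controlling this --- either by choosing weak bases that avoid equality, i.e.\ lie in $\closNN{\Gamma}$, or by compensating with frozen padding so that no unintended light model undercuts the designated one --- must be carried out uniformly across the $1$-valid, complementive and $\mathsf{IS}$-type families. It is this weight bookkeeping, matched precisely against the frontier ``\daffine or Horn or $0$-valid'', that constitutes the core of the argument.
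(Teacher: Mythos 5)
Your tractable cases and the $\THETA{2}$ upper bound are correct and match the paper, and your weak-base setup is essentially the paper's method --- with one correction: the paper never has to "eliminate equality constraints," because it uses Lagerkvist's \emph{minimal} weak bases, which are irredundant and therefore lie in $\closNN{\Gamma}$ (no existential quantifiers \emph{and} no equality) for every $\Gamma$ generating the co-clone. So what you single out as the main obstacle (weight distortion when identifying equated variables) is sidestepped by the choice of base, not solved by bookkeeping; the substitution is then literally model-preserving and Proposition~\ref{thm:R in NNclosure(G) => cms(R)<cms(G)} applies verbatim.

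The genuine gap is elsewhere: you have only one concrete hardness seed, $\CMS(\rOR{2})$, and it cannot reach the affine co-clones $\cc{IL}{}{1}$, $\cc{IL}{}{2}$, $\cc{IL}{}{3}$. The weak-base reduction goes \emph{from} $\CMS(B)$ \emph{to} $\CMS(\Gamma)$, so to use it you must first prove $\CMS(B)$ hard for each black co-clone's weak base $B$, by a weight-faithful reduction from an already-hard problem \emph{into} $B$-formulas. For the affine co-clones every such $B$ is affine, and $\rOR{2}$ is not definable by any conjunction of affine constraints, so no local, weight-faithful encoding of positive $2$-clauses into these languages exists; your phrase ``a small number of further base reductions from canonical $\THETA{2}$-complete problems'' conceals precisely the missing content. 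The paper's central technical lemma (Lemma~\ref{lemma:cardminsat_completeness_xor}) fills this hole: it proves $\CMS(\rXOR{3})$ is $\THETA{2}$-hard via the chain $\CMS(\rOR{2})\reduc\CMS(\rNAE{3})\reduc\CMS^*(\rXOR{3})\reduc\CMS(\rXOR{4})\reduc\CMS(\rXOR{3},\rXOR{2})\reduc\CMS(\rXOR{3})$, which requires introducing an auxiliary cardinality-budgeted problem $\CMS^*$ and weight-amplification gadgets (copies $\alpha_i^j$ tied to a forced variable $t$) so that clause violations are detectable purely through cardinality thresholds. Without this lemma --- or some substitute argument for the affine fragment --- your classification does not go through on $\cc{IL}{}{1}$, $\cc{IL}{}{2}$, $\cc{IL}{}{3}$, and hence the stated dichotomy frontier is not established.
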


Note that $\CMS(\Gamma)$ is trivial for 0-valid formulas (the answer is always "no").
The complexity classification of $\CMS$ in the Krom fragment had been obtained by means of partial frozen co-clones. While these partial frozen  co-clones are well described within the Krom fragment \cite{NordhZ09}, they are  only partially known in the full range of propositional logic. For this reason in order to get the complete classification we use   another set of tools. In particular we will use a restricted  notion of closure, and build on  the notion of weak bases introduced in \cite{scs08}. This is described in the next section.

\section{Technical tools} \label{sec:tec-tools}

\subsection{Proof's method}\label{sec:method}

The above introduced closure operator $\clos{\cdot}$ on sets of Boolean relations is relevant in order to obtain complexity results for the satisfiability problem. Indeed, assume that $\Gamma_1\subseteq\clos{\Gamma_2}$. Then a $\Gamma_1$-formula can be transformed into a satisfiability-equivalent $\Gamma_2$-formula, thus showing that $\sat(\Gamma_1)$ can be reduced in polynomial time to $\sat(\Gamma_2)$ \cite{jeavons98}. Hence, the complexity of  $\sat(\Gamma)$ depends only on the co-clone $\clos\Gamma$. 
Accordingly, in order to obtain a full
complexity classification for the satisfiability problem one only
has to study the co-clones.

Unfortunately, since we are here interested in cardinality-minimal models, we cannot a priori only study the co-clones.
Indeed, existential variables and equality constraints that may occur when transforming  a $\Gamma_1$-formula into a satisfiability-equivalent $\Gamma_2$-formula are problematic, as they can change the set of models and the cardinality of each model. 
Therefore, we will use a more restricted notion of closure, namely the above introduced closure operator $\closNN{\cdot}$. This operator avoids existential quantifiers and equality constraints. The only operation to express relations in $\closNN{\Gamma}$ is conjunction of $\Gamma$-constraints (see e.g. example~\ref{ex:example1}). Consequently, when replacing in a reduction a relation $R \in \closNN{\Gamma}$ by its representing $\Gamma$-formula, $R$ is represented \emph{exactly}: no new variables are introduced and no constraints other than those built on $\Gamma$ are allowed (in particular no equality constraints). Therefore, any reduction based on the closure operator $\closNN{\cdot}$ preserves exactly the set of models, and, a fortiori, all cardinality-minimal models. Hence, we obtain the following property.

\begin{proposition}\label{thm:R in NNclosure(G) => cms(R)<cms(G)}
Let $\Gamma$ be  a  constraint language and $R$ be a relation. 
 
	 If $R\in\closNN{\Gamma}$ then $\CMS(R)\reduc\CMS(\Gamma)$.
\end{proposition}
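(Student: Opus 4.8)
The plan is to exhibit an explicit, model-preserving syntactic substitution and verify that it runs in polynomial time. Since $R\in\closNN{\Gamma}$, by definition $R$ is definable by a $\Gamma$-formula in which neither equality nor existential quantification is used: there is a fixed $\Gamma$-formula $\psi$ over $k$ distinct variables $x_1,\dots,x_k$, with $k$ the arity of $R$, such that $\psi(x_1,\dots,x_k)\equiv R(x_1,\dots,x_k)$. Given an instance $(\phi,x)$ of $\CMS(R)$, where $\phi$ is a conjunction of constraints over $R$, I would replace each constraint $R(y_1,\dots,y_k)$ occurring in $\phi$ by the formula $\psi(y_1,\dots,y_k)$ obtained from $\psi$ by substituting the (not necessarily distinct) arguments $y_1,\dots,y_k$ for $x_1,\dots,x_k$. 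Writing $\phi'$ for the resulting conjunction, the reduction outputs $(\phi',x)$. By construction $\phi'$ is a $\Gamma$-formula, so $(\phi',x)$ is a legitimate instance of $\CMS(\Gamma)$.

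The correctness argument rests on the single claim that $\var(\phi')=\var(\phi)$ and $\phi'\equiv\phi$. Equality of the variable sets is immediate: each substituted copy of $\psi$ uses only the arguments $y_1,\dots,y_k$, which already occur in $\phi$, and---crucially---$\psi$ contains no existentially quantified variables, so no fresh variable is ever introduced. For logical equivalence I would argue constraint by constraint: an assignment $\sigma$ satisfies $R(y_1,\dots,y_k)$ iff $(\sigma(y_1),\dots,\sigma(y_k))\in R$ iff $\sigma$ satisfies $\psi(y_1,\dots,y_k)$, the last step being exactly the defining property of $\psi$. Conjoining over all constraints yields $\phi'\equiv\phi$. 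Since $\phi$ and $\phi'$ then have the same variables and the same models, every assignment $\sigma$ has the same weight $\vert\sigma\vert$ relative to both formulas, so the two formulas have identical sets of cardinality-minimal models. In particular $x$ is true in some cardinality-minimal model of $\phi$ iff it is true in some cardinality-minimal model of $\phi'$, which is precisely the equivalence of instances required by a many-one reduction.

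It remains to confirm that the map is computable in polynomial time, and this is routine: both $\Gamma$ and $R$ are fixed and not part of the input, hence $\psi$ is a formula of constant size, and $\phi'$ is obtained by performing at most $\vert\phi\vert$ independent constant-size substitutions. Thus $\phi'$ has size $O(\vert\phi\vert)$ and is produced in linear time, giving $\CMS(R)\reduc\CMS(\Gamma)$.

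I do not expect a genuine obstacle here; the proposition is essentially the payoff of having set up the closure operator $\closNN{\cdot}$ in place of $\clos{\cdot}$. The one point that deserves to be made explicit is why cardinality is preserved: the prohibition of existential quantifiers guarantees that no new variable enters the model count, whereas existential variables would have to be assigned truth values and would thereby alter the weight of a model, while the prohibition of equality constraints guarantees that $\phi'$ is a genuine $\Gamma$-formula and hence a valid input to $\CMS(\Gamma)$. A secondary technicality worth recording is that the substitution must be allowed to identify arguments (the $y_i$ need not be distinct), which is harmless since constraints are defined over not-necessarily-distinct variables.
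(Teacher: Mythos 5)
Your proposal is correct and follows essentially the same route as the paper: the paper's justification (given in the paragraph preceding the proposition) is precisely that a relation in $\closNN{\Gamma}$ is represented \emph{exactly} by its defining $\Gamma$-formula --- no fresh variables, no equality constraints --- so the constraint-by-constraint substitution preserves the set of models and hence all cardinality-minimal models. Your write-up merely makes explicit the details (variable identification, constant-size substitution, polynomial running time) that the paper leaves implicit.
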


\noindent
The proof of  our complete classification will consist in a systematic exploration of the co-clones lattice, yet reductions can only be obtained via the restrictive operator $\closNN{\cdot}$, not via the more expressive, co-clone generating, operator $\clos{\cdot}$. In this context, the concept of a \emph{weak base} is important \cite{scs08}. A weak base $B$ for a co-clone $\cocl$ has the property that 
(1) $\clos{B}=\cocl$, and (2) $B\in\closNex{\Gamma}$ for any $\Gamma$ such that
$\clos{\Gamma}=\cocl$. The existence of a weak base for each co-clone has been shown by Schnoor and Schnoor \cite{scs08}. For a finitely generated co-clone $\cocl$ there even exists a single relation weak base. If such a weak base $B$ is in addition \emph{irredundant} (that is, the matrix representation does not contain redundant columns), it holds even that $B\in\closNN{\Gamma}$ for any $\Gamma$ such that $\clos{\Gamma}=\cocl$. Lagerkvist (2014) has identified \emph{minimal} weak bases for all finitely generated co-clones \cite{Lagerkvist14}. A relation $R$ is minimal, if (1) $R$ is irredundant, (2) $R$ contains no fictitious coordinates, (3) there is no $R' \subsetneq R$, such that $\clos{R} = \clos{R'}$. A coordinate $i$ is called \emph{fictitious} if its value has no influence on the membership of a tuple, that is,  $(x_1, \dots, x_{i-1}, 0, x_{i+1}, \dots, x_k) \in R$
if and only if $(x_1, \dots, x_{i-1}, 1, x_{i+1}, \dots, x_k) \in R$. Table~\ref{table:weak_bases} contains a complete list of all finitely generated co-clones with minimal weak bases.

\medskip

The proof method to obtain our complete classification will use the minimal weak bases as follows. In order to show a hardness result for all constraint languages generating a certain co-clone $\cocl$, we pick a minimal weak base $B$ of $\cocl$ and show that $\CMS(B)$ is hard. This implies then hardness of $\CMS(\Gamma)$ for any $\Gamma$ such that $\clos{\Gamma} = \cocl$ by applying Proposition~\ref{thm:R in NNclosure(G) => cms(R)<cms(G)} (because $B$ is a minimal weak base, it is irredundant, and we hence have that $B \in \closNN{\Gamma}$). We state this in the following proposition.

\begin{proposition}\label{thm:weak_base_method}
Let $\cocl$ be a co-clone and $B$ be a minimal weak base of $\cocl$.
Then it holds that $\CMS(B)\reduc\CMS(\Gamma)$ for any $\Gamma$ such that
$\clos{\Gamma} = \cocl$.
\end{proposition}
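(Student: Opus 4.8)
The plan is to chain together the structural properties of minimal weak bases recalled above with Proposition~\ref{thm:R in NNclosure(G) => cms(R)<cms(G)}. First I would fix an arbitrary constraint language $\Gamma$ with $\clos{\Gamma} = \cocl$; since $B$ does not depend on $\Gamma$, it then suffices to establish the single reduction $\CMS(B) \reduc \CMS(\Gamma)$ for this fixed $\Gamma$.

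The heart of the argument is to show that $B \in \closNN{\Gamma}$, i.e.\ that the relation $B$ can be written as a pure $\Gamma$-formula, using neither existentially quantified variables nor equality constraints. Here I would invoke the two facts stated before the proposition. Since $B$ is a \emph{minimal} weak base, clause~(1) of the definition of minimality gives that $B$ is irredundant. And an irredundant weak base of $\cocl$ lies in $\closNN{\Gamma}$ for every $\Gamma$ generating $\cocl$: this is precisely the strengthening of the Schnoor--Schnoor property recalled above. Indeed, property~(2) of a weak base gives only membership in $\closNex{\Gamma}$, but irredundancy upgrades this to $\closNN{\Gamma}$. Combining these two observations yields $B \in \closNN{\Gamma}$.

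With $B \in \closNN{\Gamma}$ in hand, Proposition~\ref{thm:R in NNclosure(G) => cms(R)<cms(G)} applies directly and delivers $\CMS(B) \reduc \CMS(\Gamma)$. As $\Gamma$ was an arbitrary language with $\clos{\Gamma} = \cocl$, this establishes the claim for all such $\Gamma$.

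There is essentially no obstacle internal to this proposition: the genuine difficulty has been externalised, on one side into the weak-base machinery of \cite{scs08,Lagerkvist14}, and on the other into Proposition~\ref{thm:R in NNclosure(G) => cms(R)<cms(G)}. The one point requiring care is the passage from $\closNex{\Gamma}$ to $\closNN{\Gamma}$: membership in $\closNN{\Gamma}$ forbids even equality constraints, which is exactly what Proposition~\ref{thm:R in NNclosure(G) => cms(R)<cms(G)} needs in order to preserve the set of models, and a fortiori the cardinality-minimal models, exactly. It is therefore crucial to use the irredundancy of $B$ (guaranteed by minimality) rather than merely the defining property of a weak base, since the weaker membership in $\closNex{\Gamma}$ would still permit equality constraints and would not suffice to invoke Proposition~\ref{thm:R in NNclosure(G) => cms(R)<cms(G)}.
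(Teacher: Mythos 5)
Your proposal is correct and matches the paper's own argument exactly: minimality of $B$ gives irredundancy, irredundancy upgrades the weak-base property from $B\in\closNex{\Gamma}$ to $B\in\closNN{\Gamma}$, and Proposition~\ref{thm:R in NNclosure(G) => cms(R)<cms(G)} then yields the reduction. Your closing remark about why $\closNex{\Gamma}$ alone would not suffice is precisely the point the paper relies on as well.
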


\smallskip

To start with, we need hardness results for some specific relations, and this is the aim of the next section.

\subsection{Specific hardness results}

We give here some hardness results for some specific relations, they will be used in order to get hardness results for co-clones in the next section. The classification obtained in Theorem \ref{theorem:krom_classification} for the Krom fragments implies the following result. 

\begin{lemma}\label{lemma:cardminsat_completeness_or}
$\CMS(\rOR{2})$ is  $\THETA 2$-hard.
\end{lemma}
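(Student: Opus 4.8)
The plan is to derive this lemma directly from the Krom classification in Theorem~\ref{theorem:krom_classification}, exploiting the fact that $\rOR{2}$ is visibly a Krom relation. Concretely, $\rOR{2} = \{01, 10, 11\}$ is defined by the single $2$-clause $x_1 \lor x_2$, so the constraint language $\{\rOR{2}\}$ is Krom. It therefore suffices to verify that $\{\rOR{2}\}$ falls into the hard case of Theorem~\ref{theorem:krom_classification}, i.e.\ that $\rOR{2}$ is neither Horn nor \daffine; the theorem then immediately yields that $\CMS(\rOR{2})$ is $\THETA 2$-complete, and in particular $\THETA 2$-hard.

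First I would check that $\rOR{2}$ is not Horn. A Boolean relation is Horn precisely when it is closed under coordinate-wise conjunction (minimum). Taking the two tuples $01$ and $10$ of $\rOR{2}$, their coordinate-wise minimum is $00$, which does not belong to $\rOR{2}$. Hence $\rOR{2}$ is not closed under conjunction and so is not Horn.

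Next I would check that $\rOR{2}$ is not \daffine, and indeed not affine at all. An affine relation is a coset of a linear subspace of $\{0,1\}^k$, so its cardinality is a power of two; but $\card{\rOR{2}} = 3$, which is not a power of two, so $\rOR{2}$ cannot be affine. (Equivalently, affine relations are closed under the ternary operation $x \oplus y \oplus z$, whereas $01 \oplus 10 \oplus 11 = 00 \notin \rOR{2}$.) A fortiori, $\rOR{2}$ is not \daffine.

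Combining these two observations with the fact that $\{\rOR{2}\}$ is Krom, Theorem~\ref{theorem:krom_classification} places $\CMS(\rOR{2})$ in the $\THETA 2$-complete case, which gives the desired $\THETA 2$-hardness. I do not expect any genuine obstacle here: the statement is essentially a specialization of the already-established Krom classification to the single relation $\rOR{2}$, and the only work is the routine verification that this relation lies outside both the Horn and the \daffine tractable classes.
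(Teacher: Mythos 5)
Your proposal is correct and matches the paper's approach exactly: the paper derives this lemma by the single remark that the Krom classification of Theorem~\ref{theorem:krom_classification} implies it, which is precisely your argument, with the (routine but welcome) verification that $\rOR{2}$ is Krom yet neither Horn nor \daffine spelled out explicitly. Your polymorphism checks ($01 \wedge 10 = 00 \notin \rOR{2}$ and $01 \oplus 10 \oplus 11 = 00 \notin \rOR{2}$) are both accurate.
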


The next result will also be a cornerstone in our classification proof. 

\begin{lemma}\label{lemma:cardminsat_completeness_xor}
$\CMS(\rXOR{3})$ is $\THETA{2}$-hard.
\end{lemma}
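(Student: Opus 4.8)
The plan is to prove hardness by a direct many-one reduction rather than through the closure operator. Since $\rXOR{3}$ is $1$-valid, every relation in $\closNN{\rXOR{3}}$ is $1$-valid and affine, so $\rOR{2}\notin\closNN{\rXOR{3}}$ and Proposition~\ref{thm:R in NNclosure(G) => cms(R)<cms(G)} cannot be combined with Lemma~\ref{lemma:cardminsat_completeness_or}. In fact a \emph{local} emulation of a clause $x\vee y$ by parity constraints is hopeless: a single affine gadget treats an assignment and its bitwise complement identically, so it cannot penalize $x=y=0$ (which $\vee$ forbids) while sparing $x=y=1$ (which $\vee$ allows). I would therefore reduce from a genuinely parity-based source problem instead of from $\CMS(\rOR{2})$.

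Before describing the source, I record the operations available inside an $\rXOR{3}$-formula, where fresh variables (existential in the reduction, not in $\closNN{\cdot}$) are allowed. A forced-true variable $t$ is obtained from $\rXOR{3}(t,t,t)$, which imposes $t=1$; equality $u=v$ then comes from $\rXOR{3}(u,t,v)$, since $u\oplus 1\oplus v=1$ gives $u=v$. Consequently I get \emph{weight amplification}: replacing a variable by a bundle of $N$ forced-equal copies multiplies its contribution to the cardinality by $N$. Chaining triples through fresh partial-sum variables expresses an arbitrary long linear equation $\bigoplus_i x_i=c$. These primitives let me build essentially any GF$(2)$-linear system while keeping fine control of weights through the bundle sizes.

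For the source I would use the threshold problem ``does a $3$-XOR system admit a solution of weight at most $k$?'', which is $\NP$-complete (it is the minimum-weight solution problem for a sparse GF$(2)$-linear system, a variant of the minimum-distance problem). From its $\NP$-completeness I would obtain $\THETA{2}$-hardness of reading a designated bit of the \emph{minimum} weight by Wagner's technique: take $2m$ systems over disjoint variables with the monotone promise that being below threshold propagates along the sequence, so that the $\THETA{2}$-complete question becomes the parity of the largest satisfied index; then assemble them, with bundle-scaled weights placing the indices on separated scales, into a single $\rXOR{3}$-formula $\phi$ whose minimum weight encodes that index, together with a readout atom $x$ forced to $1$ in a minimum model exactly when the parity is odd. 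This would yield $\CMS(\rXOR{3})$ as the target of a reduction witnessing $\THETA{2}$-hardness.

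The main obstacle is cardinality bookkeeping. Encoding long parities as triples introduces partial-sum variables whose values, hence whose weights, depend on the assignment, so a naive objective of $\phi$ is the intended optimum plus an uncontrolled error term. The crux is to lay out the gadgets, namely the choice of bundle sizes, the placement of partial sums, and the readout gadget, so that this auxiliary weight is either constant or monotonically slaved to the quantity being optimized, and to realize the Wagner comparison using only odd-parity triples despite $1$-validity forcing the all-ones assignment to remain a model (it is never of minimum weight, hence harmless). Once the minimum weight of $\phi$ is shown to track the source optimum exactly, $x$ is true in some cardinality-minimal model of $\phi$ if and only if the source instance is a yes-instance, giving $\CMS(\rXOR{3})$ is $\THETA{2}$-hard.
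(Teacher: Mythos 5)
There are two genuine gaps here, and the first is a false claim that your toolbox rests on. Chaining $\rXOR{3}$-triples through fresh partial-sum variables does \emph{not} express an arbitrary linear equation $\bigoplus_i x_i=c$. Every $\rXOR{3}$-formula is $1$-valid (all-ones satisfies $x\xor y\xor z=1$), and $1$-validity survives conjunction and projection, so only equations satisfied by the all-ones point --- an odd number of variables with $c=1$, or an even number with $c=0$ --- are expressible; in particular $\rXOR{4}$ cannot be expressed from $\rXOR{3}$ by any formula, however many fresh variables you allow. Your Wagner assembly needs wrong-parity equations in at least two places: the per-instance comparison switch (attaching a switch $\alpha_i$ to each constraint of the $i$-th system requires $x\xor y\xor z\xor\alpha_i=1$; the expressible polarity $x\xor y\xor z\xor\alpha_i=0$ makes the relaxed branch the all-zero, hence cheapest, assignment, so the switch conveys nothing) and the readout $x=\alpha_1\xor\dots\xor\alpha_{2m}$ (an even number of summands, again wrong parity). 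The repair --- a global variable $f$ acting as a virtual $0$, pinned to $0$ in every cardinality-minimal model by a large bundle of forced-equal copies --- is precisely the missing idea; it is what the paper's intermediate steps $\CMS^*(\rXOR{3})\reduc\CMS(\rXOR{4})$ and $\CMS(\rXOR{3},\rXOR{2})\reduc\CMS(\rXOR{3})$ are engineered to supply. Second, even granting the toolbox, everything that makes the statement nontrivial --- the Wagner combination, the scale separation between gadget weight and readout weight, the tie-breaking that freezes the comparison bits in every minimum-weight model --- is exactly what you label ``the main obstacle'' and do not carry out. Saying that the crux is to lay out the gadgets so that auxiliary weight is harmless states the proof obligation; it does not discharge it. The paper's longest argument (the reduction $\CMS(\rNAE{3})\reduc\CMS^*(\rXOR{3})$, with indicator variables costing $N$ per satisfied constraint and $3N$ per violated one) exists precisely to meet that kind of obligation.

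A smaller but instructive error: the argument by which you discard a reduction from $\CMS(\rOR{2})$ does not hold. $\rXOR{3}$ has odd arity and is therefore not complementive (if $x\xor y\xor z=1$, the complemented triple sums to $0$), and in the cardinality-minimality setting one \emph{can} penalize $x=y=0$ while sparing $x=y=1$, by weights rather than by logic: the paper does exactly this, writing $\rOR{2}(x,y)\equiv\rNAE{3}(x,y,f)$ with $f$ pinned to $0$ by a heavy bundle, then encoding each $\rNAE{3}$-constraint into $\rXOR{3}$-constraints whose indicator variables cost an extra $2N$ exactly when the constraint is violated. (Note also that $1$-validity alone excludes nothing here, since $\rOR{2}$ is $1$-valid; it is affineness that puts $\rOR{2}$ outside $\closNN{\rXOR{3}}$.) So the route you call hopeless is the one the paper actually takes, through the chain $\CMS(\rOR{2})\reduc\CMS(\rNAE{3})\reduc\CMS^*(\rXOR{3})\reduc\CMS(\rXOR{4})\reduc\CMS(\rXOR{3},\rXOR{2})\reduc\CMS(\rXOR{3})$. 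Your Wagner-from-scratch alternative is not unsalvageable --- with the virtual-zero device and careful bookkeeping it could plausibly be completed, and your NP-hardness claim for the weight-threshold $3$-XOR source problem is correct (it follows from the Min-Ones classification for affine constraint languages, though this must be cited for width exactly $3$ and right-hand side $1$) --- but as written the proposal asserts a false expressibility lemma and defers the entire construction that would constitute the proof.
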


\begin{proof}[Proof sketch]
Recall that $\rXOR{3}(x,y,z) \equiv (x \xor y \xor z)$ and  $\rXOR{4}(x,y,z,u) \equiv (x \xor y \xor z \xor u)$. 
Here we will also use the ternary relation $\rNAE{3} =\{0,1\}^3\setminus \{000,111\}$ and the problem $\CMS^*(\Gamma)$, defined as follows:

\dproblem{$\CMS^*(\Gamma)$}
{A $\Gamma$-formula $\phi$, atom $x$, integer $k$.} 
{Is $x$ true in a cardinality-minimal model of $\phi$ and is this cardinality $\leq k$?}

\noindent
The proof consists of the following sequence of reductions.

\begin{enumerate}
\item $\CMS(\rOR{2})\reduc\CMS(\rNAE{3})$
\item $\cmsnaet\reduc\cmskxort$
\item $\CMS^*(\rXOR{3})\reduc \CMS(\rXOR{4})$
\item $\CMS(\rXOR{4}) \reduc \CMS(\rXOR{3},\,\rXOR{2})$
\item $\CMS(\rXOR{3},\,\rXOR{2}) \reduc \CMS(\rXOR{3})$
\end{enumerate}

\noindent
Then the result follows from Lemma \ref{lemma:cardminsat_completeness_or}.

We now give the reductions.

\paragraph{\normalfont 1. $\CMS(\rOR{2})\reduc\CMS(\rNAE{3})$.}\ \\

To each constraint $\rOR{2}(x,y)$ we associate the constraint $\rNAE{3}(x,y,f)$ where $f$ is a   fresh variable.
Observe that $\rOR{2}(x,y)\equiv \rNAE{3}(x,y,0)$.  Therefore
the idea is to use $f$   in the place of 0  as a  global variable  (that is the same for all constraints) and to force it to take value 0 in all cardinality-minimal models. This can be done by giving a weight $N$ to $f$ big enough.  For this we add the constraint $\rNAE{3}(f,f,t)$, which expresses $f\ne t$,  and   $N$   constraints $\rNAE{3}(f_j,f_j,t)$, where the $f_j$, for $j=1,\ldots , N,$ are fresh variables. This ensures that if $f=1$ then $f_1,\ldots, f_N=1$.  Observe moreover that since $\rNAE{3}$ is a complementive relation, the built $\rNAE{3}$-formula is satisfiable if and only if it has a model with $f=0$. Taking $N>n$ where $n$ is the number of variables of the original formula ensures that $f=0$ in any cardinality-minimal model.

\newcommand{\sg}{\sigma}
\newcommand{\plus}[1]{{#1}^{+}}
\newcommand{\moins}[1]{{#1}^{-}}
\newcommand{\et}{\wedge}
\newcommand{\Et}{\bigwedge}

\paragraph{\normalfont 2. $\cmsnaet\reduc\cmskxort$.} \ \\

\smallskip

\noindent
Given a $\rnaet$-formula \ $\phi = \bigwedge_{i=1}^{m} \rnaet(x_i,y_i,z_i)$ \  and an atom $x\in\var(\phi)$, we want to construct an $\textsf{\small XOR}_3$-formula $\phi'$, an atom $x'$ and an integer $k$ such that: 

\begin{equation}\label{eq:reduc nae xor3}
\left(\begin{array}{c}
\text{$x$ belongs to a} \\ \text{minimal model of $\phi$}
\end{array}\right)
\text{ iff  }
\left(\begin{array}{c}
\text{$x'$ belongs to a minimal model} \\ \text{of $\phi'$ that has cardinality $\leq k$.} 
\end{array}\right)
\end{equation}

\noindent
To build the triple $(\phi',x',k)$, we set $x'=x$ and we choose $\phi'$ as the $\rxor3$-formula: 
\[
\phi' = \rxort(t,t,t) \ \et \ 
\Et_{i=1}^{m} 
\left\{\begin{array}{l}
\rxor3 (x_i,y_i,\alpha_i)	\et {\displaystyle \Et_{j=2}^{N}} \rxor3 (\alpha_i,\alpha_i^j,t) 	\quad \et \\ 
\rxor3 (x_i,z_i,\beta_i)	\et {\displaystyle \Et_{j=2}^{N}} \rxor3 (\beta_i,\beta_i^j,t) 		\quad \et \\ 
\rxor3 (y_i,z_i,\gamma_i)	\et {\displaystyle \Et_{j=2}^{N}} \rxor3 (\gamma_i,\gamma_i^j,t)
\end{array}
\right\}.
\]
Here, the atoms not in $\var(\phi)$ 
--- namely,  $t$ and the variables of the form $\alpha_i$, $\beta_i$, $\gamma_i$, $\alpha_i^j$, $\beta_i^j$ and $\gamma_i^j$ ---, 
are fresh variables. 
For the integer $N$, we take the cardinality of $\var(\phi)$. This will ensure that the cardinality of any model of $\phi$ is strictly smaller than~$N$. (Notice that such a model  maps at least one variable to $0$). 
At last, we  set $k=(m+1)N$ for reasons that will become clarified in the sequel of the proof. 
\smallskip

Notice that  any model $\tau$ of $\phi'$ maps $t$ to $1$, which entails  
\begin{enumerate}
\item  $\tau(\alpha_i^j)=\tau(\alpha_i)$, 
\item  $\tau(\beta_i^j)=\tau(\beta_i)$ and 
\item  $\tau(\gamma_i^j)=\tau(\gamma_i)$ 
\end{enumerate}
for all $i,j$. Thus, for each $i$ the variables $\alpha_i^j$ are used to boost the weight of $\alpha_i$: mapping $\alpha_i$ to $1$ in an assignment automatically increases the cardinality of this assignment by $N$ (and similarly for $\beta_i$ and $\gamma_i$).

\smallskip

Besides, observe that any assignment $\sg$ on $\var(\phi)$ can be extended in a unique way to a model of $\phi'$ by carefully choosing the values of the fresh variables. We denote by~$\plus\sg$ this extension. 
In the particular case where $\sg$ \emph{satisfies} $\phi$, we obtain:

\begin{equation}\label{EQ:de sg à sg+}
\card{\plus\sg} = \card{\sg} + mN.
\end{equation}

\smallskip

Conversely, if $\tau$ is a model of $\phi'$, then for any $i$, the total weight of $\alpha_i, \beta_i, \gamma_i$ in $\tau$ is $1$ if $\tau\models\rnaet(x_i,y_i,z_i)$ and $3$ otherwise. 
Denoting by $\moins\tau$ the restriction of $\tau$ to $\var(\phi)$, one can write
\, $\card{\tau} = \card{\moins\tau} + (m+2k)N$, \, 
where $k$ is the number of $i$ for which $\moins\tau\not\models\rnaet(x_i,y_i,z_i)$. It yields, for any assignment $\tau$ \emph{which satisfies} $\phi'$: 

\begin{equation}\label{EQ:de tau à tau-}
\moins\tau\models\phi \IFF \card{\tau} = \card{\moins\tau} + mN.
\end{equation}
This can be rephrased, since  $\card{\tau}$ has the shape $\card{\moins\tau} + (m+2k)N$ and since $\card{\moins\tau}<N$: 

\begin{equation}\label{EQ:de tau à tau- bis}
\moins\tau\models\phi \IFF \card{\tau} < (m+1)N.
\end{equation}
Now we can prove that Equivalence~\eqref{eq:reduc nae xor3} holds when taking $k=(m+1)N$.  

\framebox{$\Rightarrow$}
Let $\sg$ be a minimal model of $\phi$ containing $x$. Then $x$ also belongs to $\plus\sg$ which is a model of $\phi'$. Besides $\plus\sg$ is cardinality-minimal among the models of $\phi'$. 
For assume, towards a contradiction, that some model $\tau$ of $\phi'$ fulfils $\card{\tau}<\card{\plus\sg}$. 
Since $\sg\models\phi$, Equation~\eqref{EQ:de sg à sg+} ensures that $\card{\plus\sg} = \card{\sg} + mN < (m+1)N$. Therefore, $\card{\tau} < (m+1)N$ and hence by~\eqref{EQ:de tau à tau- bis}: $\moins\tau\models\phi$. 
It then entails that $\card{\tau} = \card{\moins\tau} + mN$ by~\eqref{EQ:de tau à tau-} and the inequality $\card{\tau}<\card{\plus\sg}$ implies $\card{\moins\tau} + mN<\card{\sg} + mN$, that is $\card{\moins\tau} < \card{\sg}$. 
Finally, $\moins\tau$ is a model of $\phi$ of cardinality strictly smaller than $\card\sg$: a contradiction. 

\framebox{$\Leftarrow$}
Let $\tau$ be a minimal model of $\phi'$ containing $x$ and fulfilling $\card\tau < (m+1)N$. Then $x$ also belongs to $\moins\tau$ which is a model of $\phi$, by~\eqref{EQ:de tau à tau- bis}. Besides $\moins\tau$ is cardinality-minimal among the models of $\phi$. 
Otherwise it exists a model $\sg$ of $\phi$ such that $\card{\sg}<\card{\moins\tau}$. But we have  
in the one hand, $\card{\moins\tau} = \card\tau - mN$ by~\eqref{EQ:de tau à tau- bis} 
and in the other hand, $\card{\sg} = \card{\plus\sg} - mN$ by~\eqref{EQ:de sg à sg+}. 
Hence inequality $\card{\sg}<\card{\moins\tau}$ yields $\card{\plus\sg} < \card{\tau}$. 
As $\plus\sg\models\phi'$, we finally get a model of $\phi'$ of cardinality strictly smaller than $\card\tau$: a contradiction. 
\qed

\bigskip

\paragraph{\normalfont 3. $\CMS^*(\rXOR{3})\reduc \CMS(\rXOR{4})$.}\ \\

\noindent
Let $(\phi, x, k)$ be an instance of $\CMS^*(\rXOR{3})$, where
\[
\phi = \Et_{i=1}^{p} \rXOR{3}(x_{i},y_{i}, z_{i}).
\]

\noindent
We consider $k$ fresh variables $\alpha_1,\dots, \alpha_{k}$ and we set:
\[
\phi' = \Et_{i=1}^{p}\Et_{j=1}^{k} \rXOR{4}(x_{i},y_{i}, z_{i},\alpha_j).
\]

Observe that in all models of $\phi'$, equalities $\alpha_1=\dots =\alpha_k$ hold. Moreover $\phi'$ has one model of weight $k$, which assigns the $\alpha_j$'s to~1 and all other variables to~0. 
Each model $\sg$ of $\phi$  can be extended into a model of $\phi'$ by mapping each $\alpha_j$ to $0$. We denote by $\plus\sg$ this extension. Notice that in this case, $\card{\sg}=\card{\plus\sg}$.
Conversely, for any model $\tau$ of $\phi'$, we denote by $\moins\tau$ its restriction to $\var(\phi)$. It is easily seen that if $\tau(\alpha_j)=0$ for all $j$'s then $\moins\tau\models\phi$ and in this case, $\card{\tau}=\card{\moins\tau}$.

\smallskip

Let's now prove that the function $(\phi, x, k)\mapsto (\phi',x)$ is a reduction from $\CMS^*(\rXOR{3})$ to $\CMS(\rXOR{4})$.

\smallskip

Assume that $(\phi,x,k)$ is a positive instance of $\CMS^*(\rXOR{3})$ and call $\sg$ a minimal model of $\phi$ containing $x$ and such that $\card{\sg}\leq k$. Then $\plus\sg\models\phi'$ and $\plus\sg(x)=1$. Furthermore, $\plus\sg$ is minimal among the models of $\phi'$ since for any model $\tau$ of $\phi'$: If $\tau(\alpha_j)=1$ for all $j$'s, then $\card{\tau}\geq k\geq  \card{\sg}=\card{\plus\sg}$. If $\tau(\alpha_j)=0$ for all $j$'s, then $\moins\tau$ is a model of $\phi$, which satisfies  $\card{\moins\tau}\geq \card{\sg}$ by minimality of $\sg$. Since in this case $\card{\moins\tau}=\card{\tau}$  we get $\card{\tau}\geq \card{\sg}=\card{\plus\sg}$. Thus, $\plus\sg$ is a minimal model of $\phi'$ containing $x$ and hence, $(\phi',x)$ is a positive instance of $\CMS(\rXOR{4})$.

\smallskip

Conversely, assume $(\phi',x)$ is a positive instance of $\CMS(\rXOR{4})$ and call $\tau$ a minimal model of $\phi'$ containing $x$. Then $\card{\tau}\leq k$ since $\tau$ is minimal and since the assignment that maps every $\alpha_j$ on $1$ and all other variables on $0$ is a model of $\phi'$ of cardinality~$k$.
From $\tau(x)=1$ we get $\tau(\alpha_j)=0$ for all $j$' (otherwise $\card{\tau}$ would be greater than or equal to $k+1$). Hence, $\moins\tau$ is a model of $\phi$ containing $x$ such that $\card{\moins\tau}\le k$. It remains to verify that $\moins\tau$ is minimal among the models of $\phi$. Consider any model $\sg$ of $\phi$, then $\plus\sg$ is a model of $\phi'$. By minimality of $\tau$, $\card{\plus\sg}\geq\card{\tau}$, that is, $\card{\sg}\geq\card{\moins\tau}$.
Thus, $\moins\tau$ is a minimal model of $\phi$ containing $x$ and $\card{\moins\tau}\leq k$,  hence $(\phi,k, x)$ is a positive instance of $\CMS^*(\rXOR{3})$.
\qed

\paragraph{\normalfont 4. $\CMS(\rXOR{4}) \reduc \CMS(\rXOR{3},\,\rXOR{2})$.}\ \\

Observe that $\rXOR{4}(x_1, x_2, x_3, x_4)\equiv \exists y, z : \rXOR{3}(x_1, x_2, y)\wedge \rXOR{3}(x_3, x_4, z)\wedge \rXOR{2}(y,z)$. The two fresh variables $y$ and $z$ take  complementary values, so they will together contribute a weight 1 in any case. 

\paragraph{\normalfont 5. $\CMS(\rXOR{3},\,\rXOR{2}) \reduc \CMS(\rXOR{3})$.}\ \\

Let $(\phi,x)$ be an instance of $\CMS(\rXOR{3},\,\rXOR{2})$. If $\phi$ is unsatisfiable, we map $(\phi, x)$ to a trivial negative instance of $\CMS(\rXOR{3})$, e.g.  $(\rXOR{3}(x_1,x_2,x_3), x)$.

Otherwise, we replace any constraint $\rXOR{2}(x,y)$ by $\rXOR{3}(x,y,w)$ where $w$ is a fresh variable of weight impact $N$ big enough, say bigger than the number of variables of the original formula. This assures that the cardinality-minimal models of the formula are the models of $\phi$ extended with $w=0$. The variable $w$ can be given the needed weight impact by adding the constraints $\rXOR{3}(t,t,t)\wedge 
\bigwedge_{i=1}^N \rXOR{3}(t,w,w_i)$ where $t$  and the $w_i$'s are fresh variables. 
\end{proof}

\section{Proof of the main theorem}

We prove here Theorem \ref{theorem:classification}. The classification can be visualized on Post's Lattice, see figure~\ref{fig:posts-lattice}.
The classification obeys the borders among co-clones and, as discussed in the previous section, will be obtained by a systematic exploration of the co-clones.

Observe that Theorem \ref{theorem:krom_classification}, the previously obtained classification in the Krom fragment, concerns  co-clones in the lower part of the lattice, namely every  co-clone $\cocl$ such that $\cocl\subseteq\cc{ID}{}{2}$. In the depiction of Post's Lattice in figure~\ref{fig:posts-lattice} the color coding is as follows.
The ``white'' co-clones, for which the problem $\CMS$ is trivial, are the co-clones that contain only 0-valid relations. For those ones the cardinality-minimum solution is the all-0 solution, and the answer is always ``no''. 
The ``grey'' co-clones, for which the problem $\CMS$ is decidable in polynomial time,    correspond to co-clones $\cocl$ such that either $\cocl\subseteq\cc{IE}{}{2}$ or $\cocl\subseteq\cc{ID}{}{1}$. In the first case, all relations are Horn, and therefore  there exists a unique cardinality-minimal model that can be found by unit propagation in polynomial time. In the second case,  all relations are  width-2-affine and  the tractability result follows from Theorem \ref{theorem:krom_classification}.

Finally, to obtain the complexity classification it remains to prove hardness for the ''black'' co-clones, namely $\cc{II}{}{2}$, $\cc{II}{}{1}$, $\cc{IN}{}{2}$, $\cc{IL}{}{2}$, $\cc{IL}{}{3}$, $\cc{IL}{}{1}$, $\cc{IV}{}{2}$, $\cc{IV}{}{1}$,
%
%
 and, for any $k\ge 2$, for the co-clones $\cc{IS}{k}{00}$, $\cc{IS}{k}{01}$, $\cc{IS}{k}{02}$, $\cc{IS}{k}{0}$. The ''black'' co-clone $\cc{ID}{}{2}$ is dealt with by Theorem~\ref{theorem:krom_classification}.

As we have discussed in the previous section, for each remaining co-clone $\cocl$, given one of its weak bases $B$ we will show that $\CMS(B)$ is hard. This will be done by a reduction from a known hard problem, either $\CMS(\rOR{2})$ or $\CMS(\rXOR{3})$. For example, given an instance $(\phi, x)$ of  $\CMS(\rOR{2})$, where $\phi$ is a conjunction of $\rOR{2}$-clauses, we will build a $B$-formula $\phi'$ such that $x$ belongs to a cardinality-minimal model of $\phi$ if and only if $x$ belongs to a cardinality-minimal model of $\phi'$. The  construction of $\phi'$ is obtained by a local replacement of each clause of $\phi$ by an equivalent $B$-formula. Usually this requires introduction of fresh (existentially quantified) variables. Some of these additional variables will be \emph{frozen}, which means that their truth value is the same in all models, and thus their contribution to the weight of any model is fixed. In order to be sure that the weight of the non-frozen additional variables will not 
compromise the cardinality-minimal models, the trick is to neutralize them by adding for each such variable $y$, another one $y'$ and to force them to take complementary values, i.e. $y\ne y'$. Thus the weight contribution of $y$ and $y'$ together will always be 1 in all models. Sometimes, to do so we will have to express the truth value 0. When this is not possible directly, the idea is to replace $0$ by a variable $f$, and then introduce a big number of copies of $f$ such that any cardinality-minimal model of the formula has to set $f$ to $0$.

In the following when we speak about \emph{the} minimal weak base of a co-clone we mean the weak base from Lagerkvist \cite{Lagerkvist14}, given in table~\ref{table:weak_bases}. In the proofs, we will always restate the exact definition of the corresponding weak base, and, where convenient, also its matrix representation.

The following proposition provides the missing hardness results.

\begin{proposition} \label{prop:all}
Let $\Gamma$ be a constraint language. Then $\CMS(\Gamma)$ is $\THETA{2}$-hard 
if $\clos{\Gamma} \in \{\cc{II}{}{2}, \cc{II}{}{1}, \cc{IN}{}{2}, \cc{IL}{}{2}, \cc{IL}{}{3}, \cc{IL}{}{1}, \cc{IV}{}{2}, \cc{IV}{}{1}, \cc{IS}{k}{00}, \cc{IS}{k}{01}, \cc{IS}{k}{02}, \cc{IS}{k}{0}\}$, for any $k \geq 2$.
\end{proposition}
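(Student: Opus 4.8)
The plan is to establish $\THETA_2$-hardness for each of the listed co-clones $\cocl$ by exhibiting, for the minimal weak base $B$ of $\cocl$ drawn from Table~\ref{table:weak_bases}, a polynomial many-one reduction to $\CMS(B)$ from one of the two cornerstone problems $\CMS(\rOR{2})$ or $\CMS(\rXOR{3})$, whose hardness is Lemma~\ref{lemma:cardminsat_completeness_or} and Lemma~\ref{lemma:cardminsat_completeness_xor}. By Proposition~\ref{thm:weak_base_method}, hardness of $\CMS(B)$ propagates to $\CMS(\Gamma)$ for every $\Gamma$ with $\clos{\Gamma}=\cocl$, which is exactly what the statement asserts. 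So the whole proposition reduces to producing, co-clone by co-clone, a \emph{cardinality-faithful} local replacement: each source clause $\rOR{2}(u,v)$ (resp.\ $\rXOR{3}(u,v,w)$) is rewritten by a fixed $B$-gadget so that the set of models, and in particular the cardinality-minimal ones, is preserved up to a controllable weight shift. First I would group the targets by their algebraic nature: the affine co-clones $\cc{IL}{}{2},\cc{IL}{}{3},\cc{IL}{}{1}$ I would reduce from $\CMS(\rXOR{3})$, since their weak bases are built from parity relations and the machinery of the Lemma~\ref{lemma:cardminsat_completeness_xor} proof (boosting weights via copy variables forced equal through a frozen $t$) transfers directly; the remaining co-clones $\cc{II}{}{2},\cc{II}{}{1},\cc{IN}{}{2},\cc{IV}{}{2},\cc{IV}{}{1}$ and the whole $\cc{IS}{k}{\ast}$ family I would reduce from $\CMS(\rOR{2})$.

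\textbf{The gadget-design principle.}
For each weak base $B$ the key step is to express the source relation inside $\closNN{B}$ (conjunction of $B$-constraints, no equality, no existential projection hidden in the final count), but since reductions between $\CMS$ instances may legitimately introduce fresh variables, the real constraint is cardinality control rather than pure $\closNN{\cdot}$-definability. The guiding trick, already rehearsed in Section~\ref{sec:method} and in the proof of Lemma~\ref{lemma:cardminsat_completeness_xor}, is twofold. First, \emph{neutralization}: whenever a gadget introduces an auxiliary non-frozen variable $y$ whose value is not determined, I pair it with a companion $y'$ and force $y\neq y'$, so that $y,y'$ together contribute a constant weight $1$ to every model and hence do not perturb which models are minimal. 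Second, \emph{weight-boosting with a forced constant}: when the target relation is not $0$-valid, or when I must pin an auxiliary ``global'' variable $f$ to $0$ (to simulate the constant that a non-$0$-valid weak base cannot directly provide), I attach $N$ fresh copies of $f$ that are all forced equal to $f$, with $N$ larger than the number of original variables; a cardinality-minimal model is then compelled to set $f=0$, exactly as in reduction step~1 of Lemma~\ref{lemma:cardminsat_completeness_xor}. These two devices, applied clause-locally, give in each case the clean equivalence ``$x$ is in a minimal model of the source $\iff x$ is in a minimal model of the $B$-formula,'' possibly after the $\CMS$-vs-$\CMS^*$ bookkeeping already developed for the XOR relations.

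\textbf{Where the difficulty concentrates.}
The routine part is the affine block and the $0$-valid $\cc{IS}{k}{0}$ end of the $\cc{IS}{k}{\ast}$ towers, where the OR/NAND structure of the weak bases lines up naturally with $\rOR{2}$. The main obstacle I expect is the uniform treatment of the \emph{infinite families} $\cc{IS}{k}{00},\cc{IS}{k}{01},\cc{IS}{k}{02},\cc{IS}{k}{0}$ across all $k\ge 2$ together with the complementive/non-$0$-valid co-clones $\cc{IN}{}{2},\cc{IV}{}{1},\cc{II}{}{1}$. For the $\cc{IS}{k}{\ast}$ families the weak bases grow with $k$ (high-arity OR-type relations with parametrized numbers of fixed columns), so the gadget encoding $\rOR{2}(u,v)$ must use the spare coordinates to absorb the extra arity while keeping their contribution frozen or neutralized; I would verify that the extra coordinates can always be forced to a constant (or paired off) independently of $k$, so that one parametrized gadget works uniformly. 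For the non-$0$-valid and complementive co-clones the challenge is that I cannot assume the all-zero assignment is available, so establishing that the target minimal model really corresponds to a minimal source model requires the forced-constant boosting above, and I must double-check that no unintended low-weight model (for instance the complementive ``flip'' of an intended model) undercuts the minimal one; handling that symmetry cleanly, via the $f$-copy weight argument with $N>n$, is the delicate point. Once each co-clone's gadget is validated and the weight shift computed, the proposition follows by assembling the per-co-clone reductions and invoking Proposition~\ref{thm:weak_base_method}.
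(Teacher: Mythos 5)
Your overall architecture coincides with the paper's: for each listed co-clone, take the minimal weak base $B$ from Table~\ref{table:weak_bases}, reduce $\CMS(\rOR{2})$ or $\CMS(\rXOR{3})$ to $\CMS(B)$ by clause-local gadgets (with exactly your grouping: the affine co-clones from XOR, the rest from OR), and propagate via Proposition~\ref{thm:weak_base_method}; your neutralization and weight-boosting devices are also the paper's. However, there is a genuine gap at exactly the spot you flag for later verification, and that verification fails. For $\cc{IS}{k}{0}$ the weak base is $R = \orn{}{k}(x_1,\ldots,x_k)\wedge\T(x_{k+1})$, and this relation is upward closed: raising any coordinate of a satisfying tuple from $0$ to $1$ keeps it satisfying, hence so is every $R$-formula. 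This kills all three of your proposed tools at once. No spare coordinate can be forced to the constant $0$ (the all-ones assignment is always a model); ``pairing off'' needs a disequality, which is not upward closed and thus not expressible by $R$-constraints; and weight-boosting cannot work either, because in an upward-closed formula any model with $f=0$ yields, by flipping $f$ up, a model with $f=1$ of weight only one larger, so models with $f=1$ can never be penalized by more than $1$. Worse, for OR-type gadgets the problem is not merely one of weight: a spare coordinate $z$ in $\orn{}{k}(u,v,z,\ldots)$ set to $1$ satisfies the constraint even when $u=v=0$, so weight-neutralizing $z$ would still destroy logical correctness of the encoding of $\rOR{2}(u,v)$.

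The missing idea is variable identification: constraints may repeat variables, so one encodes $\rOR{2}(u,v)$ simply as $R(u,v,v,\ldots,v,t)$, collapsing the $k$-ary OR to a binary one with no spare coordinates at all. This is precisely what the paper does uniformly for all four families $\cc{IS}{k}{00}$, $\cc{IS}{k}{01}$, $\cc{IS}{k}{02}$, $\cc{IS}{k}{0}$; repetition makes the uniformity in $k$ that worried you trivial. (For $\cc{IS}{k}{00}$ and $\cc{IS}{k}{02}$ your forced-constant idea would also work, since a frozen $\F$-coordinate is available, and for $\cc{IS}{k}{01}$ the non-frozen coordinate can be handled by the observation that any model with $f=1$ remains a model after flipping $f$ to $0$ --- the same flip argument the paper uses for $\cc{IV}{}{1}$ in place of your heavier boosting --- but for $\cc{IS}{k}{0}$ repetition is essentially the only option.) With coordinate repetition added to your toolkit, the remaining per-co-clone gadgets can be completed along the lines you sketch, matching the paper's proof.
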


The following lemmas provide the proof for Proposition~\ref{prop:all}, dealing with the different cases.

\begin{lemma} \label{prop:II2}
Let $\clos{\Gamma} = \cc{II}{}{2}$. Then $\CMS(\Gamma)$ is $\THETA{2}$-hard.
\end{lemma}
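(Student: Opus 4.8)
The plan is to invoke the weak-base method of Proposition~\ref{thm:weak_base_method}. According to Table~\ref{table:weak_bases}, a minimal weak base of $\cc{II}{}{2}$ (which is the co-clone of \emph{all} Boolean relations, so that the statement covers arbitrary $\Gamma$) is the $6$-ary relation $\Rddd = \{100011, 010101, 001110\}$. Reading it coordinate-wise, $\Rddd(a_1,\dots,a_6)$ is exactly $\text{1-in-3}(a_1,a_2,a_3) \wedge (a_1 \ne a_4) \wedge (a_2 \ne a_5) \wedge (a_3 \ne a_6)$, i.e. it couples the three ``OR-positions'' with their complements. Hence it suffices to show that $\CMS(\Rddd)$ is $\THETA 2$-hard: Proposition~\ref{thm:weak_base_method} then transfers this to every $\Gamma$ with $\clos{\Gamma} = \cc{II}{}{2}$. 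I would obtain the hardness of $\CMS(\Rddd)$ by a reduction from $\CMS(\rOR{2})$, which is $\THETA 2$-hard by Lemma~\ref{lemma:cardminsat_completeness_or}.

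The heart of the reduction is a local clause replacement based on one projection identity. Using the convention $a_4 = \overline{a_1},\, a_5 = \overline{a_2},\, a_6 = \overline{a_3}$ with $\text{1-in-3}$ on $(a_1,a_2,a_3)$, one checks directly from the three tuples of $\Rddd$ that the constraint $\Rddd(u, v, w, x, y, w')$, projected onto $(x,y)$, has solution set $\{01, 10, 11\}$, i.e. it expresses $x \vee y$ while forcing $u = \overline{x}$, $v = \overline{y}$, $w' = \overline{w}$. So, given an instance $(\phi, x)$ of $\CMS(\rOR{2})$ with $\phi = \bigwedge_{i=1}^m \rOR{2}(x_i, y_i)$, I would replace each clause $\rOR{2}(x_i, y_i)$ by $\Rddd(u_i, v_i, w_i, x_i, y_i, w_i')$ on fresh variables $u_i, v_i, w_i, w_i'$. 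The projection identity guarantees that the models of the resulting formula, restricted to $\var(\phi)$, are exactly the models of $\phi$.

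The main obstacle, as throughout the paper, is preserving cardinality-minimality rather than mere satisfiability. The forced values $u_i = \overline{x_i}$ and $v_i = \overline{y_i}$ are \emph{not} frozen, and their weight is anti-correlated with the original variables, so a naive replacement would reverse the direction of minimization. I would neutralize them exactly as in Section~\ref{sec:method}: since $\Rddd$ forces position~$4$ to be the negation of position~$1$, the extra constraint $\Rddd(u_i, b_i, c_i, u_i', b_i', c_i')$ (on fresh $u_i', b_i, c_i, b_i', c_i'$) forces a fresh copy $u_i' = \overline{u_i} = x_i$, whence $u_i + u_i' = 1$ in every model; an analogous constraint neutralizes $v_i$. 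Because $\Rddd$ couples positions $(2,5)$ and $(3,6)$ complementarily as well, the pairs $(w_i,w_i')$, $(b_i,b_i')$, $(c_i,c_i')$, etc. each contribute weight exactly $1$, so the total weight of all fresh variables of one clause gadget is a fixed constant (here $7$), independent of the chosen model.

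Consequently, if $\sigma'$ is any model of the constructed formula $\phi'$ and $\sigma$ its restriction to $\var(\phi)$, then $\card{\sigma'} = \card{\sigma} + 7m$, and conversely every model of $\phi$ extends to a model of $\phi'$. The cardinality-minimal models of $\phi'$ are therefore precisely the extensions of the cardinality-minimal models of $\phi$, so the distinguished atom $x \in \var(\phi)$ is true in a minimal model of $\phi'$ iff it is true in a minimal model of $\phi$. This gives $\CMS(\rOR{2}) \reduc \CMS(\Rddd)$ and hence the claimed $\THETA 2$-hardness. The only genuinely delicate point to verify in full is the constancy of the fresh-variable weight, i.e. that no non-frozen auxiliary variable escapes neutralization; the complementary-pair structure built into $\Rddd$ makes this routine once the two neutralizing constraints are added.
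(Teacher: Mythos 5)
Your reduction $\CMS(\rOR{2}) \reduc \CMS(\Rddd)$ is itself correct: the projection identity, the complementary-pair neutralization of $u_i$ and $v_i$, and the constant fresh weight of $7$ per clause gadget all check out. The genuine gap is in the step that transfers this hardness to arbitrary $\Gamma$ with $\clos{\Gamma} = \cc{II}{}{2}$. You read Table~\ref{table:weak_bases} as saying that the $6$-ary relation $\Rddd$ is a minimal weak base of $\cc{II}{}{2}$, but the table entry is the $8$-ary relation $R_{\cc{II}{}{2}}(x_1,\dots,x_8) = \Rddd(x_1,\dots,x_6) \wedge \F(x_7) \wedge \T(x_8)$; the two frozen coordinates you dropped are not decoration. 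Proposition~\ref{thm:weak_base_method} applies only to a minimal weak base, and $\Rddd$ alone is not even a weak base: a weak base must belong to $\closNex{\Gamma}$ for \emph{every} $\Gamma$ with $\clos{\Gamma} = \cc{II}{}{2}$, and this already fails for $\Gamma = \{R_{\cc{II}{}{2}}\}$. Indeed, any $R_{\cc{II}{}{2}}$-constraint freezes the variables placed in its last two positions to $0$ and $1$, so every $\{R_{\cc{II}{}{2}},=\}$-formula over six variables that contains such a constraint has at least two frozen variables, whereas no coordinate of $\Rddd$ is frozen (and equality constraints alone cannot define $\Rddd$). Hence $\Rddd \notin \closNex{\{R_{\cc{II}{}{2}}\}}$, so your argument proves nothing about $\CMS(\Gamma)$ for this particular $\Gamma$ --- precisely the kind of language the lemma must cover.

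The repair is short and lands you on the paper's own proof: perform the same local replacement with $R_{\cc{II}{}{2}}$ instead of $\Rddd$, filling positions $7$ and $8$ of every constraint with two \emph{global} fresh variables $f$ and $t$. These are frozen to $0$ and $1$, so they contribute weight exactly $1$ to every model, and your counting argument survives unchanged; the paper neutralizes the non-frozen auxiliaries with constraints such as $R_{\cc{II}{}{2}}(a_i,a_i',f,a_i',a_i,t,f,t)$, which expresses $a_i \neq a_i'$ together with $\F(f)$ and $\T(t)$. With that modification your proof becomes essentially the paper's; without it, the lemma as stated is not established.
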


\begin{proof}
 Let $R_{\cc{II}{}{2}}$ be the minimal weak base of $\cc{II}{}{2}$, that is,
$\Rddd(x_1,\ldots,x_6) \wedge \F(x_7) \wedge \T(x_8)$, where 
$\Rddd=\{100011, 010101, 001110\}$. 
The matrix representation is as follows. 
%
\begin{center}
\mbox{
\begin{array}[t]{c}
R_{\cc{II}{}{2}} =  
\left(\begin{array}{c} 10001101 \\ 01010101 \\00111001 \end{array}\right)
\end{array}
}
\end{center}
We  show that $\CMS(\rOR{2}) \reduc \CMS(R_{\cc{II}{}{2}})$. Then the result follows from  Lemma \ref{lemma:cardminsat_completeness_or} and Proposition \ref{thm:weak_base_method}.

Let $(\phi, x)$ be an instance  of $\CMS(\rOR{2})$, where $\phi=\bigwedge_{i=1}^p (x_i^1\lor x_i^2)$. Let $\{a_i, b_i,  c_i, d_i, 
a'_i,  b'_i, c'_i,  d'_i\mid i=1\ldots p\}\cup \{t, f\}$ be fresh variables.
For each constraint $(x_i^1\lor x_i^2)$ we build the constraint 
$R_{\cc{II}{}{2}}(a_i, b_i, c_i, d_i, x_i^1, x_i^2,f, t)$.
Observe that $\rOR{2}(x_i^1,x_i^2 )\equiv$
$$\exists   a_i,b_i,c_i,d_i,f, t\  R_{\cc{II}{}{2}}(a_i, b_i, c_i, d_i,  x_i^1,x_i^2, f, t).$$
The existential variables are uniquely determined. The variables $f$ and $t$ are frozen, while the values of $a_i, b_i, c_i, d_i$  are not. Nevertheless their values can be neutralized by the introduction of additional fresh variables $a_i', b_i', c_i', d_i'$ who are forced to take complementary values. In the case of $a_i$ and $a_i'$ this can be achieved by the constraint
$R_{\cc{II}{}{2}}(a_i, a'_i, f, a_i', a_i, t, f, t)$.
Analogous constraints are added for $b_i, b_i'$, $c_i, c_i'$ and $d_i, d_i'$. 
 
Consider $\phi '$ the conjunction of all these constraints.
Observe that the formulas $\phi$ and $\phi'$  are equivalent when quantifying on the fresh variables. Moreover, the models of $\phi$ and $\phi'$ are in one-to-one correspondence. Each model $\sigma$ of $\phi$ can be extended to a model $\sigma'$ of  $\phi'$ whose weight is $\vert \sigma'\vert = \vert \sigma \vert +4p+1$. 
Consequently, $x$ belongs to a cardinality-minimal model of $\phi$ if and only if $x$ belongs to a cardinality-minimal model of $\phi'$, thus concluding the proof.

\end{proof}

\begin{lemma}
Let $\clos{\Gamma} =\cc{II}{}{1}$. Then 
$\CMS(\Gamma)$ is $\THETA{2}$-hard.
\end{lemma}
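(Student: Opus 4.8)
The plan is to apply the weak base method (Proposition~\ref{thm:weak_base_method}) exactly as in Lemma~\ref{prop:II2}: I would fix the minimal weak base $R_{\cc{II}{}{1}}$ of $\cc{II}{}{1}$ from Table~\ref{table:weak_bases}, prove $\CMS(\rOR{2})\reduc\CMS(R_{\cc{II}{}{1}})$, and then conclude by Lemma~\ref{lemma:cardminsat_completeness_or}. Since $\rOR{2}$ is $1$-valid it lies in $\cc{II}{}{1}=\clos{R_{\cc{II}{}{1}}}$, so a local replacement of each clause $(x_i^1\lor x_i^2)$ by an $R_{\cc{II}{}{1}}$-gadget is possible in principle; the work is to carry it out while keeping the weight bookkeeping under control.

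The essential difference with the $\cc{II}{}{2}$ case is that $\cc{II}{}{1}$ contains only $1$-valid relations. Consequently the weak base retains a frozen constant-$1$ coordinate (playing the role of the old $t$) but, crucially, has no constant-$0$ coordinate: the relation $\F$ is not $1$-valid and is therefore unavailable. This matters twice. First, collapsing a constraint $R_{\cc{II}{}{1}}(\dots,f,t)$ down to $\rOR{2}(x_i^1,x_i^2)$ requires feeding a $0$ into the coordinate previously occupied by $\F$. Second, the neutralization of the non-frozen auxiliary variables (the analogues of $a_i,b_i,c_i,d_i$), again needed because their forced values, hence their weight, depend on the clause's satisfying assignment, relies on forcing complementary pairs $y\ne y'$; but $\ne$ is itself not $1$-valid and so cannot be expressed directly. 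I would resolve both issues with the device announced in Section~\ref{sec:method}: replace the missing $0$ by a single fresh global variable $f$, and use $f$ wherever the $\cc{II}{}{2}$ construction used $\F$. In particular the complementation gadget becomes an $R_{\cc{II}{}{1}}$-gadget on $(y,y',f,t)$ that genuinely forces $y'=\neg y$ only in the regime $f=0$.

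To make $f=0$ hold in every cardinality-minimal model I would boost $f$: introduce $N$ fresh copies $f_1,\dots,f_N$, tie each one to $f$ by an equality gadget built from $R_{\cc{II}{}{1}}$ (equality, unlike $\ne$, is $1$-valid, hence expressible from the weak base, cf.\ Example~\ref{ex:example1}), and choose $N$ strictly larger than the number $n$ of variables of $\phi$ plus the fixed number of non-copy auxiliary variables. Any model with $f=1$ then sets every $f_j=1$ and has weight at least $N+1$, whereas the canonical extension of a model of $\phi$ with $f=0$ sets all copies to $0$ and has weight at most $n+O(p)<N$; thus $f=0$ in every cardinality-minimal model. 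Restricted to the regime $f=0$ the gadgets behave exactly as in Lemma~\ref{prop:II2}: each model $\sigma$ of $\phi$ extends uniquely to a model $\sigma'$ of $\phi'$ with $f=0$, the complementary pairs contribute a fixed total weight, and $\card{\sigma'}=\card{\sigma}+c$ for a constant $c$ independent of $\sigma$; conversely every cardinality-minimal model $\tau$ of $\phi'$ has $f=0$ and restricts to a model of $\phi$. This gives the required equivalence: $x$ is true in a cardinality-minimal model of $\phi$ iff $x$ is true in a cardinality-minimal model of $\phi'$.

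I expect the main obstacle to be the second use of $f$. Checking that $f$ is forced to $0$ in cardinality-minimal models is routine once the copies are in place, but one must verify carefully that the complementation gadgets, which only impose $y'=\neg y$ under $f=0$, still determine the auxiliary variables uniquely and contribute a constant total weight across all $f=0$ models, so that the offset $c$ really is constant and the correspondence between cardinality-minimal models is exact. This amounts to inspecting the concrete tuples of $R_{\cc{II}{}{1}}$ and confirming that, with the constant-$1$ coordinate frozen and $f$ set to $0$, each gadget restricts precisely to the intended Boolean relation ($\rOR{2}$, equality, or $\ne$).
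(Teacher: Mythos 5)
Your plan follows the paper's proof almost step for step: the paper likewise reduces from $\CMS(\rOR{2})$ using the weak base $R_{\cc{II}{}{1}}(x_1,x_2,x_3,x_4)=(x_1\vee x_2)\wedge(x_1x_2\leftrightarrow x_3)\wedge\T(x_4)$, encodes each clause as $R_{\cc{II}{}{1}}(x_i^1,x_i^2,y_i,t)$ with a per-clause auxiliary $y_i$, neutralizes $y_i$ with a partner $z_i$ via $R_{\cc{II}{}{1}}(y_i,z_i,f,t)$ (which forces $z_i=\neg y_i$ exactly when $f=0$), and then makes $f=0$ mandatory in every cardinality-minimal model by a weight-boosting family of constraints with $N\ge p+n$. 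The only real divergence is that boosting gadget: the paper adds the $N$ constraints $R_{\cc{II}{}{1}}(f_j^1,f_j^2,f,t)$, whose pairs cost $2N$ when $f=1$ but only $N$ when $f=0$, whereas you propose equality copies $f_j=f$. Your variant can be made to work, but the justification you give for it is faulty and is the one point needing repair: ``equality is $1$-valid, hence expressible from the weak base'' is not a valid inference --- $1$-validity only places a relation in the co-clone $\clos{R_{\cc{II}{}{1}}}$, and membership in $\clos{\Gamma}$ does not imply membership in $\closNN{\Gamma}$ (that distinction is the whole reason the paper needs weak bases). Concretely, binary equality is \emph{not} in $\closNN{R_{\cc{II}{}{1}}}$ as a relation over two variables: every $R_{\cc{II}{}{1}}$-constraint carries the conjunct $\T(x_4)$, so any $R_{\cc{II}{}{1}}$-formula over $\{f,f_j\}$ forces one of these two variables to $1$ and can never admit the tuple $(0,0)$. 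The repair is immediate using the frozen $t$ you already have in scope: $R_{\cc{II}{}{1}}(t,f,f_j,t)\equiv\T(t)\wedge(f\leftrightarrow f_j)$, so equality \emph{relative to $t$} is expressible, and with this gadget your accounting (any model with $f=1$ pays at least $N$ extra; the canonical $f=0$ extension of a model $\sigma$ of $\phi$ has weight $|\sigma|+p+1$) goes through exactly as in the paper. In short: same approach as the paper, one harmless gadget variant, and one expressibility claim that must be corrected as above.
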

\begin{proof}
 Let $R_{\cc{II}{}{1}}$ be the minimal weak base of $\cc{II}{}{1}$, that is, $R_{\cc{II}{}{1}}(x_1, x_2, x_3, x_4)=(x_1 \vee x_2) \wedge (x_1x_2 \leftrightarrow x_3) \wedge \T(x_4)$. 
The matrix representation is as follows.
\begin{center}
\mbox{
\begin{array}[t]{c}
R_{\cc{II}{}{1}}=
\left(\begin{array}{c} 0101 \\ 1001 \\ 1111 \end{array}\right)  
\end{array}
}
\end{center}
We  show that $\CMS(\rOR{2}) \reduc \CMS(R_{\cc{II}{}{1}})$. Then the result follows from  Lemma \ref{lemma:cardminsat_completeness_or} and Proposition \ref{thm:weak_base_method}.

Let $(\phi, x)$ be an instance  of $\CMS(\rOR{2})$, where $\phi=\bigwedge_{i=1}^p (x_i^1\lor x_i^2)$.
For each constraint $(x_i^1\lor x_i^2)$ we build the constraint 
$R_{\cc{II}{}{1}}( x_i^1, x_i^2, y_i, t)$. Observe that 
$(x_i^1\lor x_i^2)\equiv \exists y_i\exists  t R_{\cc{II}{}{1}}( x_i^1, x_i^2, y_i, t)$. The variable $t$ is frozen to 1. The variable $y_i$ is not, but we can  neutralize its weight by adding the constraint $R_{\cc{II}{}{1}}(y_i, z_i, f, t)$, which will force $z_i\equiv\neg y_i$ as soon as $f$ is evaluated to 0.
We force $f$ to be evaluated to 0 in any cardinality-minimal model by adding the constraints $R_{\cc{II}{}{1}}(f^1_j, f^2_j, f, t)$, for $j=1,\ldots, N$.
If $f=1$, these constraints force all the $f^1_j, f^2_j$ to 1, that is, $1+2N$ variables. If $f=0$, one of the $f^1_j, f^2_j$ is forced to 1 and the other to 0, that is, the weight contribution is only $N$.

Consider $\phi '$ the conjunction of all these constraints.
Observe that the formulas $\phi$ and $\phi'$ are equivalent when quantifying on the fresh variables.  Moreover, the models of $\phi$ and  the models of $\phi'$ in which $f=0$ are in one-to-one correspondence. Each model $\sigma$ of $\phi$ can be extended to a model $\sigma'$ of  $\phi'$ with $\sigma'(f)=0$,  whose weight is $\vert \sigma'\vert = \vert \sigma \vert +p+N+1$.

Observe that $\phi$ is always satisfiable and therefore, by the above observation, $\phi'$ always admits a model with $f=0$.
Moreover, the models of $\phi'$ in which $f=1$ are of cardinality at least $2N+2$, while the models of $\phi'$ in which $f=0$ are   of cardinality at most $n+p+N+1$, where $n$ is the number of variables of $\phi$.
Now, if we choose $N$ big enough, e.g. $N\ge p+n$, we ensure that an assignment with $f=1$ can never be a cardinality-minimal model. Consequently, putting all together shows that  $x$ belongs to a cardinality-minimal model of $\phi$ if and only if $x$ belongs to a cardinality-minimal model of $\phi'$, thus concluding the proof.
\end{proof}

\begin{lemma}
Let $\clos{\Gamma} =\cc{IN}{}{2}$. Then 
$\CMS(\Gamma)$ is $\THETA{2}$-hard.
\end{lemma}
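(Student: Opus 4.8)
The plan is to apply the weak-base method of Proposition~\ref{thm:weak_base_method}: I fix the minimal weak base $R_{\cc{IN}{}{2}}$ of the complementive co-clone $\cc{IN}{}{2}$ listed in Table~\ref{table:weak_bases} (restating its matrix representation), and show that $\CMS(R_{\cc{IN}{}{2}})$ is $\THETA{2}$-hard; hardness of $\CMS(\Gamma)$ for every $\Gamma$ with $\clos{\Gamma}=\cc{IN}{}{2}$ then follows at once, since $R_{\cc{IN}{}{2}}$ is a minimal (hence irredundant) weak base and therefore $R_{\cc{IN}{}{2}}\in\closNN{\Gamma}$. The crucial structural observation is that $\cc{IN}{}{2}$ is exactly the co-clone of \emph{complementive} relations: the model set of any $\cc{IN}{}{2}$-formula is closed under bitwise complementation, so no variable can be frozen to a fixed truth value, and the constant-forcing gadgets used in the preceding lemmas for $\cc{II}{}{2}$ and $\cc{II}{}{1}$ are unavailable. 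In particular $\rOR{2}\notin\cc{IN}{}{2}$, so $\rOR{2}$ is not locally expressible from $R_{\cc{IN}{}{2}}$; consequently I would reduce not from $\CMS(\rOR{2})$ but from $\CMS(\rNAE{3})$, which is complementive, lies in $\cc{IN}{}{2}$, and was shown $\THETA{2}$-hard in the first reduction of the proof of Lemma~\ref{lemma:cardminsat_completeness_xor}.

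The key steps: given an instance $(\phi,x)$ of $\CMS(\rNAE{3})$ with $\phi=\bigwedge_i \rNAE{3}(x_i,y_i,z_i)$, I would build $\phi'$ by local replacement, writing each $\rNAE{3}(x_i,y_i,z_i)$ as $\exists(\text{fresh})\,R_{\cc{IN}{}{2}}(\dots)$ on the original variables together with fresh variables (possible precisely because $\rNAE{3}\in\cc{IN}{}{2}=\clos{R_{\cc{IN}{}{2}}}$). The fresh variables that are uniquely determined are harmless, but those that are not must be \emph{neutralized}: for each such variable $y$ I add a partner $y'$ and force $y\neq y'$, so the pair $\{y,y'\}$ contributes a fixed weight $1$ to every model. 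Because $\cc{IN}{}{2}$ cannot directly express the truth value $0$ needed to set up these gadgets, I would follow the device announced in the proof-of-the-main-theorem section: replace $0$ by a single global variable $f$ and attach $N$ complementive copies of $f$, with $N$ larger than $\var(\phi)$, so that any model with $f=1$ is strictly heavier than its complement (which has $f=0$); hence every cardinality-minimal model of $\phi'$ satisfies $f=0$, exactly as in step~1 of Lemma~\ref{lemma:cardminsat_completeness_xor}. I would then argue that the models of $\phi$ and the models of $\phi'$ with $f=0$ are in bijection, with $\card{\sigma'}=\card{\sigma}+c$ for a fixed additive constant $c$ depending only on $p$ and $N$, whence $x$ is true in a cardinality-minimal model of $\phi$ iff it is true in a cardinality-minimal model of $\phi'$.

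The main obstacle I anticipate is the weight bookkeeping under complementation. Since the whole formula is complementive, every model is paired with its complement and the two have complementary weights, so I must guarantee simultaneously that (i) the copies-of-$f$ gadget uniformly penalizes the $f=1$ branch below any $f=0$ model, and (ii) every neutralizing $\neq$-pair and every uniquely-determined fresh variable contributes the \emph{same} additive constant to all relevant models, so that cardinality comparisons between $\phi$-models transfer verbatim to $\phi'$-models. The delicate point is therefore to read off from the concrete matrix of $R_{\cc{IN}{}{2}}$ three expressive features at once — that it projects onto $\rNAE{3}$, that it supports a $\neq$-gadget for neutralization, and that it supports the $f$-copy gadget in place of the constant $0$; once such a representation is exhibited, the correctness argument parallels the earlier co-clone lemmas and the cardinality accounting of Lemma~\ref{lemma:cardminsat_completeness_xor}.
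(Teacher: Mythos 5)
Your overall skeleton (weak-base method via Proposition~\ref{thm:weak_base_method}, local replacement, neutralization, cardinality bookkeeping) matches the paper's, but two of your claims are genuinely wrong, and the step you yourself flag as ``delicate'' is exactly the step you never carry out. First, the inference ``$\rOR{2}\notin\cc{IN}{}{2}$, hence one cannot reduce from $\CMS(\rOR{2})$'' is a non sequitur, and in fact the paper's proof is precisely a reduction $\CMS(\rOR{2})\reduc\CMS(R_{\cc{IN}{}{2}})$: a many-one reduction between $\CMS$ problems does not require a pp-definition of the source relation. The paper observes $\rOR{2}(x,y)\equiv\exists a,b,c,d\ R_{\cc{IN}{}{2}}(0,a,b,c,d,x,y,1)$, replaces the constants $0,1$ by global variables $f,t$, expresses $f\neq t$ by $R_{\cc{IN}{}{2}}(f,f,f,f,t,t,t,t)$, adds $N$ copies $R_{\cc{IN}{}{2}}(f_j,f_j,f_j,f_j,t,t,t,t)$ so that every cardinality-minimal model has $f=0$ and $t=1$, and then mimics the $\cc{II}{}{2}$ reduction of Lemma~\ref{prop:II2}. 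So complementivity indeed blocks \emph{freezing} constants, but not \emph{cardinality-forcing} them, which is all the reduction needs --- your own $f$-copies device concedes this point.

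Second, and more damaging to your own route: the claim that ``fresh variables that are uniquely determined are harmless'' is false, and following it would make your reduction fail. Uniquely determined is not the same as frozen: a witness whose value is a function of the original variables generally has a weight that \emph{varies} with the original assignment, which distorts cardinality comparisons. Concretely, the natural representation of your source relation is $\rNAE{3}(x,y,z)\equiv\exists u_1,\dots,u_5\,R_{\cc{IN}{}{2}}(u_1,x,y,u_2,z,u_3,u_4,u_5)$ (the projection of $R_{\cc{IN}{}{2}}$ onto coordinates $2,3,5$ is exactly $\rNAE{3}$, with the remaining five coordinates uniquely determined); but every tuple of $R_{\cc{IN}{}{2}}$ has weight exactly $4$, so the witnesses contribute $4-(x+y+z)$, which actually rewards models setting many original variables to $1$. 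One must therefore neutralize the uniquely determined witnesses as well --- exactly what the paper does for its (also uniquely determined) witnesses in the $\cc{II}{}{2}$ and $\cc{IL}{}{2}$ proofs --- using $\neq$-partners; here $u\neq u'$ is directly expressible as $R_{\cc{IN}{}{2}}(u,u,u,u,u',u',u',u')$, after which each clause contributes the constant $5$ and your NAE route goes through with no $f$-copies gadget at all, since $\CMS(\rNAE{3})$ is indeed $\THETA{2}$-hard by step~1 of the proof of Lemma~\ref{lemma:cardminsat_completeness_xor} together with Lemma~\ref{lemma:cardminsat_completeness_or}. In short, your alternative starting point is viable (and, once repaired, arguably cleaner than the paper's), but as written the proposal both misdiagnoses why a detour from $\rOR{2}$ would be needed and omits, indeed breaks, the weight accounting that would make the detour correct.
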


\begin{proof}
Let  $R_{\cc{IN}{}{2}}$ be the  minimal weak base of $ \cc{IN}{}{2}$, that is, 
 $R_{\cc{IN}{}{2}}=\evenn{4}{4 \neq}(x_1,\ldots,x_8) \wedge x_1x_4
\leftrightarrow x_2x_3$. The matrix representation is as follows.

\begin{center}
\mbox{
\begin{array}[t]{c}
R_{\cc{IN}{}{2}} = 
\left(\begin{array}{c} 00001111 \\ 00110011 \\ 01010101 \\ 10101010 \\ 11001100 \\ 11110000 \end{array}\right)
\end{array}
}
\end{center}

We  show that $\CMS(\rOR{2}) \reduc \CMS(R_{\cc{IN}{}{2}})$. Then the result follows from  Lemma \ref{lemma:cardminsat_completeness_or} and Proposition \ref{thm:weak_base_method}.
Observe that $\rOR{2}(x_i^1,x_i^2 )\equiv$
$$\exists   a_i,b_i,c_i,d_i \  R_{\cc{IN}{}{2}}(0, a_i, b_i, c_i, d_i,  x_i^1,x_i^2, 1).$$
In this co-clone we can express $f\ne t$, but not $f=0$ and $t=1$. The idea is to use $f$ and $t$ in place of 0 and 1 as global variables (that is, the same for all constraints) and to force them to take the appropriate values in all cardinality-minimal models. This can be done by adding the constraint $R_{\cc{IN}{}{2}}(f,f,f,f,t,t,t,t)$, which expresses $f\ne t$,  and by adding a number $N$  big enough of constraints $R_{\cc{IN}{}{2}}(f_j,f_j,f_j,f_j,t,t,t,t)$, where the $f_j$, for  $j=1,\ldots , N$, are fresh variables. 

In choosing $N$ bigger than the number of variables of the original formula,
we can assure that in any cardinality-minimal model $f$ is assigned 0 and $t$ is assigned 1. In using this trick we can mimic the reduction proposed in the proof of Proposition  \ref{prop:II2} and  hence transform an $\rOR{2}$-formula into an $R_{\cc{IN}{}{2}}$-formula in preserving the cardinality-minimal models, thus providing the reduction from $\CMS(\rOR{2})$ to $\CMS(R_{\cc{IN}{}{2}})$. 
\end{proof}

\begin{lemma}
Let $\clos{\Gamma} = \cc{IL}{}{2}$. Then $\CMS(\Gamma)$ is $\THETA{2}$-hard.
\end{lemma}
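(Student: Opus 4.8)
The plan is to recognise $\cc{IL}{}{2}$ as the co-clone of \emph{all} affine relations and to apply the weak base method of Proposition~\ref{thm:weak_base_method}: it suffices to exhibit a single minimal weak base $R_{\cc{IL}{}{2}}$ of $\cc{IL}{}{2}$ and to prove that $\CMS(R_{\cc{IL}{}{2}})$ is $\THETA{2}$-hard. I would first restate, from Table~\ref{table:weak_bases}, the minimal weak base $R_{\cc{IL}{}{2}}$ together with its matrix representation, and record the two features that drive the argument: $R_{\cc{IL}{}{2}}$ is affine (each of its coordinate relations is an XOR-clause) and, in order to generate the whole co-clone, it is neither $0$-valid nor $1$-valid nor complementive, carrying frozen coordinates realising the constants. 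Because $\cc{IL}{}{2}$ is the largest affine co-clone, the relation $\rXOR{3}$ itself lies in it, and the natural source of hardness is $\CMS(\rXOR{3})$, shown $\THETA{2}$-hard in Lemma~\ref{lemma:cardminsat_completeness_xor}. Thus the core task is the reduction $\CMS(\rXOR{3})\reduc\CMS(R_{\cc{IL}{}{2}})$.

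For this reduction I would proceed by local replacement, in the spirit of Lemma~\ref{prop:II2} and the $\cc{IN}{}{2}$ case. Given an instance $(\phi,x)$ with $\phi=\bigwedge_{i}\rXOR{3}(a_i,b_i,c_i)$, each constraint $\rXOR{3}(a_i,b_i,c_i)$ is replaced by a conjunction of $R_{\cc{IL}{}{2}}$-constraints whose projection onto $(a_i,b_i,c_i)$ is exactly $a_i\oplus b_i\oplus c_i=1$; the constants required by $R_{\cc{IL}{}{2}}$ are supplied by its frozen coordinates, while the remaining auxiliary variables split into two kinds. The frozen auxiliaries contribute a fixed weight to every model and are harmless. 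Each non-frozen auxiliary $y$ is \emph{neutralised} by adjoining a partner $y'$ forced to the complementary value through a width-$2$ affine constraint $y\neq y'$, which is itself an element of $\cc{IL}{}{2}$ and hence available, so that every such pair contributes weight exactly $1$ in all models. Should the constants not be directly obtainable from the frozen coordinates, I would instead introduce global variables $f,t$ with $f\neq t$, add $N$ copies $R_{\cc{IL}{}{2}}(f_j,\dots)$ tying a weight of $N$ to $f=1$, and take $N$ larger than $\card{\var(\phi)}$, forcing $f=0$ and $t=1$ in every cardinality-minimal model. Writing $\phi'$ for the resulting $R_{\cc{IL}{}{2}}$-formula, the models of $\phi$ and the relevant models of $\phi'$ are then in weight-preserving bijection up to one fixed additive constant, whence $x$ lies in a cardinality-minimal model of $\phi$ if and only if it lies in one of $\phi'$. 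Combined with Proposition~\ref{thm:weak_base_method}, this gives $\THETA{2}$-hardness of $\CMS(\Gamma)$ for every $\Gamma$ with $\clos{\Gamma}=\cc{IL}{}{2}$.

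The main obstacle is that, $\cc{IL}{}{2}$ being purely affine, the easier source $\CMS(\rOR{2})$ is unavailable since OR is not affine, so the argument must hinge on the more intricate $\CMS(\rXOR{3})$ hardness; the delicate part is, as always here, weight control. One must encode $\rXOR{3}$ \emph{exactly} over the fixed weak base without leaking uncontrolled weight, force the simulated constants to their intended values in all cardinality-minimal models (via frozen coordinates, or via the boosted $f\neq t$ gadget), and check that the parity-$1$ requirement of $\rXOR{3}$ is transported faithfully through $R_{\cc{IL}{}{2}}$. Once the additive weight offset is verified to be the same constant across all models, the equivalence of cardinality-minimal models follows exactly as in the previous lemmas.
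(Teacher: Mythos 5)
Your overall strategy coincides with the paper's: take the minimal weak base $R_{\cc{IL}{}{2}} = \evenn{3}{3 \neq}(x_1,\ldots,x_6) \wedge \F(x_7) \wedge \T(x_8)$, reduce from $\CMS(\rXOR{3})$ (Lemma~\ref{lemma:cardminsat_completeness_xor}) by local replacement --- each clause $\rXOR{3}(x_i^1,x_i^2,x_i^3)$ becomes $R_{\cc{IL}{}{2}}(u_i,v_i,w_i,x_i^1,x_i^2,x_i^3,f,t)$, the odd parity on coordinates $4$--$6$ of $\evenn{3}{3\neq}$ giving exactly the XOR constraint --- obtain the constants from the frozen coordinates, and neutralize the uniquely determined auxiliaries $u_i,v_i,w_i$ with complementary partners. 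Your fallback (the boosted $f \neq t$ gadget) is indeed unnecessary here, precisely because $R_{\cc{IL}{}{2}}$ has frozen constant coordinates; the paper reserves that trick for $\cc{IN}{}{2}$ and $\cc{IL}{}{3}$, which have none.

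One step of your justification is wrong as written, and it touches the heart of the method. You justify the neutralization constraints $y \neq y'$ by saying the disequality ``is itself an element of $\cc{IL}{}{2}$ and hence available.'' Membership in the co-clone $\clos{\Gamma}$ is exactly what does \emph{not} license using a relation in these reductions: $\clos{\cdot}$ permits existential quantification and equality constraints, which destroy cardinality --- this is the very reason the paper works with $\closNN{\cdot}$ and weak bases at all, and if co-clone membership sufficed, the entire classification would be a one-line corollary of the fact that $\rXOR{3} \in \cc{IL}{}{2}$. Since the target problem is $\CMS(R_{\cc{IL}{}{2}})$, every constraint of the constructed formula must be an application of the $8$-ary relation $R_{\cc{IL}{}{2}}$ itself. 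The repair is available and is what the paper does: the disequalities are built into $\evenn{3}{3\neq}$, so the single extra constraint $R_{\cc{IL}{}{2}}(u_i,v_i,w_i,u'_i,v'_i,w'_i,f,t)$ forces $u'_i = \neg u_i$, $v'_i = \neg v_i$, $w'_i = \neg w_i$ all at once, and it is consistent because the even-parity condition it imposes on $(u_i,v_i,w_i)$ already holds by the first constraint. With that repair, the weight bookkeeping (each model $\sigma$ of $\phi$ extends uniquely to a model $\sigma'$ of $\phi'$ with $\vert\sigma'\vert = \vert\sigma\vert + 3p + 1$) and the conclusion via Proposition~\ref{thm:weak_base_method} go through exactly as you describe.
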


\begin{proof}
 Let $R_{\cc{IL}{}{2}}$ be the minimal weak base of $\cc{IL}{}{2}$, that is, $R_{\cc{IL}{}{2}} = \evenn{3}{3 \neq}(x_1,\ldots,x_6) \wedge
   \F(x_7) \wedge \T(x_8)$. The matrix representation is as follows.
%
\begin{center}
\mbox{
\begin{array}[t]{c}
R_{\cc{IL}{}{2}} =
\left(\begin{array}{c} 00011101 \\ 01110001 \\ 10101001 \\ 11000101\end{array}\right)  
\end{array}
}
\end{center}
We  show that $\CMS(\rXOR{3}) \reduc \CMS(R_{\cc{IL}{}{2}})$. Then the result follows from  Lemma \ref{lemma:cardminsat_completeness_xor} and Proposition \ref{thm:weak_base_method}.

Let $(\phi, x)$ be an instance  of $\CMS(\rXOR{3})$, where $\phi=\bigwedge_{i=1}^p (x_i^1\oplus x_i^2\oplus x_i^3)$.
Let $\{u_i, v_i,  w_i,  
u'_i,  v'_i, w'_i \mid i=1\ldots p\}\cup \{t, f\}$ be fresh variables.
For each constraint $(x_i^1\oplus x_i^2\oplus x_i^3)$ we build the constraint 
 $R_{\cc{IL}{}{2}}(u_i, v_i, w_i, x_i^1, x_i^2,x_i^3, f, t)$.
 Observe that $\rXOR{3}(x_i^1, x_i^2,x_i^3) \equiv$
$$\exists  f,t, u_i, v_i, w_i \ 
 R_{\cc{IL}{}{2}}(u_i, v_i, w_i, x_i^1, x_i^2,x_i^3, f, t).$$
 The variables   $f,t$ are frozen, and the   other existential  variables are uniquely determined and can be neutralized by adding three additional variables $u'_i, v'_i, w'_i $ and the constraint
$R_{\cc{IL}{}{2}}( u_i, v_i, w_i, u'_i, v'_i, w'_i, f, t)$.
 
Consider $\phi '$ the conjunction of all these constraints.
Observe that the formulas $\phi$ and $\phi'$  are equivalent when quantifying on the fresh variables. Moreover, the models of $\phi$ and $\phi'$ are in one-to-one correspondence. Each model $\sigma$ of $\phi$ can be extended to a model $\sigma'$ of  $\phi'$ whose weight is $\vert \sigma'\vert = \vert \sigma \vert +3p+1$. 
Consequently, $x$ belongs to a cardinality-minimal model of $\phi$ if and only if $x$ belongs to a cardinality-minimal model of $\phi'$, thus concluding the proof.
 \end{proof}

\begin{lemma}
Let $\clos{\Gamma} = \cc{IL}{}{3}$. Then $\CMS(\Gamma)$ is $\THETA{2}$-hard.
\end{lemma}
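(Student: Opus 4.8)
The plan is to follow the same weak-base strategy used for the preceding affine co-clones: identify the minimal weak base $R_{\cc{IL}{}{3}}$ of $\cc{IL}{}{3}$ from Table~\ref{table:weak_bases}, and establish $\CMS(\rXOR{3}) \reduc \CMS(R_{\cc{IL}{}{3}})$. By Lemma~\ref{lemma:cardminsat_completeness_xor} and Proposition~\ref{thm:weak_base_method} this yields $\THETA{2}$-hardness of $\CMS(\Gamma)$ for every $\Gamma$ with $\clos{\Gamma} = \cc{IL}{}{3}$. The co-clone $\cc{IL}{}{3}$ is the affine, $1$-valid co-clone, so its weak base is a complementive-free affine relation containing the all-$1$ tuple; concretely it will be an \evenn{}{}-style relation of the form $\evenn{3}{3\neq}(x_1,\dots,x_6) \wedge \T(x_8)$ (or the analogous minimal relation), paralleling $R_{\cc{IL}{}{2}}$ but with the $\F$-column dropped because $0$-validity is no longer available.

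First I would restate the exact definition of $R_{\cc{IL}{}{3}}$ together with its matrix representation. Next, given an instance $(\phi,x)$ of $\CMS(\rXOR{3})$ with $\phi = \bigwedge_{i=1}^{p}(x_i^1 \oplus x_i^2 \oplus x_i^3)$, I would replace each clause locally: write each $\rXOR{3}(x_i^1,x_i^2,x_i^3)$ as an existentially quantified $R_{\cc{IL}{}{3}}$-constraint introducing fresh auxiliary variables $u_i,v_i,w_i$ (and a global frozen $t$), exactly as in the $\cc{IL}{}{2}$ case. The auxiliary variables that are not frozen would then be neutralized by adding a companion variable for each and forcing complementary values via a further $R_{\cc{IL}{}{3}}$-constraint, so their joint weight contribution is a fixed constant across all models. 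This gives a formula $\phi'$ whose models are in one-to-one correspondence with those of $\phi$, with a uniform weight offset $|\sigma'| = |\sigma| + c$ for a constant $c$ depending only on $p$.

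The crucial difference from the $\cc{IL}{}{2}$ case, and the step I expect to be the main obstacle, is that $\cc{IL}{}{3}$ is $1$-valid and hence cannot express the constant $0$ directly (there is no $\F$-column in the weak base). Consequently I cannot simply freeze a variable $f$ to $0$ by fiat when forcing complementary values. The fix is the ``global variable plus copies'' trick already used in the $\cc{IN}{}{2}$ case and in step~1 of Lemma~\ref{lemma:cardminsat_completeness_xor}: introduce a global variable $f$ together with $t$, use an $R_{\cc{IL}{}{3}}$-constraint expressing $f \neq t$, and add $N$ copies forcing $f=0$ in any cardinality-minimal model whenever $N$ exceeds the number of original variables; then $f$ plays the role of the missing constant $0$ in the neutralizing constraints. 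I would verify that, with $N \ge p+n$, every cardinality-minimal model of $\phi'$ sets $f=0$ and $t=1$, so that restricting to $\var(\phi)$ recovers exactly the cardinality-minimal models of $\phi$. It then follows that $x$ is true in some cardinality-minimal model of $\phi$ iff $x$ is true in some cardinality-minimal model of $\phi'$, completing the reduction.
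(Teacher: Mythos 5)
Your reduction machinery---local replacement of each $\rXOR{3}$-clause as in the $\cc{IL}{}{2}$ case, neutralization of the auxiliary variables, and a pair of global variables $f,t$ tied by an $R_{\cc{IL}{}{3}}$-constraint expressing $f\neq t$, with $N$ weighted copies so that every cardinality-minimal model sets $f=0$ and $t=1$---is exactly the paper's proof. However, your identification of the co-clone and of its weak base is wrong, and in a weak-base argument this is a genuine gap, not a cosmetic slip. $\cc{IL}{}{3}$ is \emph{not} the affine $1$-valid co-clone; that is $\cc{IL}{}{1}$, which the paper treats in a separate lemma with weak base $\rXOR{3}(x_1,x_2,x_3)\wedge\T(x_4)$. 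Rather, $\cc{IL}{}{3}$ is the affine \emph{complementive} co-clone: every relation in it is complementive, hence no relation in it can have a frozen coordinate, and in particular none can carry a $\T$- or $\F$-conjunct. Its minimal weak base is $R_{\cc{IL}{}{3}}=\evenn{4}{4 \neq}(x_1,\ldots,x_8)$ (Table~\ref{table:weak_bases}), which contains neither the all-$0$ nor the all-$1$ tuple. The relation you posit, $\evenn{3}{3 \neq}(x_1,\ldots,x_6)\wedge\T(x_7)$, is not complementive and therefore not even a member of $\cc{IL}{}{3}$; indeed, identifying its first three coordinates freezes a variable to $0$, so this relation generates $\cc{IL}{}{2}$. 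A hardness proof for $\CMS$ of \emph{that} relation would establish nothing about languages $\Gamma$ with $\clos{\Gamma}=\cc{IL}{}{3}$, since Proposition~\ref{thm:weak_base_method} can only be invoked for a minimal weak base of the co-clone under consideration.

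The good news is that the gadget in your final paragraph never actually uses the frozen $\T$-column you believe exists: you treat $t$, like $f$, as a global variable and force both values by the weight argument. That construction runs verbatim on the correct weak base, and it is precisely what the paper does, expressing $f\neq t$ as $R_{\cc{IL}{}{3}}(f,f,f,f,t,t,t,t)$ and adding $N$ constraints $R_{\cc{IL}{}{3}}(f_j,f_j,f_j,f_j,t,t,t,t)$; the reason the $\cc{IL}{}{2}$ gadgets then reappear is that, up to a permutation of columns, $R_{\cc{IL}{}{3}}$ is the complementive closure of $R_{\cc{IL}{}{2}}$, so setting $f=0$, $t=1$ recovers them exactly. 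You should therefore repair the premise of your argument: the obstacle is complementivity (neither constant is expressible and \emph{no} variable can be frozen), not $1$-validity, and the soundness of the weight trick rests on the fact that $\phi'$ always has a model with $f=0$ --- either by complementing any model, or because the $\rXOR{3}$-formula $\phi$ is always satisfied by the all-ones assignment.
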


\begin{proof}

Let $R_{\cc{IL}{}{3}}$ be the minimal weak base of $\cc{IL}{}{3}$, that is, $R_{\cc{IL}{}{3}} = \evenn{4}{4 \neq}(x_1,\ldots,x_8)$.
The matrix representation is as follows.
%
%
%
\begin{center}
\mbox{
 \begin{array}[t]{c}
 R_{\cc{IL}{}{3}} =
 
 \left(
 \begin{array}{c}
   00001111 \\  11000011 \\  10100101 \\  10010110 \\
  01101001 \\  01011010 \\  00111100 \\  11110000
 \end{array}
 \right)  
 \end{array}
}
\end{center}  
 
 
We  show that $\CMS(\rXOR{3}) \reduc \CMS(R_{\cc{IL}{}{3}})$. Then the result follows from  Lemma \ref{lemma:cardminsat_completeness_xor} and Proposition \ref{thm:weak_base_method}.

All relations in the co-clone \cc{IL}{}{3} are complementive. In particular, 
 up to some permutation of the variables (columns) $R_{\cc{IL}{}{3}}$ is the complementive closure of $R_{\cc{IL}{}{2}}$.
Thus, we use the same reduction idea as for $\cc{IL}{}{2}$, and then 
 proceed analogously as we have done for the case of $\cc{IN}{}{2}$: replace 0 and 1 by $f$ and $t$, express $f\ne t \equiv R_{\cc{IL}{}{3}}(f,f,f,f,t,t,t,t)$ and put a big weight on $f$, hence forcing $f=0$ and $t=1$ in any cardinality-minimal model.

\end{proof}

\begin{lemma}
Let $\clos{\Gamma} = \cc{IL}{}{1}$. Then $\CMS(\Gamma)$ is $\THETA{2}$-hard.
\end{lemma}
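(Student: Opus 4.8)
The plan is to follow the template of the affine cases already treated, reducing from $\CMS(\rXOR{3})$ (which is $\THETA{2}$-hard by Lemma~\ref{lemma:cardminsat_completeness_xor}) to $\CMS(R_{\cc{IL}{}{1}})$, where $R_{\cc{IL}{}{1}}$ is the minimal weak base of $\cc{IL}{}{1}$ from Lagerkvist's table; by Proposition~\ref{thm:weak_base_method} this yields hardness for every $\Gamma$ with $\clos{\Gamma}=\cc{IL}{}{1}$. Given an instance $(\phi,x)$ of $\CMS(\rXOR{3})$ with $\phi=\bigwedge_{i=1}^p(x_i^1\oplus x_i^2\oplus x_i^3)$, I would build an $R_{\cc{IL}{}{1}}$-formula $\phi'$ by locally replacing each clause by an $R_{\cc{IL}{}{1}}$-gadget, so that $x$ lies in a cardinality-minimal model of $\phi$ iff it lies in one of $\phi'$. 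Note that every $\rXOR{3}$-formula is satisfiable (the all-$1$ assignment is a model), so no separate treatment of an unsatisfiable input is needed.

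The decisive difference with the $\cc{IL}{}{2}$ and $\cc{IL}{}{3}$ cases is that $\cc{IL}{}{1}$ consists of $1$-valid relations: the constant $1$ is available (a variable $t$ can be frozen to $1$, since $\rXOR{3}(t,t,t)$ forces $t=1$, and $\rXOR{3}\in\cc{IL}{}{1}=\clos{R_{\cc{IL}{}{1}}}$), but neither the constant $0$ nor the disequality $\ne$ is expressible, because any $\closNN{\cdot}$-combination of $1$-valid relations is again $1$-valid, whereas $\F$ and $\{01,10\}$ are not. Hence the complementary-pair neutralization used in the previous proofs is \emph{not} directly available, and this is the main obstacle. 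I would resolve it as in the $\cc{IN}{}{2}$/$\cc{IL}{}{3}$ proofs, but one-sidedly: introduce a single global variable $f$ meant to play the role of $0$, attach to it a block of $N$ fresh copies each forced equal to $f$ (the equality $g=f$ being expressible as the XOR3-clause $\rXOR{3}(g,f,t)$ with $t$ frozen to $1$), and take $N$ larger than the number of variables of $\phi$. Then every model with $f=1$ has weight at least $N$ and can never be cardinality-minimal, so $f=0$ holds in all cardinality-minimal models of $\phi'$.

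With $t=1$ genuinely frozen and $f=0$ enforced on all minimal models, I would then mimic the $\cc{IL}{}{2}$ reduction: each clause $\rXOR{3}(x_i^1,x_i^2,x_i^3)$ is realized through $R_{\cc{IL}{}{1}}$, the fresh coordinates that are frozen contribute a fixed weight and are harmless, while any fresh coordinate whose value is forced to track the $x_i^j$ (and hence has model-dependent weight) is neutralized by adding a partner $u'$ and the clause $\rXOR{3}(u,u',f)$, which for $f=0$ forces $u\ne u'$, so that $u$ and $u'$ jointly contribute weight $1$. The hard part will be the bookkeeping: these simulated disequalities hold only where $f=0$, i.e. precisely on the candidate minimal models, so the correctness argument must keep the two regimes apart and show that the $f=1$ models are always too heavy to be minimal. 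The intended outcome is that every model $\sigma$ of $\phi$ extends uniquely to a model of $\phi'$ with $f=0$ whose weight is $\card{\sigma}+c$ for a fixed constant $c$ depending only on $p$ and $N$, and conversely every $f=0$ model of $\phi'$ restricts to a model of $\phi$; hence the cardinality-minimal models of $\phi$ and of $\phi'$ are in $(+c)$-weight-preserving one-to-one correspondence with $x$ preserved. This gives $\CMS(\rXOR{3})\reduc\CMS(R_{\cc{IL}{}{1}})$, and the lemma then follows from Lemma~\ref{lemma:cardminsat_completeness_xor} and Proposition~\ref{thm:weak_base_method}.
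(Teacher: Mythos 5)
Your proposal follows the same skeleton as the paper's proof---reduce $\CMS(\rXOR{3})$ to $\CMS(R_{\cc{IL}{}{1}})$ by local replacement of clauses, then invoke Lemma~\ref{lemma:cardminsat_completeness_xor} and Proposition~\ref{thm:weak_base_method}---but you never instantiate the one step that matters, namely the clause gadget, and as a result you manufacture obstacles that do not exist. From Table~\ref{table:weak_bases}, the minimal weak base of $\cc{IL}{}{1}$ is simply $R_{\cc{IL}{}{1}}(x_1,x_2,x_3,x_4)=\rXOR{3}(x_1,x_2,x_3)\wedge\T(x_4)$: the three clause variables appear directly, and the single auxiliary coordinate is \emph{frozen} to $1$. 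The paper's entire reduction is therefore: introduce one global fresh variable $t$ and replace each clause $\rXOR{3}(x_i^1,x_i^2,x_i^3)$ by $R_{\cc{IL}{}{1}}(x_i^1,x_i^2,x_i^3,t)$. Models of $\phi'$ are exactly the models of $\phi$ extended by $t=1$, so every weight is shifted by the constant $1$ and cardinality-minimal models correspond. No surrogate for the constant $0$, no disequality, and no neutralization of ``tracked'' coordinates is ever needed, because the weak base has no non-frozen auxiliary coordinates to neutralize.

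Your surrounding machinery is thus solving a non-problem. Your observation that $\closNN{R_{\cc{IL}{}{1}}}$ contains only $1$-valid relations (so $\F$ and $\neq$ are not expressible) is correct, and your workarounds would even be implementable---once $t$ is frozen, $R_{\cc{IL}{}{1}}(g,f,t,t)$ expresses $g=f$ and $R_{\cc{IL}{}{1}}(u,u',f,t)$ expresses $u\neq u'$ whenever $f=0$---but they are dead weight. Moreover, this is where the one genuine slip in your sketch sits: in a model with $f=1$ your pair constraints degenerate to equalities $u=u'$, which can be satisfied by $u=u'=0$ and thereby \emph{save} one unit of weight per pair; so for your argument that $f=1$ models are never minimal, $N$ would have to dominate the number of pairs (up to $3p$, which can exceed $n$), not merely the number of variables of $\phi$ as you prescribed. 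Since the correct gadget requires no pairs at all, the slip is moot here, but as written your proof is incomplete at its core (the gadget is a black box) and miscalibrated at its periphery (the choice of $N$).
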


\begin{proof}
 Let $R_{\cc{IL}{}{1}}$ be the minimal weak base of $\cc{IL}{}{1}$, that is, 
 $R_{\cc{IL}{}{1}}(x_1, x_2, x_3, x_4)=\rXOR{3}(x_1, x_2, x_3)  \wedge \T(x_4)$.
 The matrix representation is as follows.
 
\begin{center}
\mbox{
\begin{array}[t]{c}
R_{\cc{IL}{}{1}}=
\left(\begin{array}{c} 1001 \\ 0101 \\ 0011 \\ 1111  \end{array}\right)  
\end{array}
}
\end{center}

It is easy to  verify that 
$\rXOR{3}(x_1, x_2, x_3) \equiv  \exists\  t\;R_{\cc{IL}{}{1}}(x_1, x_2, x_3, t)$. Observe that the value of $t$ is  frozen to 1 in $R_{\cc{IL}{}{1}}(x_1, x_2, x_3, t)$.  
Thus, by the  introduction of a fresh variable $t$  and the local replacement of each clause   we can build a reduction from 
$\CMS(\rXOR{3}) $ to $\CMS(R_{\cc{IL}{}{1}})$. Then the result 
follows from  Lemma~6 and Proposition~4.
\end{proof}

\begin{lemma}
Let $\clos{\Gamma} = \cc{IV}{}{2}$. Then 
$\CMS(\Gamma)$ is $\THETA{2}$-hard.
\end{lemma}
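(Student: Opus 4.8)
The plan is to prove $\CMS(\rOR{2}) \reduc \CMS(R_{\cc{IV}{}{2}})$, where $R_{\cc{IV}{}{2}}$ is the minimal weak base of $\cc{IV}{}{2}$ taken from Lagerkvist's table; the lemma then follows at once from Lemma~\ref{lemma:cardminsat_completeness_or} and Proposition~\ref{thm:weak_base_method}. Since $\cc{IV}{}{2}$ is exactly the co-clone of dual-Horn relations and $\rOR{2}$ is itself dual-Horn, we have $\rOR{2} \in \cc{IV}{}{2} = \clos{R_{\cc{IV}{}{2}}}$, so there is a primitive positive definition $\rOR{2}(x,y) \equiv \exists \ldots\, R_{\cc{IV}{}{2}}(\ldots)$. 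First I would read this definition off the matrix of $R_{\cc{IV}{}{2}}$ and sort its existentially quantified coordinates into \emph{frozen} ones (assigned the same value in every model) and \emph{free} ones.

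I would then mimic the construction of Lemma~\ref{prop:II2}: given an input $\rOR{2}$-formula $\phi=\bigwedge_{i=1}^p (x_i^1\lor x_i^2)$, replace each clause locally by one copy of the above constraint on fresh variables to obtain a $R_{\cc{IV}{}{2}}$-formula $\phi'$. The favorable point here is that every dual-Horn relation defines both $\F$ and $\T$, so $R_{\cc{IV}{}{2}}$ should expose frozen coordinates that play the role of the constants $0$ and $1$. This places us in the benign situation of Lemma~\ref{prop:II2} rather than in the complementive cases $\cc{IN}{}{2}$ and $\cc{IL}{}{3}$, so no global weight-boosting gadget (replacing $0$ by a heavy variable $f$) is needed. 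The frozen fresh variables contribute a fixed additive weight to every model, while each free fresh variable $y$ is neutralized by adjoining a companion $y'$ forced to the complementary value through a further $R_{\cc{IV}{}{2}}$-constraint, so that the pair $\{y,y'\}$ always contributes weight exactly $1$.

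With this in place, the models of $\phi$ and $\phi'$ are in one-to-one correspondence and their weights differ by a constant offset $c$ that is independent of the model (of the shape $\alpha p + \beta$ for constants determined by the arity and the number of free coordinates of $R_{\cc{IV}{}{2}}$). Hence $x$ belongs to a cardinality-minimal model of $\phi$ if and only if $x$ belongs to a cardinality-minimal model of $\phi'$, which is exactly the reduction sought.

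The step I expect to be the main obstacle is the bookkeeping of the existential coordinates of $R_{\cc{IV}{}{2}}$: I must check that \emph{every} non-frozen coordinate occurring in the chosen pp-definition of $\rOR{2}$ can be neutralized by a complementary companion that is itself expressible as an $R_{\cc{IV}{}{2}}$-constraint, and that the per-clause weight offset is genuinely uniform. Were some free coordinate to resist pairwise neutralization, the cardinality offset would vary across models and the bijection would cease to preserve minimality; confirming that the dual-Horn weak base sidesteps this, just as in the $\cc{II}{}{2}$ case, is the crux of the argument.
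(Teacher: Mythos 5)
You follow the paper's strategy at the top level: establish $\CMS(\rOR{2}) \reduc \CMS(R_{\cc{IV}{}{2}})$ and conclude with Lemma~\ref{lemma:cardminsat_completeness_or} and Proposition~\ref{thm:weak_base_method}. However, your proposal stops short of the one step that \emph{is} the paper's proof, and the fallback you reserve for exactly that step cannot work in this co-clone. The paper's argument consists of the single observation
\begin{equation*}
\rOR{2}(x,y) \;\equiv\; \exists\, f\, \exists\, t \;\; R_{\cc{IV}{}{2}}(t,x,y,f,t),
\end{equation*}
whose whole point is the \emph{identification} of the first and fifth coordinates: the first coordinate of $R_{\cc{IV}{}{2}}$ obeys $x_1 \leftrightarrow (x_2\vee x_3)$, and only by plugging into it the variable $t$, frozen to $1$ by $\T(x_5)$, does the constraint actually enforce $x\vee y$. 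With this definition both existential variables are frozen ($t=1$, $f=0$), they can be shared by all clauses, the models of $\phi$ and $\phi'$ are in bijection with constant weight offset $1$, and no neutralization is needed at all. Your text never exhibits this (or any) pp-definition; ``reading it off the matrix'' is precisely the content that has to be supplied.

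This is a genuine gap and not a mere omission of routine detail, because the tool you plan to use if free coordinates appear --- a companion variable $y'$ ``forced to the complementary value through a further $R_{\cc{IV}{}{2}}$-constraint'' --- does not exist in $\cc{IV}{}{2}$. Dual-Horn relations are closed under componentwise OR, and the inequality relation $\{(0,1),(1,0)\}$ is not (since $(0,1)\vee(1,0)=(1,1)$); consequently no $R_{\cc{IV}{}{2}}$-formula can force two of its variables to take complementary values in all models unless both are frozen. This is in contrast with $\cc{II}{}{2}$, where such a gadget was available. Moreover, even if a complementary-pair gadget existed, it would not repair the naive per-clause translation $R_{\cc{IV}{}{2}}(z_i, x_i^1, x_i^2, f, t)$ with a fresh free $z_i$: that constraint merely asserts $z_i = x_i^1 \vee x_i^2$, so \emph{every} assignment of the original variables extends to a model of $\phi'$, the clause structure of $\phi$ is lost, and all cardinality-minimal models of $\phi'$ arise from the all-zero assignment. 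The defect is logical, not one of weight bookkeeping, and the cure is to freeze the first coordinate by identifying it with the $\T$-coordinate --- the missing idea.
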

\begin{proof}
Let $R_{\cc{IV}{}{2}}$ be the minimal weak base of $\cc{IV}{}{2}$, that is,
 $R_{\cc{IV}{}{2}}(x_1, x_2, x_3, x_4,x_5)=(\overbar{x_1} \leftrightarrow
\overbar{x_2}\overbar{x_3}) \wedge \F(x_4) \wedge \T(x_5)$. 
The matrix representation is as follows.

\begin{center}
\mbox{
\begin{array}[t]{c}
R_{\cc{IV}{}{2}} =
\left(\begin{array}{c} 00001 \\ 10101 \\ 11001 \\ 11101 \end{array}\right)  
\end{array}
}
\end{center}

It is easy to  verify that 
$\rOR{2}(x,y) \equiv  \exists \  t,f \;R_{\cc{IV}{}{2}}(t,x,y, f, t)$.
Observe that the values of $t$ and $f$  are frozen in $R_{\cc{IV}{}{2}}(t,x,y, f, t)$, they take respectively the values 1 and 0 in all models.  
Thus, we can reduce 
$\CMS(\rOR{2})$ to $\CMS(R_{\cc{IV}{}{2}})$. Then the result 
follows from Lemma~5 and Proposition~4.
\end{proof}

\begin{lemma}
Let $\clos{\Gamma} = \cc{IV}{}{1}$. Then 
$\CMS(\Gamma)$ is $\THETA{2}$-hard.
\end{lemma}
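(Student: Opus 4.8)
The plan is to reduce $\CMS(\rOR{2})$ to $\CMS(R_{\cc{IV}{}{1}})$ and then invoke Lemma~\ref{lemma:cardminsat_completeness_or} together with Proposition~\ref{thm:weak_base_method}. Since every relation in $\cc{IV}{}{1}$ is $1$-valid, its minimal weak base carries a coordinate frozen to $1$ but none frozen to $0$; concretely it is the arity-$4$ relation $R_{\cc{IV}{}{1}}(x_1,x_2,x_3,x_4) = (\overbar{x_1} \leftrightarrow \overbar{x_2}\,\overbar{x_3}) \wedge \T(x_4)$, that is, the relation $R_{\cc{IV}{}{2}}$ of the previous lemma with its $\F$-coordinate removed. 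The first step is to record its matrix and to note that $\overbar{x_1} \leftrightarrow \overbar{x_2}\,\overbar{x_3}$ is simply $x_1 \leftrightarrow (x_2 \vee x_3)$, so that substituting a variable frozen to $1$ into the first coordinate forces the disjunction of the other two to hold.

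Concretely, I would observe that $\rOR{2}(x,y) \equiv \exists t\; R_{\cc{IV}{}{1}}(t,x,y,t)$: the coordinate $\T(x_4)$ freezes $t$ to $1$, and then $t \leftrightarrow (x \vee y)$ degenerates to $x \vee y$. The decisive feature of this identity, which makes the present case the easiest among the black co-clones, is that the only auxiliary variable it introduces is $t$, and $t$ is frozen. Hence no non-frozen fresh variable appears, and the neutralization device (pairs forced to complementary values, or copies forcing a variable to $0$) used in the other cases is not needed here.

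The reduction itself then consists of a single global copy of $t$ together with a clause-wise local replacement: given an instance $(\phi,x)$ with $\phi = \bigwedge_{i=1}^p (x_i^1 \vee x_i^2)$, set $\phi' = \bigwedge_{i=1}^p R_{\cc{IV}{}{1}}(t, x_i^1, x_i^2, t)$ where $t$ is one fresh variable. I would then argue that $\phi$ and $\phi'$ are equivalent modulo the uniquely determined, frozen value of $t$, so that $\sigma \mapsto \sigma' := \sigma \cup \{t \mapsto 1\}$ is a bijection between the models of $\phi$ and those of $\phi'$ with $\card{\sigma'} = \card{\sigma} + 1$. Adding the same constant $1$ to the weight of every model preserves the ordering by cardinality, whence $\sigma$ is cardinality-minimal for $\phi$ iff $\sigma'$ is cardinality-minimal for $\phi'$, and $\sigma(x)=1$ iff $\sigma'(x)=1$ (since $x \ne t$). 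This gives $\CMS(\rOR{2}) \reduc \CMS(R_{\cc{IV}{}{1}})$, and the lemma follows by Proposition~\ref{thm:weak_base_method} and Lemma~\ref{lemma:cardminsat_completeness_or}.

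There is essentially no obstacle in this case beyond correctly pinning down the minimal weak base and checking that it has a frozen $1$-coordinate; the argument is a simplified instance of the $\cc{IV}{}{2}$ reduction. The only point deserving a line of care is confirming that the frozen variable $t$ may be shared across all constraints (legitimate precisely because it takes the same value in every model), so that it contributes a fixed additive constant to the weight rather than a per-clause cost.
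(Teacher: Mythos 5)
There is a genuine gap, and it sits exactly where your write-up claims there is ``essentially no obstacle'': you have identified the wrong relation as the minimal weak base of $\cc{IV}{}{1}$. According to Lagerkvist's table (Table~\ref{table:weak_bases}), the minimal weak base is the \emph{five}-ary relation $R_{\cc{IV}{}{1}}(x_1,\dots,x_5)=(\overbar{x_1} \leftrightarrow \overbar{x_2}\,\overbar{x_3}) \wedge (\overbar{x_2} \vee \overbar{x_3} \rightarrow \overbar{x_4}) \wedge \T(x_5)$, with tuple set $\{00001,10101,11001,11101,11111\}$; it is \emph{not} ``$R_{\cc{IV}{}{2}}$ with the $\F$-coordinate removed''. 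The $4$-ary relation $B'=(\overbar{x_1} \leftrightarrow \overbar{x_2}\,\overbar{x_3}) \wedge \T(x_4)$ that you use does satisfy $\clos{B'}=\cc{IV}{}{1}$, but generating the co-clone is only property (1) of a weak base; Proposition~\ref{thm:weak_base_method} rests entirely on property (2), namely $B\in\closNN{\Gamma}$ for \emph{every} $\Gamma$ with $\clos{\Gamma}=\cc{IV}{}{1}$, and $B'$ fails it. Concretely, take $\Gamma=\{R_{\cc{IV}{}{1}}\}$ and list the tuples of $B'$ as $0001,1011,1101,1111$, so that the value vectors of the variables across these four tuples are $x_1\mapsto(0,1,1,1)$, $x_2\mapsto(0,0,1,1)$, $x_3\mapsto(0,1,0,1)$, $x_4\mapsto(1,1,1,1)$. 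A constraint $R_{\cc{IV}{}{1}}(v_1,\dots,v_5)$ with $v_i\in\{x_1,\dots,x_4\}$ is satisfied by all of $B'$ only if the vector of $v_4$ lies componentwise below the conjunction of the vectors of $v_2$ and $v_3$; the constraint you would need, with $(v_2,v_3)=(x_2,x_3)$, is therefore impossible, since that conjunction is $(0,0,0,1)$ and no variable lies below it. Checking the remaining admissible choices, every such constraint reduces to a conjunction of $x_2\rightarrow x_1$, $x_3\rightarrow x_1$ and $\T(x_4)$, and no equality constraint holds throughout $B'$; all of these are satisfied by the tuple $(1,0,0,1)\notin B'$. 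Hence the direction $x_1\rightarrow x_2\vee x_3$ can never be enforced, $B'\notin\closNN{\{R_{\cc{IV}{}{1}}\}}$, $B'$ is not a weak base, and your hardness result for $\CMS(B')$ does not transfer to arbitrary $\Gamma$ generating $\cc{IV}{}{1}$ --- not even to $\Gamma=\{R_{\cc{IV}{}{1}}\}$ itself.

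Your reduction $\rOR{2}(x,y)\equiv\exists t\;B'(t,x,y,t)$ and the constant-weight-shift argument are fine as far as they go, and they mirror the paper's strategy; the paper applies that strategy to the true weak base via $\rOR{2}(x,y)\equiv\exists f,t\;R_{\cc{IV}{}{1}}(t,x,y,f,t)$. But there the fourth coordinate $f$ is \emph{not} frozen (the relation contains both $11101$ and $11111$), which is precisely the subtlety your proof declares absent: the paper handles it by observing that flipping $f$ from $1$ to $0$ in any model again yields a model, so $f=0$ in every cardinality-minimal model. The non-frozen auxiliary variable and its treatment are not an optional complication of the other cases --- they are forced by the actual weak base of $\cc{IV}{}{1}$, and a correct proof of this lemma must include them.
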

\begin{proof}

Let $R_{\cc{IV}{}{1}}$ be the minimal weak base of $\cc{IV}{}{1}$, that is,
 $R_{\cc{IV}{}{1}}(x_1, x_2, x_3, x_4,x_5)=(\overbar{x_1} \leftrightarrow
\overbar{x_2}\overbar{x_3})\wedge (\overbar{x_2}\lor \overbar{x_3}\rightarrow \overbar{x_4})  \wedge \T(x_5)$. 
The matrix representation is as follows.
 
\begin{center}
\mbox{
\begin{array}[t]{c}
R_{\cc{IV}{}{1}}=
\left(\begin{array}{c} 00001 \\ 10101 \\ 11001 \\ 11101 \\ 11111 \end{array}\right)  
\end{array}
}
\end{center}
We prove that $\CMS(\rOR{2})\le \CMS(R_{\cc{IV}{}{1}})$.
Observe that 
$\rOR{2}(x,y) \equiv  \exists \ f,  t  \;R_{\cc{IV}{}{1}}(t,x,y, f, t)$.
Observe that the value  of $t$ is frozen to 1, but the value of  $f$ is not frozen. Nevertheless, any model with $f = 1$ will remain a model when flipping $f$ to 0. Therefore, this additional variable will always be set to 0 in a cardinality-minimal model.
Thus, by the  introduction of two fresh variables $f$ and  $t$  and the local replacement of each clause   we can build a reduction from 
$\CMS(\rOR{2}) $ to $\CMS(R_{\cc{IV}{}{1}})$. Then the result 
follows from  Lemma~5 and Proposition~4.
\end{proof}

\begin{lemma}
Let $\clos{\Gamma} = \cc{IS}{k}{00}$, for some integer $k \geq 2$. Then $\CMS(\Gamma)$ is $\THETA{2}$-hard.
\end{lemma}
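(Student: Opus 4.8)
The plan is to establish $\CMS(\rOR{2}) \reduc \CMS(R)$, where $R := R_{\cc{IS}{k}{00}}$ denotes the minimal weak base of $\cc{IS}{k}{00}$ recorded in Table~\ref{table:weak_bases} \cite{Lagerkvist14}; the $\THETA{2}$-hardness of $\CMS(\Gamma)$ for every $\Gamma$ with $\clos{\Gamma} = \cc{IS}{k}{00}$ then follows from Lemma~\ref{lemma:cardminsat_completeness_or} together with Proposition~\ref{thm:weak_base_method}. I would begin by restating $R$ and its matrix. The two structural features I expect to read off are: (i) on a block of $k$ of its coordinates $R$ carries a $k$-ary positive clause, so that feeding the constant $0$ into $k-2$ of them collapses it to a binary clause, $\rOR{k}(u,v,0,\dots,0) \equiv u \vee v$; and (ii) its remaining coordinates are either frozen (taking a fixed value over all tuples, hence usable to supply the constants $0$ and $1$) or else uniquely determined by the clause-coordinates. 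Since $\rOR{2} \in \cc{IS}{2}{0} \subseteq \cc{IS}{k}{00}$ for every $k \ge 2$, the relation $\rOR{2}$ is pp-definable from $R$; the whole task is to promote this definition to one that is faithful to cardinality-minimal models, i.e.\ to a genuine $\{R\}$-formula, so that Proposition~\ref{thm:R in NNclosure(G) => cms(R)<cms(G)} applies.

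Concretely, given an instance $(\phi,x)$ of $\CMS(\rOR{2})$ with $\phi = \bigwedge_{i=1}^{p}(x_i^1 \vee x_i^2)$, I would replace each clause by an $R$-constraint that plugs $x_i^1, x_i^2$ into two clause-coordinates, a global constant $0$ into the surplus $k-2$ clause-coordinates, and the global constants $0,1$ into the frozen coordinates of $R$. Exactly as in the treatment of $\cc{II}{}{2}$ (Lemma~\ref{prop:II2}), each coordinate that is determined but not frozen would be neutralised by introducing a fresh partner variable forced to the complementary value through one further $R$-constraint, so that every such pair contributes weight precisely $1$ to every model, while the frozen coordinates contribute a fixed additive constant $c$. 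The resulting $\{R\}$-formula $\phi'$ would then stand in a weight-faithful one-to-one correspondence with $\phi$, with $\vert\sigma'\vert = \vert\sigma\vert + c$, so that $x$ is true in a cardinality-minimal model of $\phi$ if and only if it is true in one of $\phi'$, which is the desired reduction.

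The step I expect to be the crux is supplying the constant $0$ as a \emph{global} variable forced to $0$ in every cardinality-minimal model. If Lagerkvist's base genuinely exposes frozen $\F$- and $\T$-coordinates --- as happens for $\cc{II}{}{2}$, $\cc{IL}{}{2}$ and $\cc{IV}{}{2}$ --- then a single global pair $f,t$ threaded through all constraints suffices and the constant $0$ comes for free. If instead the base only permits one to enforce $f \neq t$ (the obstruction met for $\cc{IN}{}{2}$), I would reuse that case's weight-boosting device: add $N$ fresh copies constrained to equal $f$, with $N$ larger than the number of variables of $\phi$, so that any cardinality-minimal model is compelled to set $f=0$ and $t=1$. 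Determining, for the explicit minimal weak base and for arbitrary $k \ge 2$, precisely which coordinates are frozen and which are merely determined --- and hence which of these two constant-handling mechanisms is required, while checking that the pairing and boosting gadgets themselves remain expressible as $R$-constraints so that $\phi'$ stays an $\{R\}$-formula --- is the delicate bookkeeping on which the argument turns.
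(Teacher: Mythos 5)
Your overall skeleton matches the paper's: reduce from $\CMS(\rOR{2})$ (Lemma~\ref{lemma:cardminsat_completeness_or} plus Proposition~\ref{thm:weak_base_method}), thread global variables $f,t$ through the frozen $\F$- and $\T$-coordinates of the weak base, and fill the surplus OR-coordinates with the constant $0$ (the paper repeats $x_i^2$ there instead, but your variant works). The gap lies in your treatment of the coordinate $x_{k+1}$, which in this base is constrained by $(x_{k+1}\rightarrow x_1\cdots x_k)$ and is neither frozen nor uniquely determined (it is free exactly when $x_1=\dots=x_k=1$, forced to $0$ otherwise). You propose to neutralise it ``exactly as in the treatment of $\cc{II}{}{2}$'' by a partner variable forced to the complementary value, and you defer the check that this pairing gadget ``remains expressible as $R$-constraints''. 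That check fails, provably. Every clause in the defining formula of $R_{\cc{IS}{k}{00}}$ has at most one negative literal, so $R_{\cc{IS}{k}{00}}$ is dual-Horn, i.e.\ preserved by coordinatewise disjunction; polymorphisms carry over to everything pp-definable, so every relation in $\clos{R_{\cc{IS}{k}{00}}}=\cc{IS}{k}{00}$ is preserved by $\vee$. The disequality relation $\{01,10\}$ is not ($01\vee 10=11$). Hence no $\{R_{\cc{IS}{k}{00}}\}$-formula can force two of its variables to take complementary values in every model while leaving both values attainable: existentially quantifying the remaining variables of such a formula would place $\neq$ in $\cc{IS}{k}{00}$, a contradiction. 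So the central gadget of your reduction does not exist in this co-clone. (Your fallback, the $\cc{IN}{}{2}$-style booster, also rests on expressing $f\neq t$ and is equally unavailable; fortunately it is not needed, since the base does expose frozen $\F$- and $\T$-coordinates.)

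The fix is simpler than the gadget you reach for, and it is what the paper does: identify the coordinate $x_{k+1}$ with the global variable $f$ itself. Since $f$ also occupies the $\F$-coordinate, it is frozen to $0$ in every model, which renders the conjunct $(x_{k+1}\rightarrow x_1\cdots x_k)$ vacuous. Concretely, each clause $(x_i^1\vee x_i^2)$ becomes $R_{\cc{IS}{k}{00}}(x_i^1,x_i^2,\dots,x_i^2,f,f,t)$. Then no non-frozen auxiliary variables occur at all: the models of $\phi'$ are exactly the models of $\phi$ extended by $f=0$, $t=1$, the weight offset is the constant $1$, and cardinality-minimal models are preserved in both directions, completing the reduction.
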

\begin{proof}

Let $R_\cc{IS}{k}{00}$ be the minimal weak base of $\cc{IS}{k}{00}$, that is, $R_\cc{IS}{k}{00}(x_1, \ldots, x_k, x_{k+1}, x_{k+2}, x_{k+3})= (x_1\lor \ldots \lor x_k)\land (x_{k+1}\rightarrow x_1\cdots x_k)\land \F(x_{k+2})\land \T(x_{k+3})$.

We prove that 
$\CMS(\rOR{2}) \le \CMS(R_\cc{IS}{k}{00})$. Then the result follows from  Lemma~5 and Proposition~4.
Let $\displaystyle (\phi=\bigwedge_{i=1}^p (x_i^1\lor x_i^2), x)$ be an instance  of $\CMS(\rOR{2})$. 
Let  $f$ and $t$ be  fresh variables. Consider the formula 
$\displaystyle\phi'=\bigwedge_{i=1}^p 
R_\cc{IS}{k}{00}(x_i^1, x_i^2, \ldots, x_i^2, f, f, t)$.
Observe that the variables $f$ and $t$ are frozen in $\phi'$. 
Hence, $x$ belongs to a cardinality-minimal model of $\phi$ if and only if $x$ belongs to a cardinality-minimal model of $\phi'$.
\end{proof}

\begin{lemma}
Let $\clos{\Gamma} = \cc{IS}{k}{01}$, for some integer $k \geq 2$. Then $\CMS(\Gamma)$ is $\THETA{2}$-hard.
\end{lemma}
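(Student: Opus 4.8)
The plan is to prove $\THETA{2}$-hardness by exhibiting a reduction $\CMS(\rOR{2}) \reduc \CMS(R_{\cc{IS}{k}{01}})$, where $R_{\cc{IS}{k}{01}}$ is the minimal weak base of $\cc{IS}{k}{01}$ taken from Lagerkvist's table; the claim then follows from Lemma~\ref{lemma:cardminsat_completeness_or} together with Proposition~\ref{thm:weak_base_method}. First I would restate that weak base explicitly and read off its matrix to determine which coordinates are \emph{frozen}. As in the neighbouring co-clone $\cc{IS}{k}{00}$, I expect the relation to carry a $k$-ary positive clause $x_1 \lor \dots \lor x_k$, the ``degree-$k$'' implication $x_{k+1} \to x_1 \cdots x_k$, and one or two coordinates pinned to constants. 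The guiding identity is that $\rOR{2}(x,y)$ is pp-definable from $R_{\cc{IS}{k}{01}}$ by substituting $x,y$ into two of the OR-positions, collapsing the remaining OR-positions onto one of $x,y$ (padding the $k$-ary clause down to a binary one), feeding a $0$ into the antecedent $x_{k+1}$ so that the implication becomes vacuous, and supplying the constants on the pinned coordinates.

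Next I would turn this local definition into a \emph{model-preserving} reduction, i.e. one realised through $\closNN{\cdot}$, so that no existential variables or equalities are introduced and the cardinality of every model is shifted only by a fixed constant. For each clause $x_i^1 \lor x_i^2$ of the input I replace it by the corresponding $R_{\cc{IS}{k}{01}}$-constraint. The constant $1$ is available as a frozen coordinate (in every co-clone of this family $\T$ is expressible, e.g.\ $\T(x) \equiv \rOR{2}(x,x)$), so it poses no difficulty. The delicate point is the constant $0$: if $\cc{IS}{k}{01}$ does not expose $0$ as a frozen coordinate, I would import it through a single global variable $f$ used in place of $0$ in all constraints, and force $f = 0$ in every cardinality-minimal model by adding $N$ copies of a gadget tying fresh variables to $f$, exactly as in the treatment of the case $\cc{II}{}{1}$; choosing $N$ larger than the number of variables of the input guarantees that no model with $f = 1$ can be cardinality-minimal. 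Any auxiliary coordinate of $R_{\cc{IS}{k}{01}}$ whose value is \emph{not} frozen I would neutralise by the complementary-pair trick of Lemma~\ref{prop:II2}: introduce a twin variable forced to the opposite value, so that the pair contributes a constant weight $1$ to every model.

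It then remains to check the two correspondences. Every model $\sigma$ of the input $\rOR{2}$-formula should extend uniquely to a model $\sigma'$ of the constructed formula with $f = 0$ and $|\sigma'| = |\sigma| + c$ for a fixed offset $c$ depending only on $p$, $N$ and $k$; conversely every cardinality-minimal model of the constructed formula should set $f = 0$, restrict to a model of the input, and again differ in weight by $c$. Since the offset is constant, cardinality-minimal models are preserved in both directions, and $x$ lies in a cardinality-minimal model on one side iff it does on the other. I expect the main obstacle to be bookkeeping rather than conceptual: precisely identifying, from the matrix of $R_{\cc{IS}{k}{01}}$, which auxiliary coordinates are frozen and which must be neutralised, and making sure the gadget simulating $0$ neither alters the set of models over the original variables nor introduces a spurious model of smaller weight. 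If instead $0$ is directly frozen in this weak base, the argument collapses to a verbatim copy of the $\cc{IS}{k}{00}$ reduction, which is the easy case.
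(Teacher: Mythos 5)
Your proposal is correct and, at its core, is the same reduction as the paper's: the same weak base, the same substitution pattern $R_{\cc{IS}{k}{01}}(x_i^1, x_i^2, \ldots, x_i^2, f, t)$ (pad the $k$-ary OR with copies of $x_i^2$, place a global variable $f$ in the antecedent coordinate $x_{k+1}$, place $t$ in the frozen $\T$-coordinate), followed by an appeal to Lemma~\ref{lemma:cardminsat_completeness_or} and Proposition~\ref{thm:weak_base_method}. The one place where you diverge is the treatment of the non-frozen variable $f$. You reach for the heavy tool from the $\cc{II}{}{1}$ case: $N$ weight-boosting gadgets of the form $R_{\cc{IS}{k}{01}}(g_j^1,\ldots,g_j^k,f,t)$, so that $f=1$ forces gadget weight $kN$ while $f=0$ allows weight $N$, with $N$ exceeding the number of input variables. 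This is sound (the weight differential $(k-1)N\geq N$ suffices), but it costs extra bookkeeping, and one detail you state is then false: a model of $\phi$ no longer extends \emph{uniquely} to a model of the constructed formula, since each gadget accepts any nonempty subset of its variables; only the correspondence of cardinality-minimal models survives, which is all you actually need. The paper's proof is leaner: no gadget is required at all, because $f$ occurs only in the antecedent position of the implication $x_{k+1}\rightarrow x_1\cdots x_k$, so flipping $f$ from $1$ to $0$ in any model of $\phi'$ yields again a model, of strictly smaller weight. Hence every cardinality-minimal model of the plain formula $\bigwedge_{i=1}^p R_{\cc{IS}{k}{01}}(x_i^1, x_i^2, \ldots, x_i^2, f, t)$ automatically sets $f=0$, and cardinality-minimal models are preserved with the fixed offset $1$ coming from $t$. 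So both arguments establish the lemma; yours trades a one-line monotonicity observation for an $N$-fold gadget and a larger case analysis.
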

\begin{proof}
Let $R_\cc{IS}{k}{01}$ be the minimal weak base of $\cc{IS}{k}{01}$, that is,

$R_\cc{IS}{k}{01}(x_1, \ldots, x_k, x_{k+1}, x_{k+2}) = (x_1\lor \ldots \lor x_k)\land (x_{k+1}\rightarrow x_1\cdots x_k)\land \T(x_{k+2})$.
We prove that 
$\CMS(\rOR{2}) \le \CMS(R_\cc{IS}{k}{01})$. Then the result follows from  Lemma~5 and Proposition~4.
Let $\displaystyle(\phi=\bigwedge_{i=1}^p (x_i^1\lor x_i^2), x)$ be an instance  of 
$\CMS(\rOR{2})$. 
Let  $f$ and $t$ be  fresh variables. Consider the formula 
$\displaystyle\phi'=\bigwedge_{i=1}^p 
R_\cc{IS}{k}{01}(x_i^1, x_i^2, \ldots, x_i^2, f,  t)$.
This time $t$ is frozen, but $f$ is not. Nevertheless, if there is any model with $f=1$, it will remain a model when flipping $f$ to 0. Therefore, this reduction preserves the cardinality-minimal models. 
Hence, $x$ belongs to a cardinality-minimal model of $\phi$ if and only if $x$ belongs to a cardinality-minimal model of $\phi'$.
\end{proof}

 In exactly the same way, but even more easily, we obtain the following.
 
\begin{lemma}
Let $\clos{\Gamma} = \cc{IS}{k}{02}$ or $\clos\Gamma = \cc{IS}{k}{0}$. Then $\CMS(\Gamma)$ is $\THETA{2}$-hard.
\end{lemma}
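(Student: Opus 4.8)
The plan is to treat the two co-clones uniformly, following the template of the two preceding lemmas, and to reduce from $\CMS(\rOR{2})$ in both cases; $\THETA{2}$-hardness then follows from Lemma~\ref{lemma:cardminsat_completeness_or} together with Proposition~\ref{thm:weak_base_method}. The only difference with the cases $\cc{IS}{k}{00}$ and $\cc{IS}{k}{01}$ lies in which constant coordinates the minimal weak base carries. Following the same pattern as above, the minimal weak base of $\cc{IS}{k}{02}$ is \[ R_{\cc{IS}{k}{02}}(x_1,\ldots,x_k,x_{k+1},x_{k+2}) = (x_1\lor\cdots\lor x_k)\land(x_{k+1}\rightarrow x_1\cdots x_k)\land\F(x_{k+2}), \] which keeps only the false-constant coordinate, while the minimal weak base of $\cc{IS}{k}{0}$ is \[ R_{\cc{IS}{k}{0}}(x_1,\ldots,x_k,x_{k+1}) = (x_1\lor\cdots\lor x_k)\land(x_{k+1}\rightarrow x_1\cdots x_k), \] carrying no constant coordinate at all.

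For $\cc{IS}{k}{02}$ I would introduce a single fresh variable $f$ and map an instance $(\phi=\bigwedge_{i=1}^p(x_i^1\lor x_i^2),x)$ of $\CMS(\rOR{2})$ to $\phi'=\bigwedge_{i=1}^p R_{\cc{IS}{k}{02}}(x_i^1,x_i^2,\ldots,x_i^2,f,f)$. The last coordinate imposes $\F(f)$, so $f$ is frozen to $0$; since the $(k{+}1)$-th argument (the antecedent of the implication) is also $f=0$, the implication is vacuously satisfied and each constraint collapses to exactly $x_i^1\lor x_i^2$. Thus the models of $\phi$ and $\phi'$ are in weight-preserving bijection (the unique fresh variable $f$ being $0$), and $x$ lies in a cardinality-minimal model of $\phi$ iff it does in one of $\phi'$. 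This is even more direct than the $\cc{IS}{k}{00}$ case, since there is no $t$ coordinate to fill.

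For $\cc{IS}{k}{0}$ I would use the same local replacement, $\phi'=\bigwedge_{i=1}^p R_{\cc{IS}{k}{0}}(x_i^1,x_i^2,\ldots,x_i^2,f)$, with a single fresh $f$. Now $f$ is not frozen, but I would argue exactly as in the $\cc{IS}{k}{01}$ case: in any model with $f=1$ the implication $f\rightarrow x_i^1\cdots x_i^2$ forces all the $x_i^1,x_i^2$ to $1$, and flipping $f$ to $0$ keeps all disjunctions true and makes every implication vacuous, hence yields a model of strictly smaller weight. Consequently no model with $f=1$ is cardinality-minimal, all cardinality-minimal models satisfy $f=0$, and these correspond, with equal weight, precisely to the models of $\phi$; the reduction therefore preserves the answer.

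There is no genuine obstacle here: the argument is a direct specialisation of the two previous lemmas, which is exactly why the statement can be asserted ``in the same way, but even more easily.'' The only points that require a line of checking are that the local replacement expresses $x_i^1\lor x_i^2$ and nothing more --- which follows from placing $x_i^1,x_i^2,\ldots,x_i^2$ on the OR-coordinates and neutralising the implication by a $0$-valued antecedent --- and that, in the $\cc{IS}{k}{0}$ case, the flip $f\mapsto 0$ both strictly decreases the weight and preserves modelhood.
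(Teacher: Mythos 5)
Your reduction template is fine, but the proof does not establish the lemma, because the relations you take as ``minimal weak bases'' of $\cc{IS}{k}{02}$ and $\cc{IS}{k}{0}$ are not their weak bases. According to Lagerkvist's table (Table~\ref{table:weak_bases}), the minimal weak base of $\cc{IS}{k}{02}$ is $\orn{}{k}(x_1,\ldots,x_k)\wedge\F(x_{k+1})\wedge\T(x_{k+2})$ and that of $\cc{IS}{k}{0}$ is $\orn{}{k}(x_1,\ldots,x_k)\wedge\T(x_{k+1})$: the conjunct that disappears, compared with $\cc{IS}{k}{00}$ and $\cc{IS}{k}{01}$, is the implication $(x_{k+1}\rightarrow x_1\cdots x_k)$, while the constant coordinate $\T$ (and also $\F$ for $\cc{IS}{k}{02}$) remains. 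You assumed exactly the opposite pattern, keeping the implication and dropping constants. This matters because the whole method rests on Proposition~\ref{thm:weak_base_method}: hardness of $\CMS(B)$ transfers to every $\Gamma$ with $\clos{\Gamma}=\cocl$ only when $B$ is a minimal weak base of $\cocl$.

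The gap is not cosmetic: the two relations you use do not even belong to the co-clones in question. From your relation for $\cc{IS}{k}{0}$, existentially quantifying $x_2,\ldots,x_k$ yields exactly the implication relation $(x_{k+1}\rightarrow x_1)$, whose polymorphisms are precisely the monotone functions; but $\cc{IS}{k}{0}=\mathrm{Inv}(\cloneFont{S}^k_0)$, and $\cloneFont{S}^k_0$ contains non-monotone functions, e.g.\ $f(x,y,z)=x\vee(y\wedge\neg z)$, which is $0$-separating (every tuple of $f^{-1}(0)$ has $x=0$) but not monotone. Since co-clones are closed under existential quantification, the implication relation, and hence your relation, lies outside $\cc{IS}{k}{0}$; the same function $f$ (which is also in $\cloneFont{S}^k_{02}$) witnesses the analogous failure for $\cc{IS}{k}{02}$. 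In fact your two relations generate $\cc{IS}{k}{01}$ and $\cc{IS}{k}{00}$ respectively, so your argument merely re-proves the two preceding lemmas rather than the stated one. The correct proof is the one the paper gives: run exactly the reduction of the $\cc{IS}{k}{00}$ and $\cc{IS}{k}{01}$ cases with the implication conjunct omitted, i.e.\ $\phi'=\bigwedge_{i=1}^p R_{\cc{IS}{k}{02}}(x_i^1,x_i^2,\ldots,x_i^2,f,t)$, respectively $\phi'=\bigwedge_{i=1}^p R_{\cc{IS}{k}{0}}(x_i^1,x_i^2,\ldots,x_i^2,t)$. There all fresh variables ($f$ and $t$, resp.\ $t$ alone) are frozen, so the models of $\phi'$ are in weight-preserving (up to an additive constant) bijection with those of $\phi$, and no flipping argument is needed at all.
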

\begin{proof}
We use exactly the same reduction as for $\cc{IS}{k}{01}$ and $\cc{IS}{k}{00}$, the only difference in the weak bases is the non-presence of the clause $(x_{k+1}\rightarrow x_1\cdots x_k)$ which will be omitted.
\end{proof}

This concludes the proof of Theorem~\ref{theorem:classification}.
We restate here Theorem~\ref{theorem:classification} in terms of co-clones.
 
\begin{theorem}
Let $\Gamma$ be a finite constraint language. The problem $\CMS(\Gamma)$ is
\begin{itemize}
\item $\THETA{2}$-complete if $\cocl \subseteq \clos{\Gamma} \subseteq \cc{II}{}{2}$
for $\cocl \in \{\cc{IS}{2}{0}, \cc{IL}{}{3}, \cc{IL}{}{1}\}$,
%
%
\item polynomial time solvable if either $\cc{ID}{}{} \subseteq \clos{\Gamma} \subseteq \cc{ID}{}{1}$ or  $\cc{IR}{}{1} \subseteq \clos{\Gamma} \subseteq \cc{IE}{}{2}$, 
\item trivial otherwise ($\Gamma$ is 0-valid)
\end{itemize}
\end{theorem}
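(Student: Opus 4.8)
The plan is to read the restated classification off of Theorem~\ref{theorem:classification} by translating its three syntactic alternatives (0-valid, Horn, width-2-affine) into membership conditions in the co-clone lattice and then locating the boundaries on Post's lattice (figure~\ref{fig:posts-lattice}). The identifications I would use are the standard ones: a constraint language $\Gamma$ is 0-valid exactly when $\clos{\Gamma}\subseteq\cc{II}{}{0}$, it is Horn exactly when $\clos{\Gamma}\subseteq\cc{IE}{}{2}$, and it is \daffine exactly when $\clos{\Gamma}\subseteq\cc{ID}{}{1}$; each of these inclusions is invariant under $\clos{\cdot}$ because primitive positive definitions preserve 0-validity, the Horn property and width-2 affinity. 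Since $\CMS(\Gamma)$ always lies in $\THETA{2}$ (the general problem being the prototypical $\THETA{2}$ problem, and restricting to $\Gamma$-formulas being a special case), every hardness result upgrades automatically to completeness, so it only remains to match the three regions.

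For the \emph{trivial} case I would note that $\clos{\Gamma}\subseteq\cc{II}{}{0}$ makes every relation of $\Gamma$ 0-valid, so the all-$0$ assignment is a model and is cardinality-minimal; as it sets $x$ to $0$, the answer is always ``no''. For the \emph{polynomial} case I would peel off the two genuinely non-trivial sub-regions. Inside the Horn part the minimal co-clone that is not 0-valid is $\cc{IR}{}{1}$: it is contained in $\cc{IE}{}{2}$ (because the Horn clone $[\wedge]$ is $1$-reproducing, so $\cc{IR}{}{1}=\text{Inv}(R_1)\subseteq\text{Inv}([\wedge])=\cc{IE}{}{2}$) and it already contains $\T=\{1\}$, hence is not a subset of $\cc{II}{}{0}$. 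Thus $\cc{IR}{}{1}\subseteq\clos{\Gamma}\subseteq\cc{IE}{}{2}$ describes exactly the Horn co-clones that escape the trivial case, where $\CMS(\Gamma)$ is in polynomial time by the unique-cardinality-minimal-model/unit-propagation argument. Symmetrically, $\cc{ID}{}{}$ is the minimal width-2-affine co-clone that is not 0-valid (it contains the disequality relation $\neq$), so $\cc{ID}{}{}\subseteq\clos{\Gamma}\subseteq\cc{ID}{}{1}$ captures the width-2-affine co-clones outside the trivial case, where tractability follows from Theorem~\ref{theorem:krom_classification}. A Horn co-clone cannot contain $\cc{ID}{}{}$ (that would force non-Horn relations) and conversely, so every non-0-valid co-clone in the polynomial region lands in exactly one of these two intervals.

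For the \emph{hard} case I would invoke Proposition~\ref{prop:all} together with Theorem~\ref{theorem:krom_classification} for $\cc{ID}{}{2}$: these establish $\THETA{2}$-hardness for every co-clone that is simultaneously not 0-valid, not Horn, and not width-2-affine. The concluding step is the lattice bookkeeping, namely verifying on Post's lattice that the co-clones which are neither 0-valid, nor Horn, nor width-2-affine are precisely those lying above at least one of the three minimal ``bad'' co-clones $\cc{IS}{2}{0}$, $\cc{IL}{}{3}$, $\cc{IL}{}{1}$. One checks from their weak bases that each is indeed bad (none is 0-valid, owing to the $\neq$- or $\T$-forcing conjuncts; none is Horn; none is width-2-affine), and that descending below any of them reaches a tractable co-clone (below $\cc{IL}{}{3}$ and $\cc{IL}{}{1}$ one meets $\cc{ID}{}{1}$; the Krom co-clone $\cc{ID}{}{2}$ sits above $\cc{IS}{2}{0}$ and is hard via $\CMS(\rOR{2})$).

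The main obstacle, and essentially the only content beyond Theorem~\ref{theorem:classification}, is this last lattice verification: one must confirm that the two stated intervals $\cc{ID}{}{}\subseteq\clos{\Gamma}\subseteq\cc{ID}{}{1}$ and $\cc{IR}{}{1}\subseteq\clos{\Gamma}\subseteq\cc{IE}{}{2}$ exactly exhaust the non-0-valid width-2-affine and Horn co-clones, that the three named co-clones are exactly the minimal members of the complementary hard region, and that Proposition~\ref{prop:all} lists every co-clone needed to cover that region. This is a finite inspection of Post's lattice, but it is where the correspondence between the syntactic trichotomy of Theorem~\ref{theorem:classification} and the lattice partition must be pinned down with care.
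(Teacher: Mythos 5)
Your proposal is correct and takes essentially the same route as the paper: the paper gives no separate argument for this theorem, treating it as a restatement of Theorem~\ref{theorem:classification} whose proof is exactly the ingredients you invoke (triviality for 0-valid languages, Horn tractability via the unique cardinality-minimal model, Theorem~\ref{theorem:krom_classification} for the width-2-affine part and for $\cc{ID}{}{2}$, Proposition~\ref{prop:all} for all remaining hard co-clones, membership in $\THETA{2}$ throughout), combined with the Post's-lattice bookkeeping you describe.

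Two small slips occur in your lattice asides, precisely in the part you flag as requiring care, though neither invalidates the argument. First, $\cc{ID}{}{}$ is not \emph{the} minimal non-0-valid width-2-affine co-clone: $\cc{IR}{}{1}$ and $\cc{IR}{}{2}$ are also width-2-affine and non-0-valid yet do not contain $\neq$, so they lie outside the interval $[\cc{ID}{}{},\cc{ID}{}{1}]$; they are absorbed by the Horn interval $[\cc{IR}{}{1},\cc{IE}{}{2}]$, which is why the union of the two intervals is nevertheless exhaustive, as you correctly assert. Second, $\cc{ID}{}{1}$ does \emph{not} lie below $\cc{IL}{}{3}$ or $\cc{IL}{}{1}$: it contains both $\T$ and $\F$, whereas every relation in $\cc{IL}{}{3}$ is complementive and every relation in $\cc{IL}{}{1}$ is 1-valid. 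What one actually meets strictly below $\cc{IL}{}{3}$ is $\cc{ID}{}{}$ (together with the 0-valid $\cc{IL}{}{}$), and strictly below $\cc{IL}{}{1}$ one meets $\cc{IR}{}{1}$ and $\cc{IL}{}{}$; since these are all tractable or trivial, the minimality of the three named hard co-clones survives the correction, so your overall verification plan goes through once these facts are stated accurately.
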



\section{Example of application}

Let us now consider the following relevance problem for abduction. A {\em propositional abduction problem\/} (PAP)~$\pap$ consists of a
tuple $\tuple{V,H,M,T}$, where~$V$ is a finite set of \emph{variables}, $H \subseteq V$ is the set of \emph{hypotheses}, $M\subseteq V$ is the set of \emph{manifestations}, and~$T$ is a consistent \emph{theory} in the form of a propositional formula. A set $\s \subseteq H$ is a \emph{solution} (also called \emph{explanation}) to~$\pap$ if $T \cup \s$ is consistent and $T \cup \s \models M$ holds. 
Often, one is not interested in any solution of a given PAP $\pap$ but only in minimal
solutions, where minimality is defined w.r.t. set inclusion or smaller cardinality.

For subset-minimality the relevance problem has been completely classified in Schaefer's framework by Creignou and Zanuttini \cite{siamcomp/CreignouZ06}. 
 Here we  consider the following decision problem.

\noindent
\dproblem{\relevance }%
{PAP $\pap = \tuple{V,H,M,T}$ and hypothesis $h \in H$.}%
{Is $h$ relevant, i.e., does $\pap$ admit a cardinality-minimal solution $\s$ such that $h \in \s$?}

\smallskip

It is known that the \relevance problem is $\THETA 3$-complete in its full generality and $\THETA 2$-complete in the Horn case \cite{jacm/EiterG95}. The Krom case has been considered afterwards \cite{CreignouPW18}.
  The complexity results obtained so far  for the \relevance problem  were  restricted, due to an incomplete picture of the complexity of $\Cardminsat$. 
With the help of Theorem \ref{theorem:classification} we extend these results in showing that the complexity of \relevance in the affine case matches the Horn and Krom cases.

\begin{theorem} \label{theorem:abduction}
 \relevance is $\THETA 2$-complete   even if the theory is restricted to XOR-CNF-formulas.
\end{theorem}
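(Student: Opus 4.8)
The plan is to prove both membership in $\THETA{2}$ and $\THETA{2}$-hardness for \relevance restricted to XOR-CNF theories. For membership, I would argue that the general upper bound transfers to the affine case: deciding whether a set $\s$ of hypotheses is an explanation amounts to checking consistency of $T \cup \s$ and the entailment $T \cup \s \models M$, both of which are polynomial-time tasks when $T$ is affine (Gaussian elimination solves satisfiability and implication of affine systems). The cardinality-minimality layer then contributes only a logarithmic number of $\NP$-oracle calls: binary search on the minimum cardinality $c$ of an explanation (each query ``is there an explanation of weight $\le c$?'' is in $\NP$), followed by a final query asking whether some explanation of weight exactly $c$ contains the target hypothesis $h$. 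This places the problem in $\PP^{\NP[O(\log n)]} = \THETA{2}$.

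For hardness, the natural strategy is a reduction from $\CMS(\rXOR{3})$, which is $\THETA{2}$-hard by Lemma~\ref{lemma:cardminsat_completeness_xor}, since that problem is already phrased in terms of cardinality-minimal models of an XOR-CNF formula. First I would take an instance $(\phi, x)$ of $\CMS(\rXOR{3})$, where $\phi$ is a conjunction of $\rxort$-constraints over variables $\var(\phi)$. The key obstacle is the mismatch between the two notions of minimality: $\CMS$ minimizes the cardinality of a \emph{model} (number of variables set to $1$), whereas \relevance minimizes the cardinality of a \emph{solution} $\s \subseteq H$ (a set of hypotheses). I would bridge this by making every variable of $\phi$ a hypothesis, i.e. set $H = \var(\phi)$, and then engineer the theory $T$ and manifestations $M$ so that the cardinality-minimal solutions correspond exactly to the cardinality-minimal models of $\phi$. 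The crux is encoding ``$\s$ is a solution iff the assignment setting exactly the variables in $\s$ to $1$ is a model of $\phi$''.

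Concretely I would let $T$ contain $\phi$ together with gadgets that force each hypothesis in $\s$ to be true and each hypothesis outside $\s$ to be false in the relevant models, so that the unique model extending a consistent $T \cup \s$ assigns $1$ precisely to the variables of $\s$; the manifestation set $M$ and auxiliary variables would be chosen to make $T \cup \s \models M$ hold exactly when $\s$, read as a truth assignment, satisfies $\phi$. Under such an encoding, $\card{\s}$ equals the weight of the corresponding model, so cardinality-minimal solutions match cardinality-minimal models, and $h = x$ is relevant iff $x$ is true in some cardinality-minimal model of $\phi$. The delicate point, and the step I expect to require the most care, is ensuring that the consistency and entailment conditions of the abduction framework interact correctly with the cardinality count — in particular that introducing any auxiliary variables needed in $T$ (to express the constraints while keeping $T$ an XOR-CNF formula and keeping $T$ consistent as required by the PAP definition) does not alter which hypothesis sets are cardinality-minimal. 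I would handle this exactly as in the $\CMS(\rXOR{3})$ hardness proof: any non-hypothesis auxiliary variable is either frozen or neutralized by a complementary partner so that its contribution is fixed across all models and thus irrelevant to the minimization over $\s \subseteq H$.
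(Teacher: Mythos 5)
Your membership argument is fine and coincides with the paper's (which just notes that satisfiability/entailment for XOR-CNF is polynomial, so the standard binary search plus one final oracle query applies). The hardness half, however, has a genuine gap, and it sits exactly at the step you yourself call the crux. With $H=\var(\phi)$ and a \emph{fixed} theory $T$, no gadget can make ``the unique model extending a consistent $T\cup\s$ assign $1$ precisely to the variables of $\s$'': under the paper's definition, $\s$ is added as positive unit clauses, so the models of $T\cup\s$ contain the models of $T\cup\s'$ for every consistent $\s'\supseteq\s$; if both $T\cup\s$ and $T\cup\s'$ had unique models setting exactly $\s$, resp.\ exactly $\s'$, to true, the latter model would be a second model of $T\cup\s$, a contradiction. ``Hypotheses outside $\s$ are false'' is a closed-world/circumscription reading that the entailment semantics $T\cup\s\models M$ cannot simulate. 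This is not a removable technicality precisely because XOR is non-monotone: with the natural clause gadgets $g_i \equiv x_i^1\oplus x_i^2\oplus x_i^3$ and $M=G$, the entailment $T\cup\s\models g_i$ holds only when \emph{every} variable of clause $i$ is forced by $\s$ (any unassigned variable can be flipped to falsify the clause), not when the assignment ``exactly $\s$ true'' happens to satisfy it; so the only solution is typically $\s=\var(\phi)$, and cardinality-minimal solutions say nothing about cardinality-minimal models. (For positive OR-clauses, as in the Krom case of CreignouPW18, the two conditions do coincide by monotonicity; that is what your sketch implicitly imports, and it is exactly what breaks in the affine case.) Your closing appeal to the frozen/complementary-partner trick from the $\CMS(\rxort)$ proof is also off target: non-hypothesis variables never count towards $\card{\s}$ anyway, so the issue is not weight accounting but entailment.

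For comparison, the paper's proof uses the very skeleton you propose ($H=X$, $M=G$, theory clauses $x_i^1\oplus x_i^2\oplus x_i^3\oplus\bar g_i$) and settles the crux with the one-line claim that ``the models of $\phi$ coincide with the solutions of $\pap$''; your instinct that this correspondence is the delicate point is correct, since by the argument above that coincidence is problematic for XOR clauses under the stated solution semantics, so extra machinery is genuinely needed rather than optional. A construction that closes the gap must abandon $H=\var(\phi)$ and price the two truth values asymmetrically, for instance: for each variable $x$ introduce hypotheses $a_x,b_x,c_x$ and XOR-clauses $(x\oplus a_x\oplus b_x)$ and $(x\oplus c_x)$ in $T$, so that forcing $x$ true requires $\{a_x,b_x\}\subseteq\s$ (cost $2$), forcing $x$ false requires $c_x\in\s$ (cost $1$), and taking all three is inconsistent; each manifestation $g_i$ is then entailed iff all variables of clause $i$ are forced and the clause is satisfied, whence cardinality-minimal solutions have size $n+\card{\sigma}$ for cardinality-minimal models $\sigma$, and $x$ is true in some cardinality-minimal model of $\phi$ iff $a_x$ is relevant. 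Without a gadget of this kind (or some other mechanism achieving the same asymmetry), your proposal does not yield a proof.
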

\begin{proof} 
Membership follows from the fact that one can decide the satisfiability of an XOR-CNF formula in polynomial time.
 The hardness proof is obtained via a reduction from $\CMS(\rXOR{3}).$

 Consider an
arbitrary instance $(\phi, x_i)$ of $\CMS(\rXOR{3})$. 
Let $\phi =  \bigwedge_{i=1}^p ( x_i^1\oplus x_i^2\oplus x_i^3)$
over variables $X = \{x_1, \dots, x_n\}$
and let $G = \{g_1, \dots,  g_p \}$ be a set of fresh, pairwise distinct variables.
We define the PAP $\pap = \tuple{V,H,M,T}$ as follows: 
  \begin{eqnarray*}
    V &=& X \cup G \\
    H & = & X \\
    M & = &  G\\
    T &=& \{(x_i^1\oplus x_i^2\oplus x_i^3\oplus \bar g_i) \mid 1 \leq i \leq p \}
  \end{eqnarray*}
It is easy to verify that 
the models of $\phi$ coincide with the solutions of $\pap$. Hence,
$x_i$ is in a cardinality-minimal model of $\phi$ if and only if 
$x_i$ is in a cardinality-minimal solution of $\pap$.

Note that, more precisely, the proof shows the hardness of \relevance for $\rxor{4}$-formulas.

\end{proof}

\section{Conclusion}

In this paper we obtained a complete complexity classification of the problem $\Cardminsat(\Gamma)$ for all finite constraint languages $\Gamma$: if $\Gamma$ is width-2-affine, Horn or 0-valid,  $\Cardminsat(\Gamma)$ is solvable in polynomial time, otherwise it is $\THETA 2$-complete. The weak base method developed by Schnoor and Schnoor \cite{scs08}, completed with  the description of  minimal weak bases for co-clones by Lagerkvist \cite{Lagerkvist14} proved to be a valuable tool for this endeavor. As described  in the introduction understanding the complexity of $\Cardminsat$ is crucial for the study of several reasoning tasks in artificial intelligence that are based on minimizing cardinality. As we have motivated and outlined above we believe the establishment of the complete complexity picture of $\Cardminsat(\Gamma)$ is a cornerstone for future research in this direction:
it will allow the precise analysis of the computational complexity of problems such as relevance questions and belief revision operators. To obtain a richer picture we further plan to investigate the parametrized complexity of such problems. For instance, in \cite{MaMeSc21} a rich picture of the parametrized complexity of abduction problems is obtained. Yet, the named abduction relevance problem in this picture is missing. With the now established complete complexity classification  of $\Cardminsat$ it seems in reach to complete this picture.

\section*{Acknowledgments}
This research has been supported by the Centre International de Recherche Math\'ematique (CIRM), Project reference \textit{Research in pairs} 1886.

\bibliographystyle{plain}

\begin{thebibliography}{10}

\bibitem{ipl/BohlerRSV05}
E.~B{\"{o}}hler, S.~Reith, H.~Schnoor, and H.~Vollmer.
\newblock Bases for boolean co-clones.
\newblock {\em Inf. Process. Lett.}, 96(2):59--66, 2005.

\bibitem{CreignouKP17}
N.~Creignou, R.~Ktari, and O.~Papini.
\newblock Complexity of model checking for cardinality-based belief revision
  operators.
\newblock In {\em In Proc. {ECSQARU} 2017}, volume 10369 of {\em LNCS}, pages
  387--397. Springer, 2017.

\bibitem{CreignouPW18}
N.~Creignou, R.~Pichler, and S.~Woltran.
\newblock Do hard sat-related reasoning tasks become easier in the krom
  fragment?
\newblock {\em Log. Methods Comput. Sci.}, 14(4), 2018.

\bibitem{siamcomp/CreignouZ06}
N.~Creignou and B.~Zanuttini.
\newblock A complete classification of the complexity of propositional
  abduction.
\newblock {\em SIAM J. Comput.}, 36(1):207--229, 2006.

\bibitem{CreignouE014}
Nadia Creignou, Uwe Egly, and Johannes Schmidt.
\newblock Complexity classifications for logic-based argumentation.
\newblock {\em {ACM} Trans. Comput. Log.}, 15(3):19:1--19:20, 2014.

\bibitem{dagstuhl/CreignouV08}
Nadia Creignou and Heribert Vollmer.
\newblock Boolean constraint satisfaction problems: When does post's lattice
  help?
\newblock In Nadia Creignou, Phokion~G. Kolaitis, and Heribert Vollmer,
  editors, {\em Complexity of Constraints - An Overview of Current Research
  Themes [Result of a Dagstuhl Seminar]}, volume 5250 of {\em Lecture Notes in
  Computer Science}, pages 3--37. Springer, 2008.

\bibitem{Dalal88}
M.~Dalal.
\newblock Investigations into theory of knowledge base revision.
\newblock In {\em Proc.\ AAAI}, pages 475--479. AAAI Press / The MIT Press,
  1988.

\bibitem{ai/EiterG92}
T.~Eiter and G.~Gottlob.
\newblock On the complexity of propositional knowledge base revision, updates,
  and counterfactuals.
\newblock {\em Artif. Intell.}, 57(2-3):227--270, 1992.

\bibitem{jacm/EiterG95}
T.~Eiter and G.~Gottlob.
\newblock The complexity of logic-based abduction.
\newblock {\em J. ACM}, 42(1):3--42, 1995.

\bibitem{mst/GasarchKR95}
W.~I. Gasarch, M.~W. Krentel, and K.~J. Rappoport.
\newblock {OptP} as the normal behavior of {NP}-complete problems.
\newblock {\em Mathematical Systems Theory}, 28(6):487--514, 1995.

\bibitem{jeavons98}
P.~G. Jeavons.
\newblock On the algebraic structure of combinatorial problems.
\newblock {\em Theor. Comput. Sci.}, 200:185--204, 1998.

\bibitem{jcss/Krentel88}
M.~W. Krentel.
\newblock The complexity of optimization problems.
\newblock {\em J. Comput. Syst. Sci.}, 36(3):490--509, 1988.

\bibitem{Lagerkvist14}
V.~Lagerkvist.
\newblock Weak bases of boolean co-clones.
\newblock {\em Inf. Process. Lett.}, 114(9):462--468, 2014.

\bibitem{jcss/LiberatoreS01}
P.~Liberatore and M.~Schaerf.
\newblock Belief revision and update: Complexity of model checking.
\newblock {\em J. Comput. Syst. Sci.}, 62(1):43--72, 2001.

\bibitem{MaMeSc21}
Y.~Mahmood, A.~Meier, and J.~Schmidt.
\newblock Parameterized complexity of abduction in {S}chaefer's framework.
\newblock {\em J. Log. Comput.}, 31(1):266--296, 2021.

\bibitem{ai/NordhZ08}
G.~Nordh and B.~Zanuttini.
\newblock What makes propositional abduction tractable.
\newblock {\em Artif. Intell.}, 172(10):1245--1284, 2008.

\bibitem{NordhZ09}
G.~Nordh and B.~Zanuttini.
\newblock Frozen boolean partial co-clones.
\newblock In {\em Proc.\ ISMVL}, pages 120--125, 2009.

\bibitem{Nordh04}
Gustav Nordh.
\newblock A trichotomy in the complexity of propositional circumscription.
\newblock In Franz Baader and Andrei Voronkov, editors, {\em Logic for
  Programming, Artificial Intelligence, and Reasoning, 11th International
  Conference, {LPAR} 2004, Montevideo, Uruguay, March 14-18, 2005,
  Proceedings}, volume 3452 of {\em Lecture Notes in Computer Science}, pages
  257--269. Springer, 2004.

\bibitem{Papadimitriou94}
C.~H. Papadimitriou.
\newblock {\em Computational Complexity}.
\newblock Addison-Wesley, 1994.

\bibitem{pos41}
E.~L. Post.
\newblock The two-valued iterative systems of mathematical logic.
\newblock {\em Annals of Mathematical Studies}, 5:1--122, 1941.

\bibitem{stoc/Schaefer78}
T.~J. Schaefer.
\newblock The complexity of satisfiability problems.
\newblock In {\em Proc.\ STOC}, pages 216--226. ACM, 1978.

\bibitem{scs08}
H.~Schnoor and I.~Schnoor.
\newblock Partial polymorphisms and constraint satisfaction problems.
\newblock In N.~Creignou, {Ph}.~G. Kolaitis, and H.~Vollmer, editors, {\em
  Complexity of Constraints}, volume 5250, pages 229--254. Springer Verlag,
  Berlin Heidelberg, 2008.

\bibitem{icalp/Wagner88}
K.~Wagner.
\newblock On restricting the access to an {NP-}oracle.
\newblock In {\em Proc.\ ICALP}, volume 317 of {\em LNCS}, pages 682--696.
  Springer, 1988.

\end{thebibliography}

\renewcommand{\arraystretch}{1.4}

\begin{table*}[t]  \scriptsize
\centering
\begin{tabularx}{\textwidth+.1cm}{l l l}
  \hline
  Co-clone  & Minimal weak base & Name/Indication \\
  \hline
  $\cc{IBF}{}{}$  & $(x_1 = x_2)$ & -\\
  $\cc{IR}{}{0}$  & $\F(x_1)$ & -\\ 
  $\cc{IR}{}{1}$  & $\T(x_1)$ & -\\ 
  $\cc{IR}{}{2}$  & $\F(x_1) \wedge \T(x_2)$ & -\\
  $\cc{IM}{}{}$  & $(x_1 \rightarrow x_2)$ & implicative and 0- and 1-valid\\  
  $\cc{IM}{}{0}$  & $(x_1 \rightarrow x_2) \wedge \F(x_3)$ & implicative and 0-valid\\
  $\cc{IM}{}{1}$  & $(x_1 \rightarrow x_2) \wedge \T(x_3)$ & implicative and 1-valid \\ 
  $\cc{IM}{}{2}$ & $(x_1 \rightarrow x_2) \wedge \F(x_3) \wedge \T(x_4)$ & implicative \\
  $\cc{IS}{k}{0}, k \geq 2 $ & $\orn{}{k}(x_1, \ldots, x_{k})   \wedge \T(x_{k+1})$ & positive of width $k$\\

  $\cc{IS}{k}{02}, k \geq 2$ & $\orn{}{k}(x_1, \ldots, x_{k}) \wedge
  \F(x_{k+1}) \wedge \T(x_{k+2})$ & essentially positive of width $k$ \\

  $\cc{IS}{k}{01}, k \geq 2$ & $\orn{}{k}(x_1, \ldots, x_{k}) \wedge (x_{k+1} \rightarrow x_1 \cdots x_{k}) \wedge \T(x_{k+2})$ & - \\

  $\cc{IS}{k}{00}, k \geq 2$ &  $\orn{}{k}(x_1, \ldots, x_{k})   \wedge (x_{k+1} \rightarrow x_1 \cdots x_{k}) \wedge \F(x_{k+2}) \wedge \T(x_{k+3})$ & IHS-B+ of width $k$\\

  $\cc{IS}{k}{1}, k \geq 2$ & $\nandn{}{k}(x_1, \ldots, x_{k}) \wedge   \F(x_{k+1})$ & negative of width $k$ \\

  $\cc{IS}{k}{12}, k \geq 2 $ & $\nandn{}{k}(x_1, \ldots, x_{k}) \wedge
  \F(x_{k+1}) \wedge \T(x_{k+2})$ & essentially negative of width $k$  \\

  $\cc{IS}{k}{11}, k \geq 2$ & $\nandn{}{k}(x_1, \ldots, x_{k})   \wedge (x_1 \rightarrow x_{k+1}) \land \ldots \land (x_{k} \rightarrow x_{k+1}) \land \F(x_{k+2})$ & -\\

  $\cc{IS}{k}{10}, k \geq 2$ & $\nandn{}{k}(x_1, \ldots, x_{k})
  \wedge (x_1 \rightarrow x_{k+1}) \land \ldots \land (x_{k} \rightarrow x_{k+1})
   \wedge \F(x_{k+2}) \wedge \T(x_{k+3})$ & IHS-B- of width $k$ \\

  $\cc{ID}{}{}$  & $(x_1 \neq x_2)$ & strict 2-affine\\

  $\cc{ID}{}{1}$  & $(x_1 \neq x_2) \wedge \F(x_{3}) \wedge \T(x_{4})$ & 2-affine\\

  $\cc{ID}{}{2}$  & $\orn{}{2}(x_1,x_2) \land (x_1 \neq x_3) \land (x_2 \neq x_4) \land \F(x_{5}) \wedge \T(x_6)$ & Krom, bijunctive, 2CNF\\

  $\cc{IL}{}{}$   & $\evenn{4}{}(x_1,x_2,x_3,x_4)$ & affine and 0- and 1-valid\\ 

  $\cc{IL}{}{0}$  & $\evenn{3}{}(x_1,x_2,x_3) \wedge \F(x_4)$ & affine and 0-valid \\
	
  $\cc{IL}{}{1}$  & $\rXOR{3}{}(x_1,x_2,x_3) \wedge \T(x_4)$ & affine and 1-valid\\ 
	
  $\cc{IL}{}{2}$ & $\evenn{3}{3 \neq}(x_1,\ldots,x_6) \wedge \F(x_7) \wedge \T(x_8)$  & affine\\
	
  $\cc{IL}{}{3}$ & $\evenn{4}{4 \neq}(x_1,\ldots,x_8)$ &  -\\

  $\cc{IV}{}{}$  & $(\overbar{x_1} \leftrightarrow \overbar{x_2}\overbar{x_3}) \wedge (\overbar{x_2} \vee \overbar{x_3} \rightarrow \overbar{x_4})$ & dualHorn and 1- and 0-valid \\

  $\cc{IV}{}{0}$  & $(\overbar{x_1} \leftrightarrow
\overbar{x_2}\overbar{x_3}) \wedge \F(x_4)$  & definite dualHorn \\

  $\cc{IV}{}{1}$ & $(\overbar{x_1} \leftrightarrow \overbar{x_2}\overbar{x_3}) \wedge (\overbar{x_2} \vee \overbar{x_3} \rightarrow \overbar{x_4}) \wedge \T(x_5)$  & dualHorn and 1-valid\\

  $\cc{IV}{}{2}$  & $(\overbar{x_1} \leftrightarrow \overbar{x_2}\overbar{x_3}) \wedge \F(x_4) \wedge \T(x_5)$  & dualHorn\\
	
  $\cc{IE}{}{}$  & $(x_1 \leftrightarrow x_2x_3) \wedge (x_2 \vee x_3 \rightarrow x_4)$ & Horn and 0- and 1-valid\\ 
	
  $\cc{IE}{}{0}$  & $(x_1 \leftrightarrow x_2x_3) \wedge (x_2 \vee x_3 \rightarrow x_4) \wedge \F(x_5)$ & Horn and 0-valid\\
	
  $\cc{IE}{}{1}$  & $(x_1 \leftrightarrow x_2x_3) \wedge \T(x_4)$ & definite Horn\\
	
  $\cc{IE}{}{2}$ & $(x_1 \leftrightarrow x_2x_3) \wedge \F(x_4) \wedge \T(x_5)$  & Horn \\

  $\cc{IN}{}{}$  & $\evenn{4}{}(x_1,x_2,x_3,x_4) \wedge x_1x_4 \leftrightarrow x_2x_3$ & complementive and 0- and 1-valid\\
	
  $\cc{IN}{}{2}$ & $\evenn{4}{4 \neq}(x_1,\ldots,x_8) \wedge x_1x_4 \leftrightarrow x_2x_3$  & complementive\\

  $\cc{II}{}{}$  & $(x_1 \leftrightarrow x_2x_3) \wedge (\overbar{x_4}
\leftrightarrow \overbar{x_2}\overbar{x_3})$ & 0- and 1-valid\\

  $\cc{II}{}{0}$  & $(\overbar{x_1} \vee \overbar{x_2}) \wedge (\overbar{x_1}\overbar{x_2} \leftrightarrow \overbar{x_3}) \wedge \F(x_4)$  & 0-valid\\

  $\cc{II}{}{1}$  & $(x_1 \vee x_2) \wedge (x_1x_2 \leftrightarrow x_3) \wedge \T(x_4)$ & 1-valid\\

  $\cc{II}{}{2}$ & $\Rddd(x_1,\ldots,x_6) \wedge \F(x_7) \wedge \T(x_8)$ &  all Boolean relations  \\
  \hline
\end{tabularx}
\caption{All finitely generated co-clones with minimal weak bases from Lagerkvist \cite{Lagerkvist14}.}
\label{table:weak_bases}
\end{table*}

\begin{figure*}[t]
\centering
\includegraphics[width=0.99\textwidth]{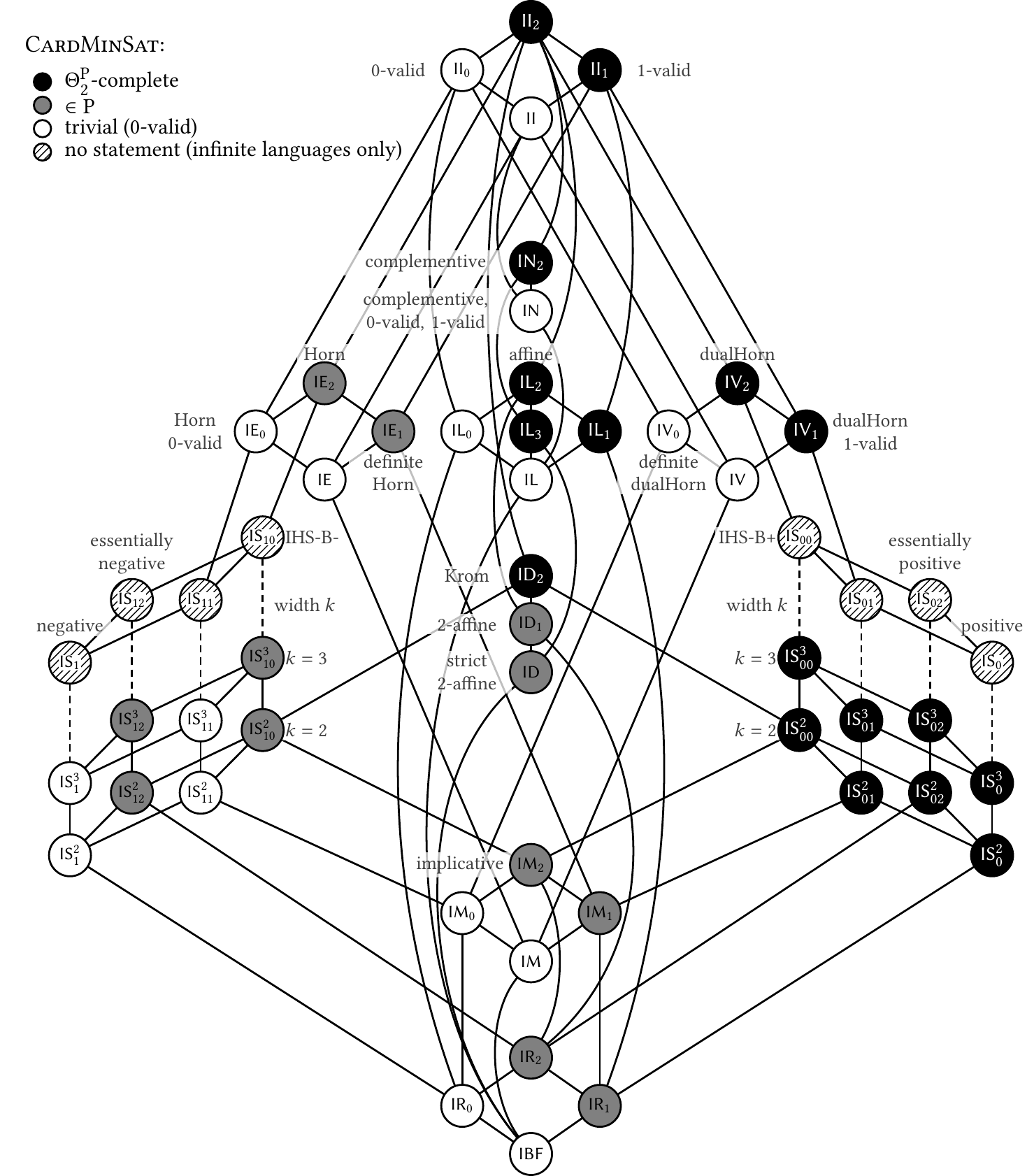} 
\caption{Complexity overview for $\Cardminsat$ illustrated on Post's Lattice.}
\label{fig:posts-lattice}
\end{figure*}

\end{document}